\let\savecorresponds\corresponds
\let\corresponds\relax
\let\corresponds\savecorresponds
\newtheorem{theorem}{Theorem}[section]
\newtheorem{corollary}[theorem]{Corollary} 
\newtheorem{proposition}[theorem]{Proposition} 
\newcommand{\coho}[1]{\textswab{#1}}
\newcommand{\cohosub}[1]{\protect\scalebox{0.7}{\textswab{#1}}}
\begin{document}
	\def\U{\mathrm{U}(1)}
	\def\H{\mathcal{H}}
	\def\F{\mathcal{F}}
	\def\E{\mathbb{E}}
	\def\TT{\mathsf{T}}
	\def\A{\mathcal{A}}
	\def\L{\mathcal{L}}
	\def\w{\mathfrak{w}}
	\def\O{\coho{O}}
	\def\C{\widecheck{\mathcal{C}}}
	\def\SO{\mathrm{SO}}

\newcommand{\Aut}{\text{Aut}}
\newcommand{\Z}{\mathbb Z}
\newcommand{\R}{\mathbb R}
    \renewcommand{\L}{\mathcal L}
	\newcommand{\commentmb}[1]{\textcolor{purple}{[MB: #1]}}
	\newcommand{\commentdb}[1]{\textcolor{blue}{[DB: #1]}}
	
\newcommand{\personalcomment}[1]{\textcolor{red}{[#1]}}

\newcommand{\mathbbm}[1]{\text{\usefont{U}{bbm}{m}{n}#1}}

    \title{Fermionic symmetry fractionalization in (2+1)D}
        
        \author{Daniel Bulmash}
        \author{Maissam Barkeshli}
         \affiliation{Condensed Matter Theory Center and Joint Quantum Institute, Department of Physics, University of Maryland, College Park, Maryland 20472 USA}

        \begin{abstract}
            We develop a systematic theory of symmetry fractionalization for fermionic topological phases of matter in (2+1)D with a general fermionic symmetry group $G_f$. In general $G_f$ is a central extension of the bosonic symmetry group $G_b$ by fermion parity, $(-1)^F$, characterized by a non-trivial cohomology class $[\omega_2] \in \mathcal{H}^2(G_b, \Z_2)$. We show how the presence of local fermions places a number of constraints on the algebraic data that defines the action of the symmetry on the super-modular tensor category that characterizes the anyon content. We find two separate obstructions to defining symmetry fractionalization, which we refer to as the bosonic and fermionic symmetry localization obstructions. The former is valued in $\mathcal{H}^3(G_b, K(\mathcal{C}))$, while the latter is valued in either $\mathcal{H}^3(G_b,\mathcal{A}/\{1,\psi\})$ or $Z^2(G_b, \Z_2)$ depending on additional details of the theory. $K(\mathcal{C})$ is the Abelian group of functions from anyons to $\U$ phases obeying the fusion rules, $\mathcal{A}$ is the Abelian group defined by fusion of Abelian anyons, and $\psi$ is the fermion. When these obstructions vanish, we show that distinct symmetry fractionalization patterns form a torsor over $\mathcal{H}^2(G_b, \mathcal{A}/\{1,\psi\})$. 
            We study a number of examples in detail; in particular we provide a characterization of fermionic Kramers degeneracy arising in symmetry class DIII within this general framework, and we discuss fractional quantum Hall and $\Z_2$ quantum spin liquid states of electrons. 
        \end{abstract}

        \maketitle
        \tableofcontents

        \section{Introduction}
        
        A fundamental property of topological phases of matter is the possibility of quasiparticles carrying fractional quantum numbers. Well-known examples of this include the fractional electric charge carried by anyons in fractional quantum Hall (FQH) states or spinons with spin 1/2 in quantum spin liquids with global $\SO(3)$ spin rotational symmetry \cite{wen04}. In the past several years, it has been understood how to mathematically characterize symmetry fractionalization in (2+1)D bosonic topological phases of matter in complete generality~\cite{barkeshli2019}. This includes topological states with arbitrary global symmetry groups including both unitary and anti-unitary symmetry actions, Abelian or non-Abelian topological phases of matter, and cases where global symmetries can permute the anyons. These results have led to significant progress in developing a comprehensive characterization and classification of (2+1)D bosonic symmetry-enriched topological phases of matter (SETs). In particular the systematic understanding of symmetry fractionalization has allowed predictions of novel fractional quantum numbers in the presence of space group symmetries for quantum spin liquids and FQH states~\cite{essin2013,essin2014,qi2018spinliquid,manjunath2021,manjunath2020}, methods to strengthen the Lieb-Schulz-Mattis theorem for $(2+1)$D topological phases~\cite{cheng2016lsm}, and general methods to compute anomalies in (2+1)D topological phases of matter~\cite{barkeshli2019rel,barkeshli2019tr,bulmash2020,TataSETAnomaly}.
        
        In this paper we generalize the systematic mathematical framework of symmetry fractionalization to the case of general fermionic topological phases of matter in (2+1)D. Fermionic systems possess a special ``symmetry" operation\footnote{Symmetry is in quotes because fermion parity, in contrast with all other symmetries, can never be spontaneously broken.}, fermion parity, which is denoted $(-1)^F$, and we denote the corresponding order-2 group as $\Z_2^f$. The complete symmetry group that acts on both fermionic and bosonic operators is denoted $G_f$, while the quotient group that only acts on bosonic operators is $G_b = G_f / \Z_2^f$. The possibility of fermions that can be created by local fermionic operators imposes a number of non-trivial constraints in the mathematical description of the SET, which we systematically study. 
        
        \subsection{Summary of results}
        
        It is generally believed that in the absence of any symmetry, a general (2+1)D topological phase of matter can be fully characterized by two mathematical objects, 
        $(\mathcal{C}, c_-)$, where $\mathcal{C}$ is a unitary modular tensor category (UMTC) in the case of bosonic systems and a unitary super-modular tensor category for fermionic systems \cite{bruillard2017a,bruillard2017b,bonderson2018}. $c_-$ is the chiral central charge of the (1+1)D edge theory. $\mathcal{C}$ determines $c_-$ modulo $8$ for bosonic systems and modulo $1/2$ for fermionic systems. In both the bosonic and fermionic cases, $\mathcal{C}$ is a unitary braided fusion category (UBFC), which keeps track of the braiding and fusion properties of the anyons. 
        
        Since UMTCs and super-modular tensor categories are both special cases of UBFCs, one might naively attempt to immediately extend the known symmetry fractionalization framework for bosonic topological phases to the fermionic case. However, fermionic topological phases are different from bosonic topological phases in two crucial ways, both of which must be accounted for in the theory of symmetry fractionalization. 
        
        One difference arises from the fact that the fermionic topological phase contains a local fermion; although the fermion directly appears in the UBFC data, its locality does not. Carefully tracking the locality of the fermion leads to significant consequences due to reduced gauge freedom in data involved in the symmetry action. In particular, autoequivalences of the super-modular tensor category are only equivalent up to what we term ``locality-respecting natural isomorphisms." 
        The first step in defining how a symmetry acts on a bosonic topological phase is to define a group homomorphism 
        \begin{align}
        [\rho]:G \rightarrow \Aut(\mathcal{C}), 
        \end{align}
        where $\Aut(\mathcal{C})$ is, roughly stated, a group formed by the set of (braided tensor) autoequivalences of the UMTC $\mathcal{C}$ modulo a set of ``trivial" transformations called natural isomorphisms \cite{barkeshli2019} (see Section \ref{sec:bosonicSymmFrac} for a brief review). 
        
        In fermionic topological phases, the set of natural isomorphisms can be distinguished by whether they are ``locality-respecting" or ``locality-violating" with respect to the local fermion. 
        We may then define two groups, $\Aut(\mathcal{C})$ and $\Aut_{LR}(\mathcal{C})$, depending on whether we mod out by all natural isomorphisms or only the locality-respecting ones. Depending on $\mathcal{C}$, these groups may or may not be isomorphic, but in either case, we find that the first step towards defining how symmetry acts on a fermionic topological phase is to specify a map
        \begin{align}
        \label{lrHom}
        [\rho]:G_b \rightarrow \Aut_{LR}(\mathcal{C}).
        \end{align}
        Here any representative $\rho_{\bf g}$ of the equivalence class $[\rho_{\bf g}]$ satisfies $\rho_{\bf gh} = \kappa_{\bf g, \bf h} \circ \rho_{\bf g} \circ \rho_{\bf h}$, where $\kappa_{\bf g, \bf h}$ is a natural isomorphism, as reviewed in Section \ref{sec:bosonicSymmFrac}. 
        
        One consequence of Eq. \ref{lrHom} is that if $\Aut_{LR}(\mathcal{C})$ and $\Aut(\mathcal{C})$ are not isomorphic, then simply keeping track of how anyons are permuted under the symmetries is not enough to fully determine an element in $\Aut_{LR}(\mathcal{C})$. 
        (By contrast, the permutation action often, but not always\footnote{For examples of UMTCs with non-permuting but non-trivial autoequivalences, see~\cite{davydov2014}, section 3. One such example is the Drinfeld center of the group $G$, where $G$ is an order-64 group with the presentation
        \begin{equation}
            \left\langle a,b,c \phantom{i}\bigg| \phantom{i} a^2=b^2=1,\phantom{a} c^2=[a,c], \phantom{a} [c,b]=[[c,a],a], \phantom{a} [[b,a],G]=1, \phantom{a} [G,[G,[G,G]]]=1 \right\rangle. \nonumber
        \end{equation}
        }, \textit{does} uniquely determine an element of $\Aut(\mathcal{C})$.) In a large class of examples where elements of $\Aut_{LR}(\mathcal{C})$ are not uniquely determined by the way they permute anyons, we provide in Eq.~\ref{eqn:Lambda} a gauge-invariant quantity which distinguishes classes with the same permutation action.
        
        A second important difference in the fermionic case arises from the presence of the fermionic symmetry group $G_f$, which includes an additional piece of data $[\omega_2] \in \mathcal{H}^2(G_b, \Z_2)$ specifying $G_f$ in terms of a central extension of $G_b$ by $\Z_2^f$. Our formalism accounts for $\omega_2$ by viewing it as endowing the fermion with fractional quantum numbers under $G_b$. This leads us to a \textit{constrained} theory of $G_b$ symmetry fractionalization,  where the constraint directly encodes the way that $G_b$ is embedded in the full symmetry group $G_f$.
     
        To briefly summarize the constraints, we note that $[\rho_{\bf g}]$ and symmetry fractionalization pattern in general corresponds to a set of data $\{\rho_{\bf g}, U_{\bf g}(a,b;c), \eta_a({\bf g}, {\bf h})\}$. Here $U_{\bf g}(a,b;c)$ is a set of $N_{ab}^c \times N_{ab}^c$ matrices, where $N_{ab}^c$ are the fusion coefficients, that specify the action of a representative autoequivalence $\rho_{\bf g}$ on the fusion and splitting spaces of $\mathcal{C}$. $\eta_a({\bf g}, {\bf h})$ is a $\U$ phase for each anyon $a$. These data are subject to a set of consistency conditions and gauge transformations. In particular, we have a set of so-called symmetry-action gauge transformations, which correspond to changing the representative map $\rho_{\bf g}$, and which transform the data as
        \begin{align}
            \eta_a({\bf g}, {\bf h}) &\rightarrow \frac{\gamma_a({\bf g h})}{[\gamma_{\,^{\bf \overline{g}}a}({\bf h})]^{\sigma({\bf g})} \gamma_a({\bf g})} \eta_a({\bf g,h})
            \nonumber \\ 
            U_{\bf g}(a,b;c) &\rightarrow \frac{\gamma_a({\bf g}) \gamma_b({\bf g})}{\gamma_c({\bf g})} U_{\bf g}(a,b;c) ,
            \label{eqn:UetaGaugeTransform}
        \end{align}
        where $\gamma_a({\bf g})$ is a $U(1)$ phase. 
        Here $\sigma({\bf g}) = 1$ or $*$ is a $\Z_2$ grading on $G_b$ which determines whether ${\bf g}$ is a unitary or anti-unitary symmetry. 
        The constraints alluded to above then take the form \cite{bondersonNote}:
        \begin{align}
        \label{fermConstraints}
         \eta_\psi({\bf g}, {\bf h}) &= \omega_2({\bf g}, {\bf h})
         \nonumber \\
            U_{\bf g}(\psi, \psi; 1) &= 1
            \nonumber \\
            \gamma_\psi({\bf g}) &= 1
        \end{align}
        Here $\psi$ is the local fermion, which is treated as a non-trivial object in the super-modular tensor category. 
        We note that a microscopic specification of a quantum many-body system and the representation of the symmetries also specifies a representative $2$-cocycle $\omega_2$ which enters the above constraints. 
        
        After accounting for all of these constraints, we take $[\rho_{\bf g}]$ as defined by Eq.~\ref{lrHom} as given, and then we determine the obstructions to defining a consistent theory of symmetry fractionalization. We find two distinct such obstructions.
        
        We find that there is a ``bosonic" obstruction, 
        \begin{equation}
            [\Omega] \in \H^3(G_b,K(\mathcal{C})),
        \end{equation}
        where $K(\mathcal{C})$ is the Abelian group of functions $\Omega_a$ from anyon labels to $\U$ which obey the fusion rules in the sense $\Omega_a \Omega_b = \Omega_c$ when $N_{ab}^c>0$. If there does not exist an element of $K(\mathcal{C})$ with $\Omega_\psi = -1$, then $K(\mathcal{C}) \simeq \mathcal{A}/\{1,\psi\}$, otherwise $K(\mathcal{C})$ is an extension of $\Z_2$ by  $\mathcal{A}/\{1,\psi\}$.
        Here $\A$ is the Abelian group formed by fusion of the Abelian anyons in $\mathcal{C}$. The equivalence by $\{1,\psi\}$ means different elements in $\mathcal{A}$ that differ by fusion with $\psi$ are regarded as equivalent. 
        $[\Omega]$ is a symmetry localization obstruction as discussed in \cite{barkeshli2019,barkeshli2018,fidkowski2015}, and can be viewed as an obstruction to finding any consistent pattern of symmetry fractionalization for the $G_b$ symmetry, ignoring $G_f$. 
        
        In attempting to extend the symmetry fractionalization to the full $G_f$ symmetry, we find that once the bosonic obstruction vanishes, there is the possibility of a second ``fermionic" obstruction $[\coho{O}_f]$,
        \begin{align}
            [\coho{O}_f] \in 
            \begin{cases}
                \mathcal{H}^3(G_b, \A/\{1,\psi\}) \\
                Z^2(G_b, \Z_2)
            \end{cases}
        \end{align}
        Here $Z^2$ is the group of $2$-cocycles on $G_b$. 
        Whether $[\coho{O}_f]$ is valued in $\mathcal{H}^3(G_b, \A/\{1,\psi\})$ or $Z^2(G_b, \Z_2)$ depends on whether there exists a set of phases $\zeta_a$ which satisfy the fusion rules with $\zeta_\psi = -1$. Moreover, depending on whether we fix $\omega_2$ or just its cohomology class $[\omega_2]$, $Z^2(G_b, \Z_2)$ may be replaced with $\mathcal{H}^2(G_b, \Z_2)$. 
        
        The fermionic obstruction $[\coho{O}_f]$ is essentially an obstruction to finding a symmetry fractionalization class that is consistent with the choice of $\omega_2$ that defines $G_f$. Alternatively, it can  be viewed as an obstruction to ``lifting" the fractional quantum numbers of $\psi$ under $G_b$ to the full super-modular tensor category $\mathcal{C}$. 
        
        In the case where $\Aut(\mathcal{C})$ and $\Aut_{LR}(\mathcal{C})$ are not isomorphic, then given a $G_b$ and $[\rho_{\bf g}]$, we find that there is at most one group extension $G_f$ for which the fermionic obstruction vanishes. In particular, this implies that for any continuous symmetry group, such as $U(1)$ or $SO(3)$, a non-trivial $[\omega_2]$ is incompatible with any super-modular category for which $\Aut(\mathcal{C})$ and $\Aut_{LR}(\mathcal{C})$ are not isomorphic. 
        
       If the above obstructions vanish, then it is possible to define  symmetry fractionalization in a way consistent with $G_f$ and locality of the fermion. We then show that the set of possible symmetry fractionalization classes for the anyons form a torsor over $\mathcal{H}^2(G_b, \A/\{1,\psi\})$, which thus provides a classification of symmetry fractionalization for anyons in fermionic topological phases \cite{bondersonNote}. Note that this analysis does not include symmetry fractionalization for the fermion parity vortices. 
        
        Once our formalism is established, we consider several example symmetry groups, providing and physically interpreting gauge-invariant quantities that characterize patterns of symmetry fractionalization in fermionic topological phases. 
        As an example, our results allow us to precisely understand the notion of ``fermionic Kramers degeneracy" presented in \cite{Fidkowski13,metlitski2014} within this systematic formalism (this result was announced previously in \cite{TataSETAnomaly}). When $G_b = \Z_2^{\bf T}$ and $G_f = \Z_4^{{\bf T}, f}$ (i.e. ${\bf T}^2 = (-1)^F$), we have a quantity
        \begin{align}
            \eta_a^{\bf T} := \eta_a({\bf T}, {\bf T}) U_{\bf T}(a, \psi; a \times \psi) F^{a,\psi,\psi}= \pm i
        \end{align}
        which is gauge invariant when $\,^{\bf T}a = a \times \psi$. For such anyons, $\eta_a^{\bf T}$ can be viewed as the ``local ${\bf T}^2$ eigenvalue" of that anyon. We also generalize this notion of fermionic Kramers degeneracy to $G_f = \Z_8^{{\bf T},f}$ and investigate fermionic fractional quantum Hall and quantum spin liquid states of electrons in our framework.
        
        In the special case where $G_f = G_b \times \Z_2^f$,  Ref.~\cite{fidkowski2018} proposed the possibility of an  $\mathcal{H}^2(G_b, \mathcal{A}/\{1,\psi\})$ structure in the classification of symmetry fractionalization, however a complete derivation and a specification of how to treat the braided auto-equivalences were not provided. For general $G_f$, some of our results correspond to results on categorical fermionic actions in the mathematical literature \cite{galindo}. However, the results of  Ref.~\cite{galindo} do not account for the locality of the fermion, which ultimately leads to a different classification. Where there is overlap, our work provides a physical understanding of the mathematical results in \cite{galindo} and a formulation in terms of the ``skeletonization" of the super-modular tensor category. 
    
        The rest of this paper is organized as follows. In Sec.~\ref{sec:supermodular}, we discuss some basic definitions and facts regarding super-modular tensor categories and their use in modeling fermionic topological phases. In Sec.~\ref{sec:bosonicSymmFrac}, we review the symmetry fractionalization formalism for bosonic topological phases. In Sec.~\ref{fermSymLocSec}, we develop a theory of fermionic symmetry fractionalization by constraining a theory of bosonic symmetry fractionalization and develop the concept of a ``locality-respecting natural isomorphism." In Sec.~\ref{sec:obstructions}, we compute the obstructions to fermionic symmetry localization and, if the obstructions vanish, classify fermionic symmetry fractionalization patterns. Sec.~\ref{sec:examples} consists of a number of examples of the use of our formalism for different symmetry groups, and we conclude with some general discussion in Sec.~\ref{sec:discussion}.

    \section{Super-modular and spin modular categories and fermionic topological phases of matter}
    \label{sec:supermodular}
        
        In this paper we will assume familiarity with unitary braided fusion categories (UBFCs). These are specified by a list of anyon labels $\lbrace a, b, c, \ldots\rbrace$, fusion spaces $V_{ab}^c$ and their dual splitting spaces $V^{ab}_c$, 
        non-negative integer fusion coefficients $N_{ab}^c = \text{ dim } V_{ab}^c$, $F$-symbols $F^{abc}_{def}$ which specify the associativity of fusion, and $R$-symbols $R^{ab}_c$ which specify braiding data, all subject to the well-known pentagon and hexagon consistency conditions.
        See, e.g., Refs.~\cite{Bonderson07b,barkeshli2019} for a review of UBFCs and conventions; our conventions are essentially identical to those of Section II of Ref. \cite{barkeshli2019}. We will for simplicity often restrict our attention to the case where all fusion coefficients $N_{ab}^c \leq 1$; the generalization is straightforward.
        
        We will use the scalar monodromy
        \begin{equation}
            M_{ab} = \frac{S_{ab}^{\ast}S_{11}}{S_{1a}S_{1b}},
        \end{equation}
        where $S_{ab}$ is the topological $S$-matrix, extensively in this paper. $M_{ab}$ is always a phase if $a$ or $b$ is Abelian, in which case it has a physical interpretation as a phase arising from a double braid, but $M_{ab}$ may or may not be a phase if both $a$ and $b$ are non-Abelian.
        
        A super-modular tensor category~\cite{bruillard2017a,bruillard2017b,bonderson2018} $\mathcal{C}$ can be defined as a UBFC with a single nontrivial invisible particle, $\psi$, such that $\psi$ is a fermion, i.e., its topological twist $\theta_\psi=-1$, and satisfies $\Z_2$ fusion rules, $\psi \times \psi = 1$. ``Invisible" means that $\psi$ braids trivially with all particles in $\mathcal{C}$, that is, its double braid $M_{a,\psi}=+1$ for all $a \in \mathcal{C}$.
        
        The existence of a single invisible fermion $\psi$ with $\Z_2$ fusion rules implies that the set of anyon labels of a super-modular tensor category decomposes as $\mathcal{C} = \mathcal{B} \times \{1,\psi\}$, but the fusion rules need not respect this decomposition. On the other hand, the topological $S$-matrix of $\mathcal{C}$ does respect this decomposition:
        \begin{align}
            S = \tilde{S} \otimes \frac{1}{\sqrt{2}} \left( \begin{matrix} 1 & 1 \\ 1 & 1 \end{matrix} \right),
        \end{align}
        where $\tilde{S}$ is unitary. 
        
        Physically, the super-modular tensor category keeps track of the topologically non-trivial quasi-particle content in a fermionic topological phase of matter. The theory explicitly keeps track of the fermion as well, which is topologically trivial in the sense that it can be created or annihilated by a local fermion operator. 
        
        In the rest of this paper, the symbol $\mathcal{C}$ will always refer to a super-modular tensor category. We denote the unitary braided fusion subcategory of Abelian anyons as $\A \subset \mathcal{C}$. In general in an Abelian super-modular tensor category, the fermion decouples~\cite{MengFermionicLSM}:
        \begin{align}
            \A \simeq \tilde{\A} \boxtimes \{1,\psi\}.
            \label{eqn:abelianDecomp}
        \end{align}
        Here the symbol $\boxtimes$ is the Deligne product and physically means stacking two decoupled topological orders, and $\{1,\psi\}$ denotes the UBFC with just two particles, $\{1,\psi\}$. 
        
        We will also use the symbol $\A$ to refer to the Abelian group defined by fusion of Abelian anyons. The above implies that as an Abelian group $\A \simeq \tilde{\A} \times \Z_2$, where $\Z_2 = \{1,\psi\}$ is associated with the invisible fermion. 

        It is a mathematical theorem~\cite{johnsonFreydModularExt} that every super-modular tensor category admits a minimal modular extension $\C$, i.e. a UMTC $\C$ that contains $\mathcal{C}$ as a subcategory and which has the minimal possible total quantum dimension of
        \begin{align}
            \mathcal{D}^2_{\C} = 2 \mathcal{D}^2_{\mathcal{C}}. 
        \end{align}
        The ``16-fold way" theorem states that there are precisely 16 distinct minimal modular extensions, with chiral central charges differing by $\nu/2$ for $\nu = 0,\cdots, 15$ \cite{bruillard2017a}. 
        
        A minimal modular extension $\C$ is an example of a spin modular category. A spin modular category is defined to be a UMTC, $\C$, together with a preferred choice of fermion $\psi$, which has topological twist $\theta_\psi = -1$ and such that $\psi \times \psi = 1$. 
        
        The spin modular category $\C$ which gives a minimal modular extension of $\mathcal{C}$ possesses a natural $\Z_2$ grading determined by braiding with the fermion $\psi$:
        \begin{align}
            \C &= \C_0 \oplus \C_1 ,
            \nonumber \\
            \C_0 &\simeq \mathcal{C} 
        \end{align}
        That is, $M_{a,\psi} = +1$ if $a \in \C_0$ and $M_{a,\psi} = -1$ if $a \in \C_1$, and fusion respects this grading. The anyons in $\C_0 \simeq \mathcal{C}$ correspond to the original particles in the super-modular theory. The anyons in $\C_1$ are physically interpreted as fermion parity vortices, which can be understood as symmetry defects associated with the fermion parity symmetry $\Z_2^f$.
        
        Note that $\mathcal{D}_{\C}^2 = \mathcal{D}_{\C_0}^2 + \mathcal{D}_{\C_1}^2 = 2 \mathcal{D}_{\mathcal{C}}^2$
        implies that 
        \begin{align}
            \mathcal{D}_{\C_{1}}^2 = \mathcal{D}_{\C_0}^2 = \mathcal{D}_{\mathcal{C}}^2 
        \end{align}
        
        $\C_1$ can be decomposed according to whether the anyons can absorb the fermion:
        \begin{align}
            \C_1 = \C_v \oplus \C_\sigma ,
        \end{align}
        such that 
        \begin{align}
            a \times \psi \neq a \text{ if } a \in \C_v
            \nonumber \\
            a \times \psi = a \text{ if } a \in \C_\sigma
        \end{align}
        In general, this decomposition is not in any sense respected by the fusion rules; for example, fusing a (non-Abelian) anyon with a $\sigma$-type vortex can produce $v$-type fusion products, while fusing two $\sigma$-type vortices produces anyons (which automatically do not absorb $\psi$).
        
        We note that super-modular tensor categories allow a canonical gauge-fixing
        \begin{equation}
        \label{FgaugeFix}
            F^{a\psi \psi} = F^{\psi \psi a} = 1
        \end{equation}
        for all $a \in \mathcal{C}$. In the standard BFC diagrammatic calculus, this gauge-fixing allows fermion lines to be ``bent" freely.
        
        One technical issue which will play a key role in the rest of this paper is to characterize sets of phases $\zeta_a \in \U$ for $a \in \mathcal{C}$ such that
        \begin{equation}
            \zeta_a \zeta_b = \zeta_c \text{ if }N_{ab}^c>0.
        \end{equation}
        Strictly speaking these are functions from the set of anyon labels to $\U$; mathematically, $d_a \zeta_a$ defines a character of the Verlinde ring for the super-modular category. Such functions form an Abelian group which we call $K(\mathcal{C})$, following similar notation in~\cite{galindo}. Clearly $\zeta_\psi = \pm 1$, which gives a $\Z_2$ grading on $K(\mathcal{C})$. The set of such functions with $\zeta_\psi = +1$ form a subgroup $K_+(\mathcal{C}) \subset K(\mathcal{C})$.
        
        One important property of $K(\mathcal{C})$ is the following, proven in Appendix~\ref{app:characters}: if $\zeta_a \in K(\mathcal{C})$, then
        \begin{equation}
            \zeta_a = M_{a,x}
        \end{equation}
        for some $x \in \C$, where $\C$ is any minimal modular extension of $\mathcal{C}$. In particular, if $\zeta_a \in K_+(\mathcal{C})$, then $x \in \A$. Since $M_{a,x}=M_{a,x\times \psi}$, we conclude that $K_+(\mathcal{C}) \simeq \A/\{1,\psi\}$.
        
        There are several possibilities for the full group $K(\mathcal{C})$, which we briefly overview now and discuss further in Sec.~\ref{sec:LRNatIso}. It may be that $K(\mathcal{C})=K_+(\mathcal{C})$, that is, there simply does not exist a set of phases which obey the fusion rules with $\zeta_\psi=-1$. Alternatively, such phases may indeed exist, in which case $K(\mathcal{C})$ is a group extension of $\Z_2$ given by the short exact sequence
        \begin{equation}
            1 \rightarrow K_+(\mathcal{C}) \stackrel{i}{\rightarrow} K(\mathcal{C}) \stackrel{r_\psi}{\rightarrow} \Z_2 \rightarrow 1
        \end{equation}
        where $i$ is the inclusion map and $r_\psi$ restricts $\zeta_a$ to $a=\psi$. This group extension is given by an element $[\lambda] \in \H^2(\Z_2,K_+(\mathcal{C})) \simeq \H^2(\Z_2,\A/\{1,\psi\})$; although the cohomology group is quite simple, we do not know a general method to compute the particular element $[\lambda]$ from the fusion rules. In the special case where $\mathcal{C}$ splits, i.e., can be written $\mathcal{C} = \mathcal{B} \boxtimes \{1,\psi\}$ for some modular $\mathcal{B}$, then $K(\mathcal{C}) = K_+(\mathcal{C}) \times \Z_2$, that is, $[\lambda] = 0$.

        \subsection{Fermionic topological phases}
        
        Consider a (2+1)D system with a Hilbert space which is a tensor product of local Hilbert spaces containing fermions governed by a local Hamiltonian. (We will refer to this as a ``microscopic system.") We assume that the Hamiltonian has a gap in the thermodynamic limit. Any such systems which can be continuously connected without closing the gap (allowing the addition of fermionic degrees of freedom with a trivial Hamiltonian) are said to be in the same fermionic topological phase. As stated above, fermionic topological phases are believed to be fully characterized by a super-modular tensor category $\mathcal{C}$ together with a chiral central charge $c_-$. Equivalently, a fermionic topological phase can be characterized by a spin modular category $\C$, together with a choice of chiral central charge $c_-$. The spin modular category determines the chiral central charge modulo $8$, while the super-modular tensor category only determines the chiral central charge modulo $1/2$. 
        
        We see that $\mathcal{C}$ describes the anyon content of the fermionic topological phase while $c_- \text{ mod 8}$ specifies the minimal modular extension. The spin modular category $\C$ determines the fusion and braiding of fermion parity vortices which is a crucial part of data that specifies the full fermionic topological phase. In particular, to define the fermionic system on non-trivial surfaces and with arbitrary spin structures (i.e. arbitrary boundary conditions), we need the full spin modular theory, since this requires different patterns of fermion parity flux through non-contractible cycles  (see e.g. \cite{delmastro2021}).
        
        In the present paper, we restrict our attention to symmetry fractionalization for $\mathcal{C}$, i.e., we do not characterize how symmetries act on the minimal modular extension $\C$. Attempting to lift the SET data from $\mathcal{C}$ to $\C$ leads to a cascade of obstructions which characterize the 't Hooft anomaly of the fermionic SET and will be studied in upcoming work~\cite{bulmash2021Anomalies}. Some, but not all, of these obstructions have been understood previously in the UBFC framework~\cite{fidkowski2018}.
        
        In using super-modular categories to model fermionic topological phases, the following technical issue arises. In general in a fusion category there are vertex basis gauge transformations, $\Gamma^{ab}_c$, which are basis transformations in the splitting space $V^{ab}_c$, i.e. $\Gamma^{ab}_c: V^{ab}_c \rightarrow V^{ab}_c$. We will assume that the vertex basis gauge transformations must always satisfy 
        \begin{align}
        \label{GammaFix}
            \Gamma^{\psi,\psi}_1  = 1.
        \end{align}
        We do not have a completely satisfactory microscopic justification for this assumption, although we note allowing $\Gamma^{\psi,\psi}_1 \neq 1$ leads to consequences that contradict a number of known results. For example, one can show that this gauge freedom may be used to identify the two fermionic symmetry fractionalization patterns of the semion-fermion topological order with $G_f=\Z_4^{{\bf T},f}$. These two symmetry fractionalization patterns respectively appear on the surfaces of the $\nu=2$ and $\nu=-2$ elements of the (3+1)D DIII topological superconductors; considering them to be gauge-equivalent would collapse the known $\Z_{16}$ classification down to $\Z_4$.
        We note that the requirement of Eq. \ref{GammaFix} is always compatible with the gauge fixing in Eq. \ref{FgaugeFix}.
        
\section{Review of symmetry fractionalization in bosonic systems}
        \label{sec:bosonicSymmFrac}
        
        In this section we review the formalism of~\cite{barkeshli2019} describing symmetry fractionalization in bosonic systems. The starting point is a UMTC $\mathcal{B}$ and a symmetry group $G$. 
        
        \subsection{Topological symmetries}
        
        A unitary topological symmetry, or braided autoequivalence, of $\mathcal{B}$ is an invertible map
        \begin{equation}
            \rho: \mathcal{B} \rightarrow \mathcal{B}
        \end{equation}
        which preserves all topological data. In particular, gauge-invariant quantities are left invariant, while gauge-dependent quantities are left invariant up to a gauge transformation. One can also define anti-unitary topological symmetries, which complex conjugate the data, up to gauge transformations. 
        
        Certain unitary braided autoequivalences are ``trivial" in that they leave all of the basic data of the theory completely unchanged. These autoequivalences are called natural isomorphisms, and their action is written
        \begin{equation}
            \Upsilon\left(\ket{a,b;c}\right) = \frac{\gamma_a \gamma_b}{\gamma_c}\ket{a,b;c}
        \end{equation}
        with $\gamma_a \in \U$ for all $a \in \mathcal{B}$. In bosonic systems, modifying a braided autoequivalence by a natural isomorphism is a form of gauge freedom. We therefore define the group $\Aut(\mathcal{B})$ to be the group of braided autoequivalences of $\mathcal{B}$ modulo natural isomorphisms. 
        
        Natural isomorphisms themselves have redundancy, in that modifying
        \begin{equation}
            \gamma_a \rightarrow \gamma_a \zeta_a
        \end{equation}
        for phases $\zeta_a$ which obey the fusion rules, that is, $\zeta_a \zeta_b = \zeta_c$ whenever $N_{ab}^c>0$, does not change the action of the natural isomorphism on any fusion space. This redundancy will be particularly important when we consider the fermionic case.
        
        The first step towards specifying how a symmetry acts on a bosonic topological phase is to choose a group homomorphism
        \begin{equation}
            [\rho_{\bf g}]: G \rightarrow \Aut(\mathcal{B}).
        \end{equation}
        Choosing a representative $\rho_{\bf g}$ of the class $[\rho_{\bf g}]$ specifies data $U_{\bf g}(a,b;c)$ via the equation
        \begin{equation}
            \rho_{\bf g}(\ket{a,b;c}) = U_{\bf g}\left(\,^{\bf g}a,\,^{\bf g}b;\,^{\bf g}c\right)\ket{\,^{\bf g}a,\,^{\bf g}b;\,^{\bf g}c},
        \end{equation}
        where $|a,b;c\rangle \in V_{ab}^c$ is a state in a splitting space of $\mathcal{B}$. 
        This data is subject to the consistency conditions that the $F$- and $R$-symbols are preserved:
        \begin{align}
            U_{\bf g}(\,^{\bf g}a,\,^{\bf g}b;\,^{\bf g}e)U_{\bf g}(\,^{\bf g}c,\,^{\bf g}e;\,^{\bf g}d)F^{\,^{\bf g}a,\,^{\bf g}b,\,^{\bf g}c}_{\,^{\bf g}d,\,^{\bf g}e,\,^{\bf g}f}U_{\bf g}^{-1}(\,^{\bf g}b,\,^{\bf g}c;\,^{\bf g}f)U_{\bf g}^{-1}(\,^{\bf g}a,\,^{\bf g}f;\,^{\bf g}d) &= \left(F^{abc}_{def}\right)^{\sigma({\bf g})} \label{eqn:UFconsistency}\\
            U_{\bf g}(\,^{\bf g}b,\,^{\bf g}a;\,^{\bf g}c)R^{\,^{\bf g}a,\,^{\bf g}b}_{\,^{\bf g}c} U_{\bf g}^{-1}(\,^{\bf g}a,\,^{\bf g}b;\,^{\bf g}c) &= \left(R^{ab}_c\right)^{\sigma({\bf g})}. \label{eqn:URconsistency}
        \end{align}
        Here 
        \begin{equation}
            \sigma({\bf g}) = \begin{cases}
                1 & {\bf g} \text{ unitary}\\
                \ast & {\bf g} \text{ anti-unitary}
            \end{cases}
        \end{equation}
        Modifying the representative $\rho_{\bf g}$ by a ${\bf g}$-dependent natural isomorphism, $\rho_{\bf g} \rightarrow \Upsilon_{\bf g} \circ \rho_{\bf g}$, is a form of gauge freedom which changes
        \begin{equation}
            U_{\bf g}(a,b;c) \rightarrow \frac{\gamma_a({\bf g}) \gamma_b({\bf g})}{\gamma_c({\bf g})} U_{\bf g}(a,b;c).
            \label{eqn:UGaugeTransform}
        \end{equation}
        Since $[\rho_{\bf g}]$ is a group homomorphism, the map
        \begin{equation}
            \kappa_{\bf g,h} = \rho_{\bf gh}\circ \rho_{\bf h}^{-1} \circ \rho_{\bf g}^{-1}
        \end{equation}
        is a natural isomorphism. Translated into the action on fusion spaces, this means
        \begin{equation}
            \kappa_{\bf g,h}(a,b;c) = \frac{\beta_a({\bf g,h})\beta_b({\bf g,h})}{\beta_c({\bf g,h})} = U_{\bf g}^{-1}(a,b,c)\left(U_{\bf h}^{-1}(\,^{\overline{\bf g}}a,\,^{\overline{\bf g}}b;\,^{\overline{\bf g}}c)\right)^{\sigma({\bf g})}U_{\bf gh}(a,b;c)
        \end{equation}
        for some phases $\beta_a({\bf g,h})$. We use the compact notation ${\overline{\bf g}}={\bf g}^{-1}$. Thanks to the redundancies in natural isomorphisms discussed above, if $\nu_a({\bf g,h})$ are phases that obey the fusion rules for each ${\bf g,h}$, then the data $\beta_a({\bf g,h})$ and $\beta_a({\bf g,h})\nu_a({\bf g,h})$ should be considered gauge equivalent. Modifying $\rho_{\bf g}$ by a natural isomorphism also modifies
        \begin{equation}
            \beta_a({\bf g,h})\rightarrow\frac{\gamma_a({\bf gh})}{\gamma_a({\bf g})\gamma_{\,^{\overline{\bf g}}a}({\bf h})^{\sigma({\bf g})}}\beta_a({\bf g,h}),
            \label{eqn:betaGammaTransform}
        \end{equation}
        so these $\beta_a$ should also be considered gauge equivalent. We use the canonical gauge-fixing $U_{\bf g}(1,a,a)=U_{\bf g}(a,1,a) = 1$ for all $a \in \mathcal{B}$, where the identity anyon in $\mathcal{B}$ is denoted $1$. Maintaining this gauge fixing requires $\gamma_1({\bf g})=1$ in all gauge transformations.
        
     \subsection{Symmetry localization and fractionalization}
     
     Now suppose that we have a bosonic quantum many-body Hilbert space with a local Hamiltonian such that the system is in the phase given by the topological order $\mathcal{B}$. Let $R_{\bf g}$ be the representation of ${\bf g} \in G$ on the quantum many-body Hilbert space. We assume that $R_{\bf g}$ is generated onsite, that is,
     \begin{equation}
     R_{\bf g} = \prod_i R_{\bf g}^{(i)} K^{q({\bf g})}
     \label{eqn:localGeneration}
     \end{equation}
     where the $R_{\bf g}^{(i)}$ are local (that is, supported on a region with length on the order of the correlation length or smaller) unitary operators on disjoint patches $i$ of space, $q({\bf g})=0$ if $R_{\bf g}$ has a unitary action, and $q({\bf g})=1$ if $R_{\bf g}$ has an anti-unitary action. 
     Let 
     $\ket{\Psi_{\{a_i\}}}$ be a state of the many-body system with anyons $a_i$ localized at well-separated positions $i$. Then, at least in principle, one may determine the map $[\rho_{\bf g}]:G \rightarrow \Aut(\mathcal{B})$, where the autoequivalences are defined modulo natural transformations.
     
     Upon defining $[\rho_{\bf g}]$, we may ask if the symmetry can be localized. That is, we ask if the global symmetry operator $R_{\bf g}$ acting on $\ket{\Psi_{\{a_i\}}}$ can be decomposed, up to exponentially small corrections in the ratio of the correlation length and the separation between anyons, according to the following ansatz
     \begin{equation}
         R_{\bf g}\ket{\Psi_{\{a_i\}}} \approx \prod_i U_{\bf g}^{(i)} \rho_{\bf g}\ket{\Psi_{\{a_i\}}},
         \label{eqn:symmFracAnsatz}
     \end{equation}
     where $\rho_{\bf g}$ is an operator which acts only on the topological data of the state $\ket{\Psi_{\{a_i\}}}$ and whose action is given by the element $[\rho_{\bf g}]$ of $\Aut(\mathcal{B})$, and $U_{\bf g}^{(i)}$ is a local operator near the anyon at position $i$. 
     
     One can show~\cite{barkeshli2019} that localizing the symmetry amounts to choosing a set of phases $\eta_a({\bf g,h})$. These phases define the symmetry fractionalization data and which characterize the extent to which the local operators $U_{\bf g}^{(i)}$ fail to obey the group law:
     \begin{equation}
         \eta_{a_j}({\bf g,h})U_{\bf gh}^{(j)}\ket{\Psi_{\{a_i\}}} = U_{\bf g}^{(j)}\rho_{\bf g}U_{\bf h}^{(j)}\rho_{\bf g}^{-1}\ket{\Psi_{\{a_i\}}}
         \label{eqn:etaDef}
     \end{equation}
     The $\eta_a$ obey consistency conditions. One arises from enforcing associativity of the $U_{\bf g}^{(j)}$:
     \begin{equation}
         \eta_a({\bf g,h})\eta_a({\bf gh,k}) = \eta^{\sigma({\bf g})}_{\,^{\overline{\bf g}}a}({\bf h,k})\eta_a({\bf g,hk}).
         \label{eqn:etaConsistency}
     \end{equation}
     The other consistency condition enforces the consistency between the local data and the global part of the symmetry $\rho_{\bf g}$:
     \begin{equation}
         \frac{\eta_c({\bf g,h})}{\eta_a({\bf g,h})\eta_b({\bf g,h})} = U^{\sigma({\bf g})}_{\bf h}(\,^{\overline{\bf g}}a,\,^{\overline{\bf g}}b;\,^{\overline{\bf g}}c)U_{\bf g}(a,b;c)U^{-1}_{\bf gh}(a,b;c) = \kappa^{-1}_{\bf g,h}(a,b;c).
         \label{eqn:etaUConsistency}
     \end{equation}
     Using the explicit form of $\kappa$, we find that
     \begin{equation}
         \frac{\beta_a({\bf g,h})\beta_b({\bf g,h})}{\beta_c({\bf g,h})} = \frac{\eta_a({\bf g,h})\eta_b({\bf g,h})}{\eta_c({\bf g,h})}
     \end{equation}
     so we may re-encode the symmetry fractionalization data into a set of phases
     \begin{equation}
         \omega_a({\bf g,h})= \frac{\beta_a({\bf g,h})}{\eta_a({\bf g,h})}
         \label{eqn:omegaInTermsOfBetaEta}
     \end{equation}
     These $\omega_a$ need not be 1, but they do have the convenient property that they obey the fusion rules. We emphasize that, given $\rho_{\bf g}$, the phases $\omega_a$ and $\eta_a$ are equivalent encodings of the symmetry fractionalization data, and we may choose to work with either one depending on convenience. By an argument which we review and generalize in Appendix~\ref{app:characters}, the fact that the $\omega_a$ obey the fusion rules means that, since $\mathcal{B}$ is modular,
     \begin{equation}
         \omega_a({\bf g,h})=M_{a,\cohosub{w}({\bf g,h})}
         \label{eqn:omegaAsMutualStats}
     \end{equation}
     where $\coho{w} \in \A$, with $\A$ the set of Abelian anyons of $\mathcal{B}$, and $M_{ab}$ is the scalar monodromy between the anyons $a,b$.
     
     There is an obstruction to the symmetry localization ansatz of Eq. \ref{eqn:symmFracAnsatz} being consistent with associativity of $R_{\bf g}$, which precludes defining any consistent symmetry fractionalization pattern. This obstruction, called the symmetry localization obstruction, is an element $[\O]\in \H^3(G,\A)$. One finds this obstruction by constructing the following phase factors
     \begin{equation}
         \Omega_a({\bf g,h,k}) = \frac{\beta^{\sigma({\bf g})}_{\,^{\overline{\bf g}}a}({\bf h,k})\beta_a({\bf g,hk})}{\beta_a({\bf g,h})\beta_a({\bf gh,k})}
     \end{equation}
     One can show that these phases obey the fusion rules. Accordingly,
     \begin{equation}
         \Omega_a({\bf g,h,k})=M_{a,\cohosub{O}({\bf g,h,k})}
     \end{equation}
     where $\O({\bf g,h,k}) \in \A$. If the symmetry can be localized, we can use Eqs.~\ref{eqn:etaConsistency} and \ref{eqn:omegaInTermsOfBetaEta} to write an equivalent expression for $\Omega_a$, namely
     \begin{equation}
         \Omega_a({\bf g,h,k}) = \frac{\omega^{\sigma({\bf g})}_{\,^{\overline{\bf g}}a}({\bf h,k})\omega_a({\bf g,hk})}{\omega_a({\bf g,h})\omega_a({\bf gh,k})}
     \end{equation}
     which translates into the equation
     \begin{equation}
         \O({\bf g,h,k}) = (d\coho{w})({\bf g,h,k})
         \label{eqn:standardH3Obstruction} ,
     \end{equation}
     where $d$ is the differential in the group cohomology. That is, the existence of a consistent symmetry fractionalization pattern requires that $[\O]$ be trivial as an element of $\H^3(G,\A)$. Equivalently, if $[\O]$ is not trivial in $\H^3(G,\A)$, then the symmetry localization ansatz Eq. \ref{eqn:symmFracAnsatz} is inconsistent and thus we have an obstruction to obtaining any consistent symmetry fractionalization pattern. 
     
     Another perspective on the symmetry localization obstruction is that the TQFT defined by $\mathcal{B}$ is only compatible with a 2-group symmetry, where $G$ and $\mathcal{A}$ are the 0-form and 1-form symmetries, respectively, and $[\O] \in \mathcal{H}^3(G, \mathcal{A})$ characterizes the 2-group.\cite{barkeshli2019,benini2019} 
     
     Symmetry fractionalization data is subject to a set of gauge transformations that arise from ambiguities in the ansatz of Eq.~\ref{eqn:symmFracAnsatz}. One may freely modify the local operators $U_{\bf g}^{(i)}$ to act on states via
     \begin{equation}
         U_{\bf g}^{(j)}\ket{\Psi_{\{a_i\}}} \rightarrow \gamma_{a_j}({\bf g})^{-1} U_{\bf g}^{(j)} \ket{\Psi_{\{a_i\}}}
     \end{equation}
     for any $\U$ phases $\gamma_{a_j}({\bf g})$; this will not change the action of $R_{\bf g}$ as long as there is a compensating modification of the operator $\rho_{\bf g}$ by the natural isomorphism given by $\gamma_a({\bf g})$. This transformation of the local operators, when inserted into the symmetry fractionalization ansatz, effectively modifies $\rho_{\bf g}$ by the natural isomorphism given by $\gamma_a({\bf g})$. This transformation modifies $U$ and $\eta$ according to Eq.~\ref{eqn:UetaGaugeTransform} while also redefining $\beta_a$ according to Eq.~\ref{eqn:betaGammaTransform}, but it leaves $\omega_a$ invariant. On the other hand, redefining $\beta_a \rightarrow \beta_a \nu_a$ using the redundancy of natural isomorphisms redefines $\omega_a \rightarrow \omega_a \nu_a$ while leaving $\eta_a$ and $U_{\bf g}$ invariant.
     
     Finally, symmetry fractionalization patterns in bosonic systems form an $\H^2(G,\A)$ torsor. To see this, we note that different solutions $\coho{w}$ and $\coho{w}'$ of Eq.~\ref{eqn:standardH3Obstruction} are related by
     \begin{equation}
         \coho{w}'({\bf g,h})=\coho{t}({\bf g,h})\times \coho{w}({\bf g,h})
     \end{equation}
     for some choice of $\coho{t}\in Z^2(G,\A)$. Tracing through the definitions, one finds that $\coho{t}$ transforms by a coboundary if the $U_{\bf g}^{(i)}$ are modified by a natural isomorphism which is equivalent to the identity natural isomorphism, that is, if we choose a natural isomorphism for which the $\gamma_a({\bf g})$ obey the fusion rules. Hence only different $[\coho{t}] \in \H^2(G,\A)$ produce different symmetry fractionalization patterns.
        
\section{Symmetry localization in fermionic systems}
\label{fermSymLocSec}
        
        We will build our theory of symmetry fractionalization in fermionic systems by applying the basic formalism of Sec.~\ref{sec:bosonicSymmFrac} to a super-modular tensor category and then demanding that the symmetry localization ansatz be compatible with the full fermionic symmetry group $G_f$ and the locality of the fermion. In this section, we assume that the symmetry can be localized in the sense of Sec.~\ref{sec:bosonicSymmFrac} and determine the constraints required for this compatibility. In the subsequent section we will then study the fundamental obstructions to symmetry localization. 
        
        \subsection{Fermionic symmetries}
        \label{sec:fermionicSymm}
        
        Consider a microscopic fermionic system in the sense described in Sec.~\ref{sec:fermionicSymm}, that is, we have a many-body Hilbert space which is a tensor product of local fermionic Hilbert spaces. We assume there is a local Hamiltonian with a gap such that the system is in a fermionic topological phase associated to the super-modular tensor category $\mathcal{C}$ with transparent fermion $\psi$. Such a system has a symmetry group $G_f$, which is the group of transformations of fermionic operators that keep the Hamiltonian and (by assumption) ground state invariant. The fermion parity operator $(-1)^F$ is defined from the Hilbert space and determines the $\Z_2^f$ subgroup of the full fermionic symmetry group $G_f$. Then the group $G_b = G_f/\Z_2^f$ describes the set of transformations of all bosonic operators that keep the Hamiltonian and ground state invariant.
        
        To each element ${\bf g} \in G_b$ we have an operator $R_{\bf g}$ on the full Hilbert space which implements the symmetry. For the discussion below, we assume the $R_{\bf g}$ are locally generated in the sense of Eq.~\ref{eqn:localGeneration}. Nevertheless we expect that the final results basically hold for spatial symmetries as well\footnote{See e.g. \cite{manjunath2020} for a recent discussion on the applicability of applying the $G$-crossed braided tensor category formalism to spatial symmetries.}.
        
        The operators $R_{\bf g}$ and $(-1)^FR_{\bf g}$ are physically distinct (for example, if the fermions carry spin, a global spin flip is distinct from a spin flip times fermion parity) and are fixed from the outset; this fact will play an important role later.
        
        In general, the $R_{\bf g}$ operators do not form a linear representation of $G_b$; instead, they multiply projectively on states of the many-body quantum system with odd fermion parity, that is,
        \begin{equation}
            R_{\bf g}R_{\bf h} = \left(\omega_2({\bf g,h})\right)^F R_{\bf gh}
        \end{equation}
        where $F$ is the fermion parity operator, with eigenvalues $0$ and $1$, and $\omega_2({\bf g,h}) \in \{\pm 1\} \simeq \Z_2$.
        
        Since both the local fermion operators and $R_{\bf g}$ are defined in the microscopic Hilbert space, the local fermion operators have fixed transformation rules. That is, let $f_{i,\alpha}$ be a basis of (Majorana) fermionic local operators localized near position $i$. Then, since $R_{\bf g}$ is locally generated,
        \begin{equation}
            R_{\bf g} f_{i,\alpha} R_{\bf g}^{-1} = \left(\tilde{U}_{\bf g}^{(i)}\right)_{\alpha \beta}f_{i,\beta} ,
            \label{eqn:microscopicFermionTransform}
        \end{equation}
        where $i$ labels the position of the local fermion operator $f$, $\alpha,\beta$ label local degrees of freedom, and $\tilde{U}_{\bf g}^{(i)}$ is some matrix acting only on local degrees of freedom. We include $i$-dependence on $\tilde{U}_{\bf g}^{(i)}$ for full generality; this dependence disappears only if the local fermion Hilbert space and the symmetry action on it are translation invariant.
        
        Demanding that the $R_{\bf g}$ operators multiply associatively enforces that $\omega_2 \in Z^2(G_b,\Z_2)$. As such, $G_f$ is a central extension of $G_b$ by $\Z_2^f$, described by the short exact sequence
        \begin{align}
            1 \rightarrow \Z_2^f \rightarrow G_f \rightarrow G_b \rightarrow 1
        \end{align}
        and characterized by the cohomology class  $[\omega_2] \in \H^2(G_b, \Z_2)$. When $[\omega_2]$ is trivial, then $G_f = \Z_2^f \times G_b$. When $[\omega_2]$ is non-trivial, then it follows that the fermions carry fractional quantum numbers of $G_b$ which cannot be screened by any bosons. 
        
        With these definitions in hand, our starting point to describe the localization of $G_f$ on the fermionic topological order described by $\mathcal{C}$ is to simply write down a theory of $G_b$ symmetry localization on $\mathcal{C}$ in the sense of Eq.~\ref{eqn:symmFracAnsatz}. Assuming the symmetry can be localized, we will obtain symmetry fractionalization data $\eta_a({\bf g,h})$ in the usual way via Eq.~\ref{eqn:etaDef}. However, we must modify the theory to account for two things. First, $\psi$ is a local excitation, not simply an emergent one. Second, we must account for the presence of the group extension $\omega_2$. We will find that if our symmetry fractionalization pattern for $G_b$ is to describe a theory with a local $\psi$ and $G_f$ symmetry, then the symmetry fractionalization pattern is constrained in a few ways.
        
        Consider a basis of states $\ket{\Psi^\alpha_{\psi_0;(a_i;c)}}$ on a closed manifold for the topological sector containing a set of anyons $a_i$ at positions $i=1,2,\ldots,n$ that have a definite total fusion channel $c$ and which also contains a fermion at position 0. The label $\alpha$ labels local degrees of freedom near position 0, in the sense that given the basis of (Majorana) fermionic operators $f_{0,\alpha}$ near position 0, we define
        \begin{equation}
            \ket{\Psi^\alpha_{\psi_0;(a_i;c)}} = f_{0,\alpha}\ket{\Psi_{(a_i;c)}},
        \end{equation}
        for every topological sector $(a_i;c)$, where $\ket{\Psi_{(a_i;c)}}$ is a state which is locally in the vacuum state at position 0 (we suppress indices for local degrees of freedom away from position 0). For this state to be realizable microscopically, we require
        \begin{equation}
        c \in \{1,\psi\}.
        \end{equation}
        We can evaluate the action of $R_{\bf g}$ on the state in two ways. First, we can pull out a local fermion operator before applying the symmetry transformation:
        \begin{align}
            R_{\bf g}\ket{\Psi^\alpha_{\psi_0;(a_i;c)}} &= R_{\bf g}f_{0,\alpha}\ket{\Psi_{(a_i;c)}} \\
            &= \sum_\beta \left(\tilde{U}^{(0)}_{\bf g}\right)_{\alpha \beta}f_{0,\beta}R_{\bf g}\ket{\Psi_{(a_i;c)}}\\
            &= \sum_{\beta} \left(\tilde{U}^{(0)}_{\bf g}\right)_{\alpha \beta} f_{0,\beta} \prod_i U_{\bf g}^{(i)} U_{\bf g}(\{\,^{\bf g}a_i\};c)\ket{\Psi_{(\,^{\bf g}a_i;c)}} \label{eqn:symmActionLastLine},
        \end{align}
        where $\beta$ runs over the local basis of fermion operators. Alternatively, we can apply the symmetry transformation first:
        \begin{align}
            R_{\bf g}\ket{\Psi^\alpha_{\psi_0;(a_i;c)}} &= \prod_{k=0,i} U_{\bf g}^{(k)}U_{\bf g}(\psi_0;(\{\,^{\bf g}a_i\};c))\ket{\Psi^\alpha_{\psi_0;(\,^{\bf g}a;c)}}\\
            &= \prod_{k=0,i} U_{\bf g}^{(i)}U_{\bf g}(\psi,c;c \times \psi)U_{\bf g}(\{\,^{\bf g}a_i\};c)f_{0,\alpha}\ket{\Psi_{(\,^{\bf g}a_i;c)}} \\
            &= U_{\bf g}^{(0)}f_{0,\alpha} \prod_{i} U_{\bf g}^{(i)}U_{\bf g}(\psi,c;c \times \psi)U_{\bf g}(\{\,^{\bf g}a_i\};c)\ket{\Psi_{(\,^{\bf g}a_i;c)}}, \label{eqn:alternateSymmActionLastLine}
        \end{align}
        The symbol $U_{\bf g}(\{\,^{\bf g}a_i\};c)$ is shorthand for the $U$-symbol for the entire fusion tree of anyons away from position 0, and likewise $U_{\bf g}(\psi_0;(\{\,^{\bf g}a_i\};c))$ is the $U$-symbol for the entire fusion tree of the whole state. The fact that the same label $\alpha$ appears in both cases arises from the fact that we are using the same basis of local operators near position 0 to define the basis of states in each topological sector, which defines the $U_{\bf g}^{(0)}$ operators. Comparing Eqs.~\ref{eqn:symmActionLastLine} and~\ref{eqn:alternateSymmActionLastLine}, we obtain an important equation
        \begin{equation}
            \sum_\beta \left(\tilde{U}^{(0)}_{\bf g} \right)_{\alpha \beta} f_{0,\beta} \ket{\Psi_{(\,^{\bf g}a_i;c)}}= U_{\bf g}^{(0)}f_{0,\alpha} U_{\bf g}(\psi,c;c \times \psi)\ket{\Psi_{(\,^{\bf g}a_i;c)}},
            \label{eqn:localTopologicalFermionConsistency}
        \end{equation}
        for all $\alpha$. The first thing to observe from this equation is that the left-hand side consists of a local operator acting on a state, and the right-hand side is a local operator acting on the same state multiplied by the phase $U_{\bf g}(\psi,c;c \times \psi)$, which depends on the non-local overall fusion channel $c$ of the state in question. In order for this to be true, we must have that $U_{\bf g}(\psi,c;c \times \psi)$ is independent of $c$. Since $c \in \{1,\psi\}$, this implies
        \begin{equation}
            U_{\bf g}(\psi,\psi;1) = U_{\bf g}(\psi,1; \psi) = 1.
            \label{eqn:UpsiEquals1}
        \end{equation}
        This is one of the primary constraints on the SET data and is needed to enforce the locality of $\psi$. This turns Eq.~\ref{eqn:localTopologicalFermionConsistency} into
        \begin{equation}
            \sum_\beta \left(\tilde{U}^{(0)}_{\bf g} \right)_{\alpha \beta} f_{0,\beta} \ket{\Psi_{(\,^{\bf g}a_i;c)}}= U_{\bf g}^{(0)}f_{0,\alpha}\ket{\Psi_{(\,^{\bf g}a_i;c)}} \text{ if }c\in\{1,\psi\}.
            \label{eqn:localTopologicalFermionConsistencyNoU}
        \end{equation}
        
        The next observation is that the action of $U_{\bf g}^{(0)}$ on states with an anyon $\psi$ at position 0 is determined entirely by its action on the basis of local fermion operators, that is, by $\tilde{U}_{\bf g}^{(0)}$.  Normally we can modify the local operators $U_{\bf g}^{(i)}$ by a phase $\gamma_{a_i}({\bf g})$, which would transform the right-hand side of Eq.~\ref{eqn:localTopologicalFermionConsistencyNoU} by $\gamma_\psi({\bf g})$. However, the left-hand side of Eq.~\ref{eqn:localTopologicalFermionConsistencyNoU} is unambiguously determined by the microscopic action of the symmetry on local operators. Correspondingly, we have no freedom in the right-hand side and thus we obtain the important constraint
        \begin{equation}
            \gamma_\psi({\bf g})=1
            \label{eqn:gammaPsiEquals1}
        \end{equation}
        in all gauge transformations. Physically speaking, the local action $U_{\bf g}^{(0)}$ of the symmetry on states with a fermion at position $0$ is completely fixed by the symmetry action on the microscopic Hilbert space as determined by  Eq.~\ref{eqn:microscopicFermionTransform}.
        
        To further understand the constraint Eq.~\ref{eqn:gammaPsiEquals1}, we comment that a transformation $\gamma_\psi({\bf g})=-1$ amounts to a redefinition of the microscopic $G_b$ symmetry operators by a ${\bf g}$-dependent factor of fermion parity, that is, relabeling
        \begin{equation}
            R_{\bf g} \rightarrow R_{\bf g}(\gamma_{\psi}({\bf g}))^{F}.
            \label{eqn:RgRedef}
        \end{equation}
        For a familiar example, if $G_b = \Z_2^{\bf T}$, then this redefinition interchanges the operators ${\bf T}\leftrightarrow {\bf T}(-1)^F$. As discussed at the beginning of this section, these operators are physically distinct thanks to the locality of the fermion. We should therefore consider this to be a different theory rather than a gauge-equivalent one. As we saw at the beginning of this section, a microscopic set of symmetry operators defines $\omega_2$ at the cocycle level, not just at the level of cohomology.
        
        Also note that the above transformation, Eq. \ref{eqn:RgRedef}, implements an automorphism of $G_f$ which changes the decomposition of $G_f$ into $G_b \times \Z_2^f$ as a set, and if $\gamma_\psi({\bf g})$ is not closed as a 1-cochain, this will also change the representative cocycle $\omega_2$ for the group extension. 
        
        Finally, we can use the above to constrain $\eta_\psi$. Suppose that we can define consistent symmetry fractionalization. The symmetries cannot permute $\psi$, so $\eta_\psi({\bf g,h})$ is constrained to be a $\U$-valued 2-cocycle. Inserting Eq.~\ref{eqn:UpsiEquals1} into the consistency condition Eq.~\ref{eqn:etaUConsistency} forces
        \begin{align}
            \eta_\psi({\bf g,h})^2 = 1,
        \end{align}
        that is, $\eta_\psi({\bf g,h}) \in Z^2(G_b,\Z_2)$.
        
        Next, observe that
        \begin{align}
            R_{\bf g}R_{\bf h}f_{i,\alpha} R_{\bf h}^{-1}R_{\bf g}^{-1}&=\omega_2({\bf g,h})R_{\bf gh}f_{i,\alpha} R_{\bf gh}^{-1} = \sum_\beta \omega_2({\bf g,h)}\left(\tilde{U}_{\bf gh}^{(i)}\right)_{\alpha \beta}f_{i,\beta}\\
            &=\sum_\beta R_{\bf g}\left(\tilde{U}_{\bf h}^{(i)}\right)_{\alpha \beta}f_{i,\beta} R_{\bf g}^{-1} = \sum_\delta \left(\,^{\bf g}\tilde{U}_{\bf h}^{(i)}\tilde{U}_{\bf g}^{(i)}\right)_{\alpha \delta}f_{i,\delta}.
        \end{align}
        Applying both far-right-hand-side expressions to a parity-odd state $\ket{\Psi_{(a_j;\psi)}}$, we obtain
        \begin{align}
            \sum_\beta \omega_2({\bf g,h)}\left(\tilde{U}_{\bf gh}^{(i)}\right)_{\alpha \beta} f_{i,\beta} \ket{\Psi_{(a_j;\psi)}} = \sum_\delta \left(\,^{\bf g}\tilde{U}_{\bf h}^{(i)}\tilde{U}_{\bf g}^{(i)}\right)_{\alpha \delta}f_{i,\delta}\ket{\Psi_{(a_j;\psi)}}.
        \end{align}
        We can now use Eq.~\ref{eqn:localTopologicalFermionConsistencyNoU} to exchange the $\tilde{U}^{(i)}$ matrices for the operators $U^{(i)}$ which appear in the symmetry localization ansatz, then absorb the fermion operators into the state:
        \begin{align}
            \omega_2({\bf g,h)}U_{\bf gh}^{(i)} f_{i,\alpha} \ket{\Psi_{(a_j;\psi)}} &= \,^{\bf g}U_{\bf h}^{(i)}U_{\bf g}^{(i)}f_{i,\alpha}\ket{\Psi_{(a_j;\psi)}}\\
            \Rightarrow \omega_2({\bf g,h)}\ket{\Psi^\alpha_{\psi_i;(a_j;\psi)}} &=  \left(U_{\bf gh}^{(i)}\right)^{-1}\,^{\bf g}U_{\bf h}^{(i)}U_{\bf g}^{(i)}\ket{\Psi^\alpha_{\psi_i;(a_j;\psi)}}\\
            &= \beta^{-1}_{\psi}({\bf g,h})W_{\bf g,h}^{(i)}\ket{\Psi^\alpha_{\psi_i;(a_j;\psi)}}\\
            &= \frac{\omega_{\psi}({\bf g,h})}{\beta_{\psi}(\bf g,h)}\ket{\Psi^\alpha_{\psi_i;(a_j;\psi)}}\\
            &= \eta^{-1}_{\psi}({\bf g,h})\ket{\Psi^\alpha_{\psi_i;(a_j;\psi)}}.
        \end{align}
        The last line follows from the definition Eq.~\ref{eqn:omegaInTermsOfBetaEta}, and here $W_{\bf g,h}^{(i)}$ is the operator
        \begin{equation}
            W_{\bf g,h}^{(i)} = \left(U_{\bf gh}^{(i)}\right)^{-1}\,^{\bf g}U_{\bf h}^{(i)}U_{\bf g}^{(i)}
        \end{equation}
        defined in~\cite{barkeshli2019} whose action on states with an anyon $a_i$ at position $i$ is
        \begin{equation}
            W_{\bf g,h}^{(i)}\ket{\Psi_{a_i}} = \omega_a({\bf g,h})\ket{\Psi_{a_i}}.
        \end{equation} 
        Note that $\omega_a$, which is a $\U$ phase that obeys the fusion rules in the sense
        \begin{equation}
            \omega_a({\bf g,h})\omega_b({\bf g,h}) = \omega_c({\bf g,h}) \text{ whenever } N_{ab}^c >0,
        \end{equation}
        and is used to characterize the symmetry fractionalization, should not be confused with $\omega_2 \in \{ \pm 1\}$, which specifies $G_f$ as a $\Z_2^f$ extension of $G_b$.
        
        We also comment that in the bosonic case we would proceed by writing Eq.~\ref{eqn:omegaAsMutualStats}. This step becomes more subtle in the fermionic case; we will discuss it in Sec.~\ref{sec:obstructions}.
        
        Since the group extension cocycle $\omega_2$ is $\Z_2$-valued, we conclude that
        \begin{equation}
            \eta_{\psi}({\bf g,h})=\omega_2({\bf g,h}) ,
            \label{eqn:etaPsiConstraint}
        \end{equation}
        so as claimed, we should incorporate the group extension as symmetry fractionalization data $\eta_{\psi}$.
        
        To summarize, we have found that locality of fermion operators requires that
        \begin{align}
            U_{\bf g}(\psi, \psi;1)&=1\\
            \eta_{\psi}({\bf g,h})&=\omega_2({\bf g,h})\\
            \gamma_{\psi}({\bf g}) &=1
        \end{align}
        
        \subsection{Locality-respecting natural isomorphisms}
        \label{sec:LRNatIso}
        
        In bosonic topological phases given by a UMTC $\mathcal{B}$, we define a group homomorphism $[\rho]: G \rightarrow \Aut(\mathcal{B})$, where elements of $\Aut(\mathcal{B})$ are defined modulo natural isomorphisms. In the fermionic case, we first need to enforce Eq.~\ref{eqn:UpsiEquals1}. Every element of $\Aut(\mathcal{C})$ has a representative with $U_{\bf g}(\psi,\psi;1)=+1$; given any choice of representative, if this constraint is not respected, then modify the autoequivalence by a natural isomorphism with $\gamma_\psi({\bf g}) = U_{\bf g}(\psi,\psi;1)^{-1/2}$ (either sign of the square root will work) to obtain a representative which respects the constraint. Hence the group of autoequivalences which respect the constraint $U_{\bf g}(\psi,\psi;1)=1$ is isomorphic to $\Aut(\mathcal{C})$; we will therefore refer to the former as $\Aut(\mathcal{C})$ as well, but we implicitly are disallowing any representatives which violate the constraint.
        
        Once we have accounted for the above constraint, as we saw above, natural isomorphisms with $\gamma_\psi({\bf g}) \neq +1$ do not in general respect the locality of the fermion. As such, we should only consider symmetries to be equivalent if they differ by a natural isomorphism which \textit{respects} the locality of the fermion. We define the group of equivalence classes of braided autoequivalences under this restricted equivalence to be the group $\Aut_{LR}(\mathcal{C})$, where $LR$ stands for ``locality-respecting."
        
        Accordingly, in a fermionic system we must specify a map
        \begin{equation}
            [\rho_{\bf g}]: G_b \rightarrow \Aut_{LR}(\mathcal{C}),
        \end{equation}
        such that  
        \begin{equation}
            \kappa_{\bf g,h} \circ \rho_{\bf g} \circ \rho_{\bf h} = \rho_{\bf gh},
            \label{eqn:kappaghDef}
        \end{equation}
        where $\kappa_{\bf g,h}$ is a natural isomorphism. In general $\kappa_{\bf g,h}$ need not be a locality-respecting natural isomorphism. When $\kappa_{\bf g,h}$ does respect locality, then $[\kappa_{\bf g,h}]$ is trivial and $[\rho_{\bf g}]$ is a faithful group homomorphism; otherwise the multiplication law for $[\rho_{\bf g}]$ holds up to a factor $[\kappa_{\bf g, h}]$.\footnote{An earlier version of this paper referred to $[\rho_{\bf g}]$ as a group homomorphism, implicitly assuming that $[\kappa_{\bf g,h}]$ is always trivial. However when $\Upsilon_\psi$ violates locality, $[\kappa_{\bf g,h}]$ may indeed be non-trivial, in which case the multiplication law for $[\rho_{\bf g}]$ need only hold projectively; an example is the unobstructed fractionalization of $G_f=\Z_4^{{\bf T},f}$ on $\mathcal{C}=\SO(3)_3$ in Sec.~\ref{subsec:symmLocObstructions}. The corrected discussion agrees with the discussion contained in Ref.~\cite{aasen21ferm}.}
        
        The definition of $\Aut_{LR}(\mathcal{C})$ is more subtle than one might naively expect. Given a microscopic realization of the symmetry, natural isomorphisms are obtained by modifying the local operators $U_{\bf g}^{(k)}$ that appear in the symmetry localization ansatz by anyon-dependent factors $\gamma_a({\bf g})$; from this starting point, only natural isomorphisms with $\gamma_\psi=+1$ are allowed. However, if the starting point is only UBFC data, natural isomorphisms are defined by their action on fusion vertices, i.e., they are autoequivalences $\Upsilon$ of the form
         \begin{equation}
            \Upsilon(\ket{a,b;c}) = \frac{\gamma_a \gamma_b}{\gamma_c}\ket{a,b,c}.
        \end{equation}
        At this level, natural isomorphisms have a redundancy upon redefining
        \begin{equation}
            \tilde{\gamma}_a = \zeta_a \gamma_a,
        \end{equation}
        where the $\zeta_a$ are phases such that $\zeta_a \zeta_b = \zeta_c$ whenever $N_{ab}^c >0$. Such a redefinition does not change the action of the symmetry on any fusion vertices. Therefore, if a natural isomorphism is equivalent under this redundancy to one with $\gamma_\psi = +1$, then it also respects locality, although at the level of microscopics, we must implement the natural isomorphism of the BFC using the equivalent $\gamma_\psi = +1$ realization.
        
        All phases $\zeta_a$ which respect the fusion rules have $\zeta_\psi = \pm 1$, so locality-respecting natural isomorphisms are required to have $\gamma_\psi = \pm 1$; note that this condition also preserves the constraint Eq.~\ref{eqn:UpsiEquals1} that $U(\psi,\psi;1)=+1$. There may or may not exist phases $\zeta_a$ which obey the fusion rules and have $\zeta_\psi = -1$. If such phases $\zeta_a$ do not exist, (in the language of Sec.~\ref{sec:supermodular} this means $K(\mathcal{C})=K_+(\mathcal{C})$), then the naive expectation holds: only natural isomorphisms with $\gamma_\psi = +1$ respect locality, and $\Aut(\mathcal{C}) \neq \Aut_{LR}(\mathcal{C})$. However, if such phases $\zeta_a$ do exist (in which case $K(\mathcal{C})/K_+(\mathcal{C})=\Z_2$), then all natural isomorphisms with $\gamma_\psi \in \Z_2$ are equivalent to one with $\gamma_\psi = +1$ and thus respect locality. In this case, $\Aut(\mathcal{C}) = \Aut_{LR}(\mathcal{C})$.
        
        In the former case where these phases $\zeta_a$ do not exist, there is a rather unfamiliar consequence that equivalence classes in $\Aut(\mathcal{C})$ are \textit{not} uniquely determined by the permutation action on the anyons. To see this, define
        \begin{equation}
            \Upsilon_\psi(\ket{a,b;c}) =\frac{\gamma_a \gamma_b}{\gamma_c}\ket{a,b,c}
        \end{equation}
        with $\gamma_a = +1$ for all $a \neq \psi$ and $\gamma_\psi = -1$. By construction, $\Upsilon_\psi$ has a trivial permutation action on the anyons, but if the aforementioned $\zeta_a$ do not exist, i.e., if $K(\mathcal{C})=K_+(\mathcal{C})$, then $\Upsilon_\psi$ is not a locality-respecting natural isomorphism and its equivalence class in $\Aut_{LR}(\mathcal{C})$ is therefore distinct from the class of the transformation which acts exactly as the identity.
        
        The map $\Upsilon_\psi$ has a natural interpretation as the action of fermion parity, since it inserts a factor of $(-1)$ for every local fermion in a state.
        
        For a general BFC $\mathcal{B}$, in many cases of physical interest, equivalence classes in $\Aut(\mathcal{B})$ modulo \textit{all} (possibly locality-violating) natural isomorphisms are uniquely determined by the way they permute the anyons. This property was proven explicitly in~\cite{benini2019} for theories with $N_{ab}^c\leq 1$ for all $a,b,c \in \mathcal{B}$ and where all $F$-symbols allowed by the fusion rules are nonzero. Some theories which do not have this property can be found in~\cite{davydov2014}.
        
        Suppose we have a theory in which given an autoequivalence $\rho$, all autoequivalences which have the same permutation action as $\rho$ are related to it by a possibly locality-violating natural isomorphism. Assuming $\rho$ satisfies the restriction $U(\psi,\psi;1)=+1$, the only (possibly) locality-violating natural isomorphisms that maintain $U(\psi,\psi;1)=+1$ have $\gamma_\psi=-1$, that is, they are related to $\Upsilon_\psi$ by a locality-respecting natural isomorphism. Therefore, if $\Upsilon_\psi$ respects locality, then locality-respecting equivalence classes in $\Aut(\mathcal{C})$ are uniquely determined by their permutation action. If $\Upsilon_\psi$ does not respect locality, then there are exactly two locality-respecting equivalence classes in $\Aut(\mathcal{C})$ for each permutation action; if $\rho$ is a representative of one such class, then $\Upsilon_\psi \circ \rho$ is a representative of the other class. In this case,
        \begin{equation}
            \Aut(\mathcal{C}) = \Aut_{LR}(\mathcal{C})/\Z_2,
        \end{equation}
        where $\Z_2$ is the subgroup of $\Aut_{LR}(\mathcal{C})$ generated by $[\Upsilon_\psi]$.
        
        We note that we have not proven that the constraints that we have found are exhaustive. Since the fermion $\psi$ is considered to be local, one could imagine a constraint of the sort $U_{\bf g}(a,\psi;a \times \psi)=+1$ for all $a$, not just $a\in \{1,\psi\}$, however we have not found any evidence that such a constraint should be required. If we did have such a more general constraint, then only one of $\rho_{\bf g}$ and $\Upsilon_\psi \circ \rho_{\bf g}$ would be allowed, in which case we would always have $\Aut_{LR}(\mathcal{C}) \simeq \Aut(\mathcal{C})$. In this case, then $[\rho]$ would again be a homomorphism into $\Aut(\mathcal{C})$.
        
        \subsubsection{Examples}        
        
        We presently explain some examples and special cases where it can be determined whether or not $\Upsilon_\psi$ respects locality. 
        
        If any minimal modular extension $\C$ of $\mathcal{C}$ contains an Abelian fermion parity vortex $v$, then $\Upsilon_\psi$ respects locality. Specifically, we can define $\zeta_a = M_{a,v}$, as $M_{a,v} \in \U$ respects the fusion rules of $\C$ and therefore also respects the fusion rules of $\mathcal{C}$. Since $v$ is a fermion parity vortex, $\zeta_\psi=-1$. Clearly this case includes all $\mathcal{C}$ of the form $\mathcal{C} = \lbrace 1,\psi \rbrace \boxtimes \mathcal{B}$ for modular $\mathcal{B}$. 
        
        We prove in Appendix~\ref{app:AbelianParityVortex} that the converse of the above statement is true as well, so that $\Upsilon_\psi$ respects locality if and only if some minimal modular extension $\C$ of $\mathcal{C}$ contains an Abelian fermion parity vortex $v$. 
        
        One physical situation where $\Upsilon_\psi$ does not respect locality is whenever $\mathcal{C}$ contains a fusion of the form $a \times b = c + (c \times \psi) + \cdots$, that is, $N_{ab}^c = N_{ab}^{c \times \psi} > 0 $. We do not know whether or not this condition is necessary for $\Upsilon_\psi$ to violate locality, but the proof that it is sufficient is straightforward. Assume such a fusion rule exists; then any phase $\zeta_a$ which obeys the fusion rules must obey
        \begin{equation}
            \zeta_c = \zeta_a \zeta_b = \zeta_{c \times \psi} = \zeta_c \zeta_\psi
        \end{equation}
        Therefore $\zeta_\psi = +1$. An example where this occurs is the theory describing the anyon content of $\SO(3)_3$ Chern-Simons theory, which we shall simply call\footnote{In some references in the condensed matter literature, e.g., \cite{Fidkowski13}, this BFC is named $\SO(3)_6$ because it is the integer spin sector of the anyons in $\mathrm{SU}(2)_6$ Chern-Simons theory.}  $\mathcal{C}=\SO(3)_3$ (see, e.g., \cite{Fidkowski13} for the explicit BFC data) which has a fusion rule $s \times s = 1+s+\tilde{s}$ with $\tilde{s}=s\times \psi$.
        
        In the case where $\mathcal{C}$ contains a fusion rule where $N_{ab}^c=N_{ab}^{c \times \psi} > 0$, then the quantity
        \begin{equation}
            \Lambda_{ab}^c = U(c,\psi;c\times\psi)U(a,b;c)U^{-1}(a,b;c\times \psi)
            \label{eqn:Lambda}
        \end{equation}
        is nonzero and gauge-invariant within $\Aut_{LR}(\mathcal{C})$. It is straightforward to check that if the autoequivalence is changed from $\rho$ to $\Upsilon_\psi \circ \rho$, then $\Lambda_{ab}^c$ changes to $-\Lambda_{ab}^c$; $\Lambda_{ab}^c$ is therefore an invariant that distinguishes two elements of $\Aut_{LR}(\mathcal{C})$ with the same permutation action. 
        
        \subsubsection{Summary}
        
        To summarize, the following are equivalent for a super-modular category $\mathcal{C}$:
        \begin{itemize}
            \item $\Upsilon_\psi$ respects locality
            \item Some minimal modular extension $\C$ of $\mathcal{C}$ contains an Abelian fermion parity vortex
            \item $\Aut(\mathcal{C}) \simeq \Aut_{LR}(\mathcal{C})$
            \item If elements of $\Aut(\mathcal{C})$ are uniquely determined by their permutation action on the anyons, then so are elements of $\Aut_{LR}(\mathcal{C})$.
            \item $K(\mathcal{C})/K_+(\mathcal{C}) \simeq \Z_2$
            \item There exists a set of phases $\zeta_a$ which obey the fusion rules and have $\zeta_\psi = -1$.
        \end{itemize}
        
        Conversely, the following are also equivalent:
        \begin{itemize}
            \item $\Upsilon_\psi$ violates locality
            \item $\Aut_{LR}(\mathcal{C})/\Z_2 \simeq \Aut(\mathcal{C})$
            \item If elements of $\Aut(\mathcal{C})$ are uniquely determined by their permutation action on the anyons, exactly two elements of $\Aut_{LR}(\mathcal{C})$ have the same permutation action on the anyons.
            \item $K(\mathcal{C}) \simeq K_+(\mathcal{C})$
            \item Any set of phases $\zeta_a$ which obey the fusion rules must have $\zeta_\psi = +1$.
        \end{itemize}
        
\section{Obstructions and classification of symmetry fractionalization}
        \label{sec:obstructions}
        
        We now consider the obstructions to fractionalizing $G_f$ on $\mathcal{C}$. There are two such obstructions. The first is an obstruction to defining any symmetry fractionalization of $G_b$ as a bosonic symmetry group on the super-modular tensor category; this obstruction $[\Omega] \in \H^3(G_b,K(\mathcal{C}))$. We will call this the ``bosonic obstruction" since it is independent of the extension $G_f$ of $G_b$. One can think of this as an obstruction to the symmetry localization ansatz of Eq. \ref{eqn:symmFracAnsatz}, while ignoring the locality of the fermion (that is, ignoring the constraints discussed in Section \ref{fermSymLocSec}). 
        
        Assuming that the bosonic obstruction vanishes, the second obstruction, which we will call the fermionic obstruction, is to finding a symmetry fractionalization pattern which obeys the constraint $\eta_\psi = \omega_2$. If $\Upsilon_\psi$ respects locality, then the fermionic obstruction $[\coho{O}_f]\in  \H^3(G_b,\A/\{1,\psi\})$. If $\Upsilon_\psi$ does not respect locality, then the fermionic obstruction is $\coho{O}_f \in Z^2(G_b,\Z_2)$.
        
        The fermionic obstruction has appeared in the math literature in~\cite{galindo}; we explicitly incorporate the locality restrictions and give a physical understanding of these obstructions.
        
        If both the bosonic and fermionic symmetry localization obstructions vanish, the symmetry localization ansatz is well-defined and is compatible with the locality of the fermion, which implies that there exists some well-defined symmetry fractionalization pattern. We can then classify distinct symmetry fractionalization patterns; we find that these patterns form a torsor over $\H^2(G_b,\A/\{1,\psi\})$, i.e., that different patterns are related to each other by an element of $\H^2(G_b,\A/\{1,\psi\})$ but there is not in general a canonical identification of symmetry fractionalization patterns with cohomology classes.
        
        Before proceeding, recall from the previous subsection that $\Upsilon_\psi$ respects locality if and only if there are phases $\zeta_a \in \U$ which respect the fusion rules and for which $\zeta_\psi = -1$, that is, if $K(\mathcal{C})/K_+(\mathcal{C})=\Z_2$. In what follows, we will only directly use the (non-)existence of such phases rather than explicitly using $\Upsilon_\psi$.
        
        \subsection{Defining the cohomology class of the bosonic obstruction}
        \label{sec:cohomologyClass}
        
        In this subsection we will show how the map
        \begin{align}
        [\rho] : G_b \rightarrow \text{Aut}_{LR}(\mathcal{C})   
        \end{align}
        determines an element $[\Omega] \in \mathcal{H}^3(G_b, K(\mathcal{C}))$. We will provide the interpretation of $[\Omega]$ as an obstruction to symmetry localization in Section~\ref{symLocObs}. 
        
        Choose a representative $\rho_{\bf g}$ of $[\rho_{\bf g}]$. Recall that the natural isomorphisms $\kappa_{\bf g,h}$ are defined by Eq.~\ref{eqn:kappaghDef} and can be decomposed
        \begin{equation}
            \kappa_{\bf g,h}(a,b;c) = \frac{\beta_a({\bf g,h})\beta_b({\bf g,h})}{\beta_c({\bf g,h})}
        \end{equation}
        for phases $\beta$, where $\kappa_{\bf g,h}(a,b;c)$ is the action of $\kappa_{\bf g,h}$ on an $\ket{a,b;c}$ fusion vertex.
        
        Demanding that the two ways to decompose $\rho_{\bf ghk}$ are consistent leads to the condition
        \begin{equation}
            \kappa_{\bf g,hk}\rho_{\bf g}\kappa_{\bf h,k}\rho_{\bf g}^{-1} = \kappa_{\bf gh,k}\kappa_{\bf g,h}
            \label{eqn:kappaConsistency}
        \end{equation}
        Define
        \begin{equation}
            \Omega_a({\bf g,h,k}) = \frac{\beta_{\,^{\overline{\bf g}}a}^{\sigma({\bf g})}({\bf h,k})\beta_a({\bf g,hk})}{\beta_a({\bf g,h})\beta_a({\bf gh,k})}
            \label{eqn:OmegaFromBeta}
        \end{equation}
        By definition $\Omega_a$ is a $U(1)$ valued $3$-cochain: $\Omega_a \in C^3(G_b,\U)$.
        Applying Eq.~\eqref{eqn:kappaConsistency} to a state $\ket{a,b;c}$, we immediately find
        \begin{equation}
            \Omega_a \Omega_b = \Omega_c
        \end{equation}
        whenever $N_{ab}^c \neq 0$ (so that $\ket{a,b;c}$ is a nonzero state). Letting $a$ vary, then, we have $\Omega \in C^3(G_b,K(\mathcal{C}))$, with a group action\footnote{Thanks to the presence of $\overline{\bf g}$ rather than ${\bf g}$, this is actually a right group action of $G_b$ on $K(\mathcal{C})$ rather than a more conventional left group action.} which takes $\Omega_a \rightarrow \Omega_{\,^{\overline{\bf g}}a}$. By direct computation,
        \begin{equation}
            \frac{\Omega_{\,^{\overline{\bf g}}a}^{\sigma({\bf g})}({\bf h,k,l})\Omega_a({\bf g,hk,l})\Omega_a({\bf g,h,k})}{\Omega_a({\bf gh,k,l})\Omega_a({\bf g,h,kl})} = 1
            \label{eqn:dOmega}
        \end{equation}
        Hence $\Omega \in Z^3(G_b,K(\mathcal{C}))$. It is straightforward to check that $\Omega$ is invariant under symmetry gauge transformations Eq.~\ref{eqn:betaGammaTransform} and is therefore independent of the choice of representative $\rho_{\bf g}$.
        
        There is additional gauge freedom in $\Omega$ which arises from the gauge freedom in $\beta$; we may redefine $\beta_a({\bf g,h}) \rightarrow \beta_a({\bf g,h}) \nu_a({\bf g,h})$ for any phases $\nu_a({\bf g,h})$ which obey the fusion rules, that is, for $\nu \in C^2(G_b,K(\mathcal{C}))$. Inserting into Eq.~\ref{eqn:OmegaFromBeta}, we find that this modifies $\Omega \rightarrow \Omega d\nu$, that is, $\Omega$ is ambiguous by an element of $B^3(G_b,K(\mathcal{C}))$. Hence $[\Omega] \in \H^3(G_b,K(\mathcal{C}))$ is a well-defined cohomology class.
        
        We can go a bit further if $\Upsilon_\psi$ violates locality. Then $K(\mathcal{C})=K_+(\mathcal{C})\simeq \A/\{1,\psi\}$; in this case, we know that $\Omega_\psi = +1$ and thus we can write
        \begin{equation}
            \Omega_a({\bf g,h,k})=M_{a,\cohosub{O}_b({\bf g,h,k})}
            \label{eqn:OmegaO}
        \end{equation}
        for some $\coho{O}_b \in C^3(G_b,\A/\{1,\psi\})$. Inserting Eq.~\eqref{eqn:OmegaO} into Eq.~\eqref{eqn:dOmega},
        \begin{align}
            1 &= M^{\sigma({\bf g})}_{\,^{\overline{\bf g}}a,\cohosub{O}_b({\bf h,k,l})}M_{a,\cohosub{O}_b({\bf g,hk,l})}M_{a,\cohosub{O}_b({\bf g,h,k})}M^{\ast}_{a,\cohosub{O}_b({\bf gh,k,l})}M^{\ast}_{a,\cohosub{O}_b({\bf g,h,kl})}\\
            &= M_{a,\,^{\bf g}\cohosub{O}_b({\bf h,k,l})}M_{a,\cohosub{O}_b({\bf g,hk,l})}M_{a,\cohosub{O}_b({\bf g,h,k})}M_{a,\cohosub{O}_b({\bf gh,k,l})}M_{a,\overline{\cohosub{O}_b({\bf g,h,kl})}}\\
            &= M_{a,\,^{\bf g}\cohosub{O}_b({\bf h,k,l}) \times \cohosub{O}_b({\bf g,hk,l}) \times \cohosub{O}_b({\bf g,h,k}) \times \overline{\cohosub{O}_b({\bf gh,k,l})} \times \overline{\cohosub{O}_b({\bf g,h,kl})}}\\
            &= M_{a,d\cohosub{O}_b({\bf g,h,k,l})}, \label{eqn:MaDOmega}
        \end{align}
        for all $a \in \mathcal{C}$. Here we have
        used the symmetry properties of the $S$-matrix and the fact that if $M_{ab}$ is always a phase, then $M_{ab}M_{ac}=M_{ad}$ whenever $N_{bc}^d \neq 0$. Because we are only considering braiding of $\O_b$ with elements of $\mathcal{C}$ and \textit{not} a modular extension, at every step in this process, we could have freely inserted a fermion into any of the $\O_b$ or into the overall fusion product. Hence $d\O_b$ is completely ambiguous by a fermion. 
        
        Super-modularity converts Eq.~\ref{eqn:MaDOmega} into
        \begin{equation}
            d\O_b({\bf g,h,k,l}) \in \{1,\psi\}
            \label{eqn:dOTriviality}
        \end{equation}
        
        Hence $d\O_b =1$ as an element of $\A/\{1,\psi\}$; that is, $\O_b \in Z^3(G_b,\A/\{1,\psi\})$. A similar calculation shows that the coboundary ambiguity in $\Omega$ leads to a coboundary ambiguity in $\coho{O}_b$. The conclusion is that if $\Upsilon_\psi$ violates locality, then $[\coho{O}_b]\in\H^3(G_b,\A/\{1,\psi\})$ is a well-defined cohomology class.
        
        If $\Upsilon_\psi$ respects locality, then in general we cannot say anything further than the above. We will show in Sec.~\ref{symLocObs} that it is possible to choose a cocycle representative $\Omega$ such that $\Omega \in Z^3(G_b,K_+(\mathcal{C}))$, but as we will see, the class $[\Omega] \in \H^3(G_b,K_+(\mathcal{C}))$ can in general be nontrivial even if there is no symmetry localization obstruction; the actual obstruction is the class $[\Omega] \in \H^3(G_b,K(\mathcal{C}))$.
        
        \subsection{Symmetry localization obstructions}
        \label{symLocObs}
        
        Now we determine the obstructions to symmetry localization on $\mathcal{C}$.
        
        As reviewed in Section \ref{sec:bosonicSymmFrac}, localizing the symmetry on $\mathcal{C}$ amounts to choosing a set of phases $\omega_a({\bf g,h})$ which obey the fusion rules and 
        \begin{equation}
            \Omega_a({\bf g,h,k})=\omega_{\,^{\overline{\bf g}}a}({\bf h,k})\omega_a({\bf gh,k})^{-1}\omega_a({\bf g,hk})\omega_a({\bf g,h})^{-1}.
            \label{eqn:OmegaEqualsDomega}
        \end{equation}
        
        Recall that the symmetry fractionalization data $\eta$ is defined using Eq.~\ref{eqn:omegaInTermsOfBetaEta}, and that fermionic symmetry fractionalization means we require $\eta_\psi({\bf g,h})=\omega_2({\bf g,h})$.

        The bosonic symmetry localization obstruction is the obstruction to finding any solution of Eq.~\ref{eqn:OmegaEqualsDomega}, which we may reinterpret as the condition
        \begin{equation}
            \Omega = d\omega
        \end{equation}
        for $\Omega \in Z^3(G_b,K(\mathcal{C}))$ and $\omega \in C^2(G_b,K(\mathcal{C}))$. That is, $[\Omega]\in \H^3(G_b,K(\mathcal{C}))$ is the bosonic symmetry localization obstruction. To characterize this obstruction further and to understand the fermionic symmetry localization obstruction, we proceed in two cases, depending on the locality of $\Upsilon_\psi$.
        
        \subsubsection{Case: \texorpdfstring{$\Upsilon_\psi$}{Ypsi} does not respect locality}
    
        If $\Upsilon_\psi$ does not respect locality, we have $K(\mathcal{C})=K_+(\mathcal{C})$ and, accordingly,
        \begin{equation}
            \Omega_\psi=+1.
            \label{eqn:omegaPsi1}
        \end{equation} If a solution $\omega_a$ to Eq.~\ref{eqn:OmegaEqualsDomega} exists, then we must have $\omega_\psi = +1$ as well because $\omega_a$ obeys the fusion rules. Hence we may write
        \begin{equation}
            \omega_a({\bf g,h}) = M_{a,\mathfrak{w}({\bf g,h})}
        \end{equation}
        for all $a \in \mathcal{C}$ for some $\mathfrak{w} \in \A$, where again $\mathfrak{w}$ is ambiguous by a fermion. Substituting into Eq.~\eqref{eqn:OmegaEqualsDomega}, we find the usual requirement
        \begin{equation}
            M_{a,\cohosub{O}_b}=M_{a,d\cohosub{w}},
        \end{equation}
        Hence, $\O_b=d\coho{w}$, modulo a fermion, that is, we must have $[\O_b] = 0 \in \mathcal{H}^3(G_b,\A/\{1,\psi\})$ in order to have symmetry fractionalization.
        If $[\coho{O}_b]=0$, then by definition there exists such a $\coho{w}$, so there is no additional bosonic obstruction.
        
        To understand the fermionic symmetry localization obstruction, we must attempt to enforce the condition $\eta_\psi = \omega_2$. Recall from Sec.~\ref{sec:cohomologyClass} that $\beta_\psi \in \Z_2$ is gauge-invariant when $\Upsilon_\psi$ does not respect locality. Suppose that some solution $\omega_a$ of Eq.~\ref{eqn:OmegaEqualsDomega} exists, that is, the bosonic obstruction vanishes; it automatically has $\omega_\psi = +1$ as mentioned above. Then using Eqs.~\ref{eqn:omegaInTermsOfBetaEta},~\ref{eqn:omegaPsi1}, we have $\eta_\psi = \beta_\psi/\omega_\psi = \beta_\psi$. Hence, since $\eta_\psi \in Z^2(G_b,\Z_2)$,
        \begin{equation}
        \coho{O}_f = \beta_\psi/\omega_2 \in Z^2(G_b,\Z_2)
        \label{eqn:fermionicObstructionUpsilonViolates}
        \end{equation} 
        is the obstruction to imposing the fermionic symmetry fractionalization condition $\eta_\psi = \omega_2$. If $\beta_\psi/\omega_2=+1$, then we automatically have $\eta_\psi = \omega_2$ and the symmetry fractionalization pattern accounts correctly for the fermionic symmetry.
        
        Note also that, by definition, $\beta_\psi({\bf g,h})=-1$ implies $[\kappa_{\bf g,h}] = [\Upsilon_\psi]$. Therefore, if the fermionic obstruction vanishes,
        \begin{equation}
            [\kappa_{\bf g,h}] = [\Upsilon_\psi]^{(1-\omega_2({\bf g,h}))/2}.
            \label{eqn:kappaUpsilonPsi}
        \end{equation}
        We see that when $\Upsilon_\psi$ violates locality, $[\rho_{\bf g}]:G_b \rightarrow \Aut_{LR}(\mathcal{C})$ can be a group homomorphism without a symmetry fractionalization obstruction only if $\omega_2({\bf g,h}) = +1$.\footnote{The fact that Eq. \ref{eqn:fermionicObstructionUpsilonViolates} implies a constraint relating $\omega_2$ and $[\kappa_{\bf g,h}]$ in the case where $\Upsilon_\psi$ is locality-violating, as summarized in Eq. \ref{eqn:kappaUpsilonPsi}, was also noted previously in \cite{aasen21ferm}.
        Ref.~\cite{aasen21ferm} further observed that a somewhat looser version of Eq.~\ref{eqn:kappaUpsilonPsi} can also be derived when $[\kappa_{\bf g,h}] = [\Upsilon_\psi]$ for some ${\bf g, h}$ by demanding that there exists an (unconstrained) lift of the map $[\rho_{\bf g}]:G_b \rightarrow \Aut_{LR}(\mathcal{C})$ to a group homomorphism $G_f \rightarrow \Aut_{LR}(\mathcal{C})$. We do not enforce such a requirement explicitly, although as stated above the vanishing of $\coho{O}_f$ implies the existence of such a lift with $(-1)^F$ mapping to $[\Upsilon_\psi]$.}
        
        Furthermore, when this fermionic obstruction vanishes, we see that $[\rho_{\bf g}]$ lifts to a group homomorphism $G_f \rightarrow \Aut_{LR}(\mathcal{C})$ such that $(-1)^F$ maps to $[\Upsilon_\psi]$. Therefore, when $\Upsilon_\psi$ violates locality, we can view $\coho{O}_f$ as obstructing the existence of such a lift.
        
        The above fermionic obstruction was also found in the mathematical context of categorical fermionic actions in Ref.~\cite{galindo}, where it was considered to be an element of $\H^2(G_b,\Z_2)$. The question of whether to mod out by 2-coboundaries arises upon consideration of what data is considered given. If one is given only $[\omega_2] \in \H^2(G_b,\Z_2)$, then there is freedom to simply choose a different $\omega_2$ in the cohomology class and we should consider $[\O_f] \in \H^2(G_b,\Z_2)$. As discussed in Sec.~\ref{sec:fermionicSymm}, this choice amounts to a different decomposition of $G_f$ into $G_b\times \Z_2^f$ as sets. Furthermore, if one is only given a map $[\rho_{\bf g}]:G_b \rightarrow \Aut(\mathcal{C})$, then there is freedom to modify $\rho_{\bf g}$ by $\Upsilon_\psi$, which modifies $\O_f$ by a coboundary. In this case, again the obstruction is $[\O_f] \in \H^2(G_b,\Z_2)$. However, as we discussed in Sec.~\ref{sec:fermionicSymm}, a complete specification of a quantum many-body system and its symmetries fixes both a cochain representative of $\omega_2$ and determines $[\rho_{\bf g}]:G \rightarrow \Aut_{LR}(\mathcal{C})$. In this case $\omega_2$ and $\beta_\psi$ have no further gauge freedom. Thus there is no further freedom to change $\O_f$ by a coboundary, and so we take the obstruction to be valued in $Z^2(G_b, \Z_2)$. If one instead were interested in whether there is \textit{any} physical realization of an abstract symmetry group $G_f$ acting on $\mathcal{C}$, the obstruction would be a cohomology-level obstruction.
        
        An example where this fermionic obstruction occurs is in $\SO(3)_3$ Chern-Simons theory with $G_f = \Z_2^{\bf T}\times \Z_2^f$. There is a unique permutation action of the anyons under time-reversal. One can check directly that there is only a valid symmetry fractionalization pattern with $\beta_\psi = \eta_\psi({\bf T,T})=-1$, which would require $G_f = \Z_4^{{\bf T}, f}$. We explain this in more detail in Sec.~\ref{sec:SO33}.
        
        Remarkably, when $\Upsilon_\psi$ violates locality, given a $G_b$ and $[\rho_{\bf g}]$, there is always at most one group extension $G_f$ of $G_b$ which can be unobstructed. This follows from the fact that, given a $[\rho_{\bf g}]$, $\beta_\psi$ is gauge-invariant; the only gauge freedom in $\beta_a$ is to modify $\beta_a \rightarrow \beta_a \nu_a$ where $\nu_a$ obeys the fusion rules, and since $\Upsilon_\psi$ violates locality, $\nu_\psi = +1$. Hence the only possible unobstructed $G_f$ has $\omega_2 = \beta_\psi$.
        
        \subsubsection{Case: \texorpdfstring{$\Upsilon_\psi$}{Ypsi} respects locality}
        
        If $\Upsilon_\psi$ respects locality, then $K(\mathcal{C})/K_+(\mathcal{C})=\Z_2$, and we cannot generically say anything further about the bosonic symmetry localization obstruction; it is simply $[\Omega] \in \H^3(G_b,K(\mathcal{C}))$.
        
        Suppose the bosonic obstruction vanishes so that there exists some consistent $G_b$ symmetry fractionalization data $\eta_a$; we need to enforce the fermionic constraint $\eta_\psi=\omega_2$. As in the bosonic case, all consistent (bosonic) symmetry fractionalization patterns can be obtained from a given pattern $\eta^{(0)}_a$ via
        \begin{equation}
            \eta_a({\bf g,h})=\eta^{(0)}_a({\bf g,h})\tau_a({\bf g,h})
        \end{equation}
        where $\tau_a$ obeys the fusion rules and additionally obeys
        \begin{equation}
            \tau_{\,^{\overline{\bf g}}a}({\bf h,k})\tau_a({\bf g,hk})=\tau_a({\bf g,h})\tau_a({\bf gh,k}). \label{eqn:tauCocycle}
        \end{equation}
        
        Given the symmetry fractionalization pattern $\eta_a^{(0)}$, we may attempt to obtain one which obeys the fermionic constraint $\eta_\psi = \omega_2$ by simply choosing $\tau_a({\bf g,h})$ to be any phase which obeys the fusion rules and
        \begin{equation}
            \frac{\omega_2({\bf g,h})}{\eta_\psi^{(0)}({\bf g,h})} = \tau_\psi({\bf g,h}).\label{eqn:tauPsiConstraint}
        \end{equation}
        Such a $\tau_a$ will always exist, but it may not obey Eq.~\ref{eqn:tauCocycle}. Define
        \begin{equation}
            T_a({\bf g,h,k}) = \tau_{\,^{\overline{\bf g}}a}({\bf h,k})\tau_a({\bf g,hk})\tau_a^{-1}({\bf g,h})\tau_a^{-1}({\bf gh,k})
            \label{eqn:TaDefinition}            
        \end{equation}
        Obviously Eq.~\ref{eqn:tauCocycle} is equivalent to $T_a = 1$, and it is also clear that $T_a$ is a phase which obeys the fusion rules (since the same holds for $\tau_a$). Furthermore, $\omega_2$ and $\eta_\psi^{(0)}$ are both elements of $Z^2(G_b,\Z_2)$; hence $T_\psi = 1$, that is, $T \in K_+(\mathcal{C})$. We therefore conclude that
        \begin{equation}
            T_a({\bf g,h,k})=M_{a,\cohosub{O}_f({\bf g,h,k})}
        \end{equation}
        for some $\O_f \in C^3(G_b,\A/\{1,\psi\})$. We find with a direct computation that $dT_a = 1$, which implies $\O_f \in Z^3(G_b,\A/\{1,\psi\})$.
        
        Our desired condition $T_a = 1$ would force $\O_f \in \{1,\psi\}$; this will not be satisfied in general. However, we could have chosen another $\tau_a$ which satisfies Eq.~\ref{eqn:tauPsiConstraint}; clearly all such $\tau_a$ are obtained by modifying $\tau_a({\bf g, h})\rightarrow \tau_a({\bf g,h}) \lambda_a({\bf g,h})$ where $\lambda_a$ obeys the fusion rules and, crucially, $\lambda_\psi({\bf g,h})=+1$. Hence $\lambda \in C^2(G_b,K_+(\mathcal{C}))$. Such a $\lambda_a$ must be of the form
        \begin{equation}
            \lambda_a({\bf g,h}) = M_{a,\cohosub{v}({\bf g,h})}
        \end{equation}
        with $\coho{v}({\bf g,h})\in C^2(G_b,\A/\{1,\psi\})$. This change to $\tau_a$ modifies $\O_f \rightarrow \O_f \times d\coho{v}$. Therefore, as long as $[\O_f] = 0 \in \H^3(G_b,\A/\{1,\psi\})$, there exists some choice of $\coho{v}$ which will trivialize $T_a$, that is, produce the desired $\tau_a$. Hence $[\O_f] \in \H^3(G_b,\A/\{1,\psi\})$ is the obstruction to fractionalizing $G_f$ on $\mathcal{C}$.
        
        To summarize, if $\Upsilon_\psi$ respects locality, the bosonic obstruction is $[\Omega] \in \H^3(G_b,K(\mathcal{C}))$, while the fermionic obstruction is $[\coho{O}_f]\in H^3(G_b,\A/\{1,\psi\})$. We give an example of a theory with a trivial bosonic symmetry localization obstruction but a nontrivial fermionic symmetry localization obstruction in Sec.~\ref{sec:Sp22}, namely $\mathcal{C} = \mathrm{Sp}(2)_2 \times \{1,\psi\}$ with a particular action of $G_f = \Z_2^{\bf T} \times \Z_2^f$.
        
        \subsubsection{Technical aside on gauge-fixing when \texorpdfstring{$\Upsilon_\psi$}{Ypsi} respects locality}
        
        Suppose that $\Upsilon_\psi$ respects locality and we are given a particular representation $\beta_a({\bf g,h})$ of $\kappa_{\bf g,h}$ as a natural isomorphism. Since $\Upsilon_\psi$ respects locality, $\Aut(\mathcal{C}) = \Aut_{LR}(\mathcal{C})$ and so there exists a gauge transformation $\nu \in C^2(G_b,K(\mathcal{C}))$ such that the gauge-transformed $\beta_a$ obeys $\beta_\psi = +1$. In this gauge, Eq.~\ref{eqn:OmegaFromBeta} immediately implies $\Omega_\psi = +1$, that is, $\Omega \in Z^3(G_b,K_+(\mathcal{C}))$. Certainly if $[\Omega] =0 \in \H^3(G_b,K_+(\mathcal{C}))$ then it is also true that $[\Omega]=0\in \H^3(G_b,K(\mathcal{C}))$. However, it may be that $[\Omega] \neq 0 \in \H^3(G_b,K_+(\mathcal{C}))$ but $[\Omega]=0 \in \H^3(G_b,K(\mathcal{C}))$, so the element $[\Omega] \in \H^3(G_b,K_+(\mathcal{C}))$ is \textit{not} the bosonic symmetry localization obstruction. That is, it may be that despite the gauge-fixing $\Omega_\psi = +1$, any solution of $\Omega=d\omega$ for the particular representative $\Omega$ will necessarily have some $\omega_\psi({\bf g,h})=-1$.
        
        We can rephrase the above more precisely. If $\Upsilon_\psi$ respects locality, then there is a short exact sequence
        \begin{equation}
            1 \rightarrow K_+(\mathcal{C}) \stackrel{i}{\rightarrow} K(\mathcal{C}) \stackrel{r_\psi}{\rightarrow} \Z_2 \rightarrow 1 
        \end{equation}
        where $i$ is the inclusion map and $r_\psi$ is the restriction of a set of phases $\zeta_a$ to $a=\psi$. This leads to a long exact sequence in cohomology, where the relevant piece is
        \begin{equation}
            \cdots \rightarrow \H^2(G_b,\Z_2) \stackrel{\delta}{\rightarrow} \H^3(G_b,K_+(\mathcal{C})) \stackrel{i^\ast}{\rightarrow} \H^3(G_b,K(\mathcal{C})) \stackrel{r_\psi^\ast}{\rightarrow} \H^3(G_b,\Z_2) \rightarrow \cdots
        \end{equation}
        where $\delta$ is the connecting homomorphism. The locality constraint Eq.~\ref{eqn:UpsiEquals1} implies $\beta_\psi \in \Z_2$, in particular that $\beta_\psi \times \beta_\psi = \beta_{\psi \times \psi} = \beta_1 = +1$. Inserting into the definition Eq.~\ref{eqn:OmegaFromBeta} implies $\Omega_\psi = d\beta_\psi$, that is, $r_\psi^\ast([\Omega])=1$. Hence $[\Omega] \in \ker r_\psi^\ast = \mathrm{im} \text{ } i^{\ast}$. A particular choice of the gauge-fixing procedure above amounts to a choice of a particular element in $(i^{\ast})^{-1}([\Omega])$. If $\ker i^\ast$ is trivial, then we can safely conclude that $(i^{\ast})^{-1}([\Omega]) \in \H^3(G_b,K_+(\mathcal{C})) \sim \H^3(G_b,\A/\{1,\psi\})$ is uniquely defined and can therefore also be regarded as the bosonic obstruction. However, $i^{\ast}$ need not be injective; in fact $\ker i^\ast = \mathrm{im }\text{ } \delta$. If $i^\ast$ is not injective, then even if $[\Omega]$ is trivial our gauge-fixing procedure may set $(i^{\ast})^{-1}([\Omega])$ to a nontrivial cohomology class in $\H^3(G_b,K_+(\mathcal{C}))$. Hence we must use $[\Omega] \in \H^3(G_b,K(\mathcal{C}))$ as the bosonic obstruction.
        
        \subsection{Fermionic obstruction as an obstruction to a lift}
        
        There is another perspective on the fermionic symmetry localization obstruction which is more general and is conceptually closely related to the viewpoint of Ref.~\cite{galindo}. 
        
        Consider the following general problem: let $\mathcal{B}$ be any UBFC with some subcategory $\mathcal{B}'\subset \mathcal{B}$. Given a homomorphism $[\rho_{\bf g}]: G \rightarrow \Aut(\mathcal{B})$ (or $\Aut_{LR}(\mathcal{B})$, as appropriate) which preserves $\mathcal{B}'$, we can restrict $[\rho_{\bf g}]$ to $\mathcal{B}'$. Suppose that there exists a symmetry fractionalization pattern for this restricted homomorphism. Then what is the obstruction to lifting the symmetry fractionalization pattern to all of $\mathcal{B}$, i.e. defining a symmetry fractionalization pattern on $\mathcal{B}$ which restricts to the given one on $\mathcal{B}'$?
        
        This problem was considered in Ref.~\cite{fidkowski2018}, wherein it was explained that taking $\mathcal{B}'=\{1\}$ and $\mathcal{B}=\mathcal{C}$ produces the usual bosonic symmetry localization obstruction, while taking $\mathcal{B}'=\mathcal{C}$ and $\mathcal{B}=\C$ produces the $\H^3(G_b,\Z_2)$ anomaly associated to fermionic SETs. Ref.~\cite{fidkowski2018} showed that if $\mathcal{B}$ is modular, then the obstruction is valued in $\H^3(G,T)$ where $T \subset \mathcal{B}$ are the Abelian anyons which braid trivially with all of $\mathcal{B}'$. Those results can be generalized straightforwardly to non-modular $\mathcal{B}$; the primary change is that $T$ is replaced by $T/\mathcal{E}$, where $\mathcal{E}$ is the set of anyons which are transparent to all of $\mathcal{B}$. However, the derivation of Ref.~\cite{fidkowski2018} contains an assumption, which does not hold in general, that one must be able to write
        \begin{equation}
            \omega_{b'}({\bf g,h})=M_{b',\cohosub{w}({\bf g,h})},
            \label{eqn:fidkowskiAssumption}
        \end{equation}
        for all $b' \in \mathcal{B}'$, with $\coho{w}\in \mathcal{A}$ where $\mathcal{A}$ is the set of Abelian anyons in $\mathcal{B}$. 
        
        Our fermionic symmetry localization obstruction is a special case of the above; we are specifying symmetry fractionalization on $\mathcal{B}' = \{1,\psi\}$ with $\eta_\psi = \omega_2$ and asking if this symmetry fractionalization can be lifted to  $\mathcal{B}=\mathcal{C}$. The assumption Eq.~\ref{eqn:fidkowskiAssumption} fails if $\omega_\psi({\bf g,h})$ is not uniformly $+1$. As derived in Sec.~\ref{sec:LRNatIso}, the assumption that $\Upsilon_\psi$ respects locality actually means that we can choose a gauge where the assumption $\omega_\psi = +1$ holds. On the other hand, if $\Upsilon_\psi$ violates locality, there is no such gauge-fixing allowed, so the assumption is violated in general, and accordingly the obstruction is valued in a completely different cocycle (or cohomology) group.
        
        \subsection{Classification of symmetry fractionalization}
        \label{subsubsec:H2Torsor}
        
        Suppose we have two valid patterns of symmetry fractionalization given by $\omega_a(\bf{g,h})$ and $\omega_a'({\bf g,h})$. Then we can define
        \begin{equation}
           \tau_a({\bf g,h}) = \omega_a'({\bf g,h})\omega_a({\bf g,h})^{-1}
        \end{equation}
        Since $\omega_\psi = \omega_2$, we must have
        \begin{equation}
            \tau_\psi({\bf g,h})= +1
        \end{equation}
        and $\tau_a$ must obey the fusion rules. Hence we can write
        \begin{equation}
            \tau_a = M_{a,\cohosub{t}({\bf g,h})}
        \end{equation}
        where $\coho{t} \in C^2(G_b,\A/\{1,\psi\})$. Using the fact that $\omega$ and $\omega'$ both obey Eq.~\ref{eqn:OmegaEqualsDomega}, it is straightforward to check that $d\coho{t} \in \{1,\psi\}$, and therefore $\coho{t} \in Z^2(G_b,\A/\{1,\psi\})$.
  
        Note that, as discussed above, if $\Upsilon_\psi$ violates locality, then we can write 
        \begin{align}
            \omega_a({\bf g},{\bf h}) = M_{a,\cohosub{w}({\bf g},{\bf h})} 
        \end{align}
        for $\coho{w} \in \mathcal{A}/\{1,\psi\}$. If instead $\Upsilon_\psi$ respects locality, then there exists some minimal modular extension $\widecheck{C}$ such that the above equation continues to hold, with $\coho{w}({\bf g},{\bf h}) \in \widecheck{\mathcal{A}}/\{1,\psi\}$, where $\widecheck{\mathcal{A}}$ consists of the Abelian anyons of $\widecheck{C}$. 
        
        As in the bosonic case, there is gauge freedom; we may redefine the local operators $U_{\bf g}^{(i)}$ by a local unitary operator $Z_{\bf g}^{(i)}$ such that
        \begin{equation}
            \prod_{j=1}^n Z_{\bf g}^{(j)} = \mathbbm{1}
            \label{eqn:ZgjEquals1}
        \end{equation}
        on an $n$-quasiparticle state, \textit{provided that} the constraint Eq.~\ref{eqn:localTopologicalFermionConsistency} is maintained. This constraint forces $Z_{\bf g}^{(i)}$ to act trivially on states with topological charge $\psi$ in region $i$. As in the bosonic case~\cite{barkeshli2019}, Eq.~\ref{eqn:ZgjEquals1} means that each $Z_{\bf g}^{(j)}$ can only modify a given state by a phase since the $Z_{\bf g}^{(j)}$ are local and act on well-separated regions of space. The $Z_{\bf g}^{(j)}$ are local operators, so this phase can only depend on the anyon $a_j$ at position $j$, ${\bf g}$, or other local degrees of freedom in $a_j$. Demanding that the action of $Z_{\bf g}^{(j)}$ is a phase when acting on an arbitrary superposition of states in the same superselection sector in fact forces $Z_{\bf g}^{(j)}$ to be independent of local degrees of freedom, that is,
        \begin{equation}
            Z_{\bf g}^{(j)} = \zeta_{a_j}({\bf g})
        \end{equation}
        where the above equation is interpreted to be acting on a state with topological charge $a_j$ in region $j$, $\zeta_{a_j}({\bf g}) \in \U$, $\zeta_{\psi}({\bf g})=1$, and
        \begin{equation}
            \prod_{j=1}^n \zeta_{a_j}({\bf g}) =1.
        \end{equation}
        The above equation implies that $\zeta({\bf g})$ obeys the fusion rules for $\mathcal{C}$, so
        \begin{equation}
            \zeta_{a}({\bf g})= M_{a,\cohosub{z}({\bf g})},
        \end{equation}
        where, in order to maintain $\zeta_\psi = +1$, we have $\coho{z} \in \mathcal{C}$ and thus $\coho{z}\in \A$. Again, $\coho{z}$ is ambiguous by a fermion and is thus valued in $\A/\{1,\psi\}$. Under this transformation,
        \begin{equation}
            \omega_a({\bf g,h}) \rightarrow \omega_a({\bf g,h})\frac{\zeta_{\,^{\overline{{\bf g}}}a}({\bf h})\zeta_a({\bf h})}{\zeta_a({\bf gh})}
        \end{equation}
        which corresponds to transforming $\mathfrak{w}$ by $\coho{t}=d\coho{z}$. Hence $\coho{t}$ related by coboundaries are gauge-equivalent, that is, $\coho{t}\in \mathcal{H}^2(G_b,\A/\{1,\psi\})$.
        
        This means that symmetry fractionalization classes form an $\mathcal{H}^2(G_b,\A/\{1,\psi\})$ torsor. In particular, distinct symmetry fractionalization classes with data $\eta_a$ and $\eta_a'$ are related by
        \begin{equation}
            \eta'_a({\bf g,h})= \eta_a({\bf g,h})M_{a,\cohosub{t}({\bf g,h})}
        \end{equation}
        for cohomologically nontrivial $[\coho{t}]\in\H^2(G_b,\A/\{1,\psi\})$.
        As in the bosonic case, there is not generally a canonical ``trivial" symmetry fractionalization class.
        
        As in the bosonic case, changing the symmetry fractionalization class by an element of $\mathcal{H}^2(G_b,\A/\{1,\psi\})$ may not yield a physically distinct symmetry fractionalization class. This is because two different sets of symmetry fractionalization data may be related to each other by a relabeling of the anyons. More specifically, permuting the anyon labels with a permutation $p$ corresponding to some \textit{unitary} braided autoequivalence of $\mathcal{C}$ will yield a physically equivalent fractionalization class if $p$ commutes with the permutation action on the anyon labels of every $\rho_{\bf g}$.
        
        \section{Examples}
            \label{sec:examples}
        
            \subsection{\texorpdfstring{$G_b=\Z_2^{\bf T}$}{Gb = Z2T} and fermionic Kramers degeneracy}
            
            There are two possible group extensions of $G_b = \Z_2^{\bf T}$; the trivial extension $\eta_{\psi}({\bf T,T})=1$ and the nontrivial one $\eta_{\psi}({\bf T,T})=-1$. These correspond to ${\bf T}^2=1$ and ${\bf T}^2=(-1)^{F}$, respectively.
            
            Consider any $a \in \mathcal{C}$ such that $\,^{\bf T}a = a$. Then $\eta_a^{\bf T} \equiv \eta_a({\bf T,T})\in \Z_2$ is gauge-invariant, just as in the case where the symmetry and topological order is purely bosonic. If $\eta_a^{\bf T} = -1$, then $a$ carries Kramers degeneracy~\cite{barkeshli2019}.
            
            A more interesting possibility occurs when $^{\bf T}a = a \times \psi$. Then it is easy to check that
            \begin{equation}
                \eta_a^{\bf T}\equiv \eta_{a}({\bf T,T})U_{\bf T}(a,\psi;a \times \psi) F^{a,\psi,\psi}
                \label{eqn:fermionicKramersEta}
             \end{equation}
             is gauge-invariant as well and can be interpreted roughly as the action of ${\bf T}^2$ on $a$. The gauge-invariance of this quantity requires that the $\Gamma^{\psi,\psi}_1$ vertex basis transformation is disallowed, as discussed in Sec.~\ref{sec:supermodular}. The quantity $F^{a,\psi,\psi}$ can be canonically fixed to 1 (again see  Sec.~\ref{sec:supermodular}), so we omit it in the future. We comment on the transformation rules for $U$ under vertex gauge transformations in Appendix \ref{app:Utransform}. Using the symmetry fractionalization consistency conditions and the fact that the pentagon equation forces $F^{a,\psi,\psi}=F^{a\psi,\psi,\psi}$, it is straightforward to show that
             \begin{equation}
                 (\eta_a^{\bf T})^2 = \eta_{\psi}({\bf T,T})
             \end{equation}
             That is, if $G_f = \Z_2^{\bf T} \times \Z_2^f$, then $\eta_a^{\bf T}$ is a sign and determines whether or not $a$ carries Kramers degeneracy, but if $G_f = \Z_4^{T,f}$, then $\eta_a^{\bf T} = \pm i$. The latter is the precise, gauge-invariant sense in which we can have ``${\bf T}^2 = \pm i$" on an anyon, as discussed, e.g., for $\mathcal{C}=\SO(3)_3$ in Ref.~\cite{metlitski2014}.
             
             \subsubsection{Dimensional reduction to 1+1D SPTs}
             
             In order to interpret the invariants above, we review~\cite{barkeshli2019tr} the dimensional reduction from an anyon with ${}^{\bf T}a = a$ to a $(1+1)$D $\Z_2^{\bf T}$ symmetry-protected topological state (SPT), then turn to the fermionic case.
             
             Consider a cylinder with a time reversal-invariant anyon $a$ on its left end, $\overline{a}$ on the right end, and vacuum in between, fusing to the identity channel. Then in a ground state $\ket{\Psi}$, we have
             \begin{equation}
                 R_{\bf T}\ket{\Psi} = U^{(L)}_{\bf T}U^{(R)}_{\bf T}U_{\bf T}(a,\overline{a};1)\ket{\Psi}
             \end{equation}
             in the usual symmetry fractionalization ansatz, where $U^{(L,R)}_{\bf T}$ are local unitary operators. Now, if our system is bosonic, we have
             \begin{align}
                 R_{\bf T}^2\ket{\Psi} = \ket{\Psi} &= {}^{\bf T}U^{(L)}_{\bf T}{}^{\bf T}U^{(R)}_{\bf T}U^{\ast}_{\bf T}(a,\overline{a};1)R_{\bf T}\ket{\Psi}\\
                 &={}^{\bf T}U^{(L)}_{\bf T}{}^{\bf T}U^{(R)}_{\bf T}U^{(L)}_{\bf T}U^{(R)}_{\bf T}\ket{\Psi}\\
                 &={}^{\bf T}U^{(L)}_{\bf T}U^{(L)}_{\bf T}{}^{\bf T}U^{(R)}_{\bf T}U^{(R)}_{\bf T}\ket{\Psi}\\
                 &= \eta_a^{\bf T}\eta_{\overline{a}}^{\bf T}\ket{\Psi}
             \end{align}
             Hence the local action of ${\bf T}^2$, i.e. $\,^{\bf T}U_{\bf T}^{(L/R)}U_{\bf T}^{(L/R)}$, on each anyon is given by $\eta_a^{\bf T}$, that is, this quantity characterizes whether each end of the dimensionally reduced cylinder carries a linear or projective representation of $\Z_2^{\bf T}$, subject to the constraint $\eta_a^{\bf T}\eta_{\bar a}^{\bf T}=+1$ that the global representation is linear. From the consistency conditions for $\eta$ it is simple to show that $\eta_a^{\bf T} = \pm 1$. Therefore, $\eta_a^{\bf T}=-1$ means that an endpoint of the cylinder, or equivalently $a$, carries Kramers degeneracy. In this case, the dimensionally reduced system has a Kramers pair on each end, that is, it is a nontrivial $(1+1)$D $\mathbb{Z}_2^{\bf T}$ SPT.
             
             In the presence of fermions, we can ask whether the bosonic $(1+1)$D $\Z_2^{\bf T}$ SPT is trivial or nontrivial in the fermionic classification. Following Ref.~\cite{WangGu}, one can check that in class BDI, the dimensionally reduced system is in the $\nu = 4$ class of the $\Z_8$ fermion SPT classification. In class DIII, the dimensionally reduced system is a trivial SPT. This latter case is straightforward to understand physically; although there is no local bosonic operator that removes the Kramers pair associated with the anyon $a$ at the end of the system, in class DIII the local fermion $\psi$ carries Kramers degeneracy. Hence there is a fermionic operator which trivializes the end of the system.
             
             Now let us run a similar argument for an anyon $a$ with ${}^{\bf T}a = a\times \psi$. We again place $a$ and $\overline{a}$, fusing to the identity, on the ends of a cylinder. This time
             
            \begin{equation}
                 R_{\bf T}\ket{\Psi} = U^{(L)}_{\bf T}U^{(R)}_{\bf T}U_{\bf T}(a\times \psi ,\overline{a} \times \psi;1)\ket{\Psi}
             \end{equation}
             Proceeding as before,
             
             \begin{align}
                R_{\bf T}^2\ket{\Psi} = \ket{\Psi} &= \,^{\bf T}U^{(L)}_{\bf T}\,^{\bf T}U^{(R)}_{\bf T}U^{\ast}_{\bf T}(a\times \psi,\overline{a} \times \psi;1)R_{\bf T}\ket{\Psi}\\
                    &=\,^{\bf T}U^{(L)}_{\bf T}\,^{\bf T}U^{(R)}_{\bf T}U^{\ast}_{\bf T}(a\times \psi,\overline{a} \times \psi;1)U^{(L)}_{\bf T}U^{(R)}_{\bf T}U_{\bf T}(a,\overline{a} ;1)\ket{\Psi}
            \end{align}
            
            Using the consistency conditions, it is not hard to show that
            \begin{equation}
                U^{\ast}_{\bf T}(a\times \psi,\overline{a} \times \psi;1)U_{\bf T}(a,\overline{a} ;1) = -U_{\bf T}(a,\psi; a \times \psi)U_{\bf T}(\overline{a},\psi; \overline{a} \times \psi)
            \end{equation}
            where the minus sign is crucial and comes from the presence of fermions. Hence,
            \begin{align}
                R_{\bf T}^2\ket{\Psi} = \ket{\Psi} = -\eta_a^{\bf T}\eta_{\overline{a}}^{\bf T}\ket{\Psi}
            \end{align}
            Hence $\eta_a^{\bf T}\eta_{\overline{a}}^{\bf T}=-1$.
            
            The same argument, mutatis mutandis, on a state $\ket{\Psi'}$ with $a$ on one end of the cylinder and $\overline{a}\times \psi$ on the other end, in the $\psi$ fusion channel, shows that
            \begin{equation}
                \eta_a^{\bf T}(\eta_{\overline{a}}^{\bf T})^{\ast}=-\eta_{\psi}({\bf T,T})
            \end{equation}
            where we need to use $R_{\bf T}^2\ket{\Psi'} = \eta_{\psi}({\bf T,T})\ket{\Psi'}$ by our choice of group extension to $G_f$. This can be used to conclude that $(\eta_a^{\bf T})^2 = \eta_{\psi}({\bf T,T})$ as expected.
            
            According to the above argument, the local action of ${\bf T}^2$ on each end of the dimensionally reduced system is given by $\eta_a^{\bf T}$, possibly up to local fermion parity. Therefore, $\eta_a^{\bf T}$ diagnoses the SPT phase of the dimensionally reduced system. This action for anyons has been discussed as ``fermionic Kramers parity" in a rather different language in Ref.~\cite{metlitski2014}. Ref.~\cite{metlitski2014} also shows that a DIII $(1+1)$D SPT should have local action $\pm i (-1)^{F}$ at its ends. In our formalism, DIII corresponds to the nontrivial group extension under which anyons can carry $\eta_a^{\bf T} = \pm i$, so such anyons lead to a nontrivial DIII SPT upon dimensional reduction. Similarly, in the $\nu=2$ class of BDI, one end of the SPT should have local action $+1$ and one should have local action $-1$. This class corresponds in our formalism to the trivial group extension, under which anyons carry $\eta_a^{\bf T}=\pm 1$. 
            
            For time-reversal operations we are considering here which change the local fermion parity, this local ${\bf T}^2$ eigenvalue is not quite multiplicative under fusion of anyons or, in the dimensionally reduced picture, layering of SPTs; there is an extra minus sign. Specifically, if $N_{a,b}^c>0$, $\,^{\bf T}a = a \psi$, $\,^{\bf T}b = b\psi$, and $\,^{\bf T}c=c$, then
            \begin{equation}
                \eta_a^{\bf T}\eta_b^{\bf T}=-\eta_c^{\bf T},
                \label{eqn:multiplicationOfKramers}
            \end{equation}
            independent of the group extension (i.e., in the SPT language, this equation holds for class DIII and for class BDI). Eq.~\ref{eqn:multiplicationOfKramers} was derived in the SPT language in Ref.~\cite{metlitski2014} and arises from carefully tracking fermion minus signs. Schematically, if the local action of ${\bf T}$ on $a$ and $b$ respectively consists of fermionic operators $c_a$ and $c_b$ where $c_{a,b}^2 \sim \eta_{a,b}^{\bf T}$, then we can equivalently think of the local action of ${\bf T}^2$ on the fusion of $a$ and $b$ as
            \begin{equation}
                (c_a c_b)^2 = -(c_a)^2 (c_b)^2 \sim -\eta_a^{\bf T}\eta_b^{\bf T}
            \end{equation}
            or as the local action $\eta_c^{\bf T}$ of ${\bf T}^2$ on $c$. The equality of these two local actions motivates Eq.~\ref{eqn:multiplicationOfKramers}. Our symmetry fractionalization framework allows an alternate derivation in the UBFC language, which we give in Appendix~\ref{app:multiplicationOfKramersProof}; the proof is a laborious but straightforward use of the consistency conditions, with the minus sign coming from an appearance of $R^{\psi \psi}$ in a hexagon equation.  
                 
            \subsection{\texorpdfstring{$G_b=\Z_4^{\bf T}$}{Gb = Z4T}}
            
            There are two group extensions $G_f$ of $\Z_4^{\bf T}$ by $\Z_2^f$, the trivial one $\Z_4^{\bf T} \times \Z_2^f$ and the nontrivial one $\Z_8^{{\bf T},f}$.
            
            For anyons with $\,^{\bf T}a = a$, the invariant
            \begin{equation}
                \eta_a^{\bf T} = \frac{\eta_a({\bf T,T}^2)\eta_a({\bf T}^2,{\bf T}^2)}{\eta_a({\bf T}^2,{\bf T})}
            \end{equation}
            is gauge-invariant and detects a nontrivial $\Z_4^{\bf T}$ projective representation, just as in the bosonic case~\cite{barkeshli2019rel}. When $\,^{\bf T}a=a$, $\eta_a({\bf g,h}) \in Z^2(\Z_4^{\bf T},\U)$, so the above expression is simply the cohomology invariant, equal to $\pm 1$, which detects whether $\eta_a({\bf g,h})$ characterizes a linear or projective representation of $\Z_4^{\bf T}$. In particular, $\eta_\psi^{\bf T}$ characterizes the group extension $\omega_2({\bf g,h}) = \eta_\psi({\bf g,h})$, and can be thought of as indicating whether or not $\psi$ has fractional charge under the unitary symmetry ${\bf T}^2$. If the charge is fractional, then schematically ${\bf T}^4 = (-1)^F$, and indeed the group extension is nontrivial. 
            
            The consistency conditions demonstrate that this quantity obeys the fusion rules in the sense that if $\,^{\bf T}a=a$,$\,^{\bf T}b=b$,$\,^{\bf T}c=c$, and $N_{ab}^c > 0$, then 
            \begin{equation}
                \eta_a^{\bf T}\eta_b^{\bf T} = \eta_c^{\bf T}
                \label{eqn:multiplicativeInvariants}
            \end{equation}
            Notably, $\eta_a^{\bf T} = \pm \eta_{a\psi}^{\bf T}$ with the upper sign for $G_f = \Z_4^{\bf T} \times \Z_2^f$ and the lower sign for $G_f=\Z_8^{{\bf T},f}$.
            
            If, on the other hand, $\,^{\bf T}a = a\times \psi$, then
            \begin{equation}
                \eta_a^{\bf T} = \frac{\eta_a({\bf T,T}^2)\eta_a({\bf T}^2,{\bf T}^2)}{\eta_a({\bf T}^2,{\bf T})}U_{{\bf T}^2}(a,\psi;a \times \psi)
            \end{equation}
            is the appropriate gauge-invariant object. In contrast to the case $G_b = \Z_2^{\bf T}$, we do not need any $F$-symbols to preserve invariance under vertex basis transformations; $U_{\bf T^2}(a,\psi;a\times \psi)$ is invariant under such transformations because ${\bf T}^2$ is unitary. Using the consistency conditions, one can check straightforwardly that
            \begin{equation}
                (\eta_a^{\bf T})^2 = \eta_\psi({\bf T}^2,{\bf T}^2)
            \end{equation}
            where the right-hand side is the cohomology invariant determining the group extension $[\omega_2]$. This quantity also obeys the fusion rules in the sense that if $\,^{\bf T}a=a \times \psi$, $\,^{\bf T}b = b \times \psi$, $\,^{\bf T}c=c$, and $N_{ab}^c>0$, then Eq.~\ref{eqn:multiplicativeInvariants} holds with $\eta_a^{\bf T}$ interpreted appropriately for the transformation properties of the anyons.
            
        \subsection{\texorpdfstring{$G_b=\U$}{Gb = U(1)}}
            \label{sec:U1Example}
            
        There are two extensions of $\U$ by $\Z_2^f$, i.e., $\H^2(\U,\Z_2)=\Z_2$. One extension is the trivial extension $G_f = \U \times \Z_2^f$ while the other is called $\U^f$ and is characterized by the cocycle
        \begin{equation}
            \omega_2({\bf g,h})=e^{i \left(\varphi_{\bf g}+\varphi_{\bf h} -[\varphi_{\bf g}+\varphi_{\bf h}]\right)/2}
        \end{equation}
        where ${\bf g}=e^{i\varphi_{\bf g}}$, $\varphi_{\bf g} \in [0,2\pi)$ and $[x] = x \mod 2\pi$. 
        
        Since $G_b$ is continuous, the condition that $\rho_{\bf g}$ obeys the group multiplication laws up to natural isomorphisms forces $\rho_{\bf g}$ to either be the identity or $[\Upsilon_\psi]$. In particular, this implies that $[\rho_{\bf g}]$ has trivial permutation action on the anyons. 
        
        Interestingly, if $\Upsilon_\psi$ does not respect locality, then any symmetry fractionalization of $\U^f$ automatically has a fermionic symmetry localization obstruction.
        Identify $[\rho_{\bf g}] \in \{[\text{Id}],[\Upsilon_\psi]\} \simeq \Z_2$, and encode whether $[\rho_{\bf g}]$ is the trivial or nontrivial element of $\Z_2$ by the function $\phi({\bf g}): G_b \rightarrow \Z_2$.\footnote{The map $\phi({\bf g})$ need not be continuous because $[\kappa_{\bf g,h}]$ need not be continuous.} Calculating $[\kappa_{\bf g,h}]$ directly from Eq.~\ref{eqn:kappaghDef}, we see that
        \begin{equation}
            \beta_\psi({\bf g,h})= d\phi({\bf g,h})
        \end{equation}
        It immediately follows from Eq.~\ref{eqn:fermionicObstructionUpsilonViolates} that $\coho{O}_f$ is trivial if and only if $\omega_2({\bf g,h})=d\phi$, which is only possible if $G_f = \U \times \Z_2^f$. Therefore, not all super-modular categories can be compatible with $\U^f$ symmetry. 
        
        The gauge-invariant quantity characterizing symmetry fractionalization is given as follows. For a fixed anyon $a$, let $n$ be the smallest integer such that $a^n$ contains the identity as a fusion product. Choose a sequence of anyons $a, a^2, \ldots, a^n=1$ such that $a \times a^k$ contains $a^{k+1}$ as a fusion product. Then define
        \begin{equation}
            e^{2\pi i Q_a} = \prod_{m=1}^{n-1} \eta_a\left(e^{2\pi i/n},e^{2\pi i m/n}\right)U_{e^{2\pi i/n}}(a,a^m;a^{m+1})
            \label{eqn:ourU1Invariant}
        \end{equation}
        One can check directly that this quantity is gauge-invariant.
        We immediately see that
        \begin{equation}
            e^{2\pi i Q_\psi} = \omega_2(-1,-1)
        \end{equation}
        which can be checked to be a cohomology invariant characterizing the group extension.
        
        Here $Q_a$ (which is only defined modulo an integer) can be interpreted as the fractional charge of the anyon $a$ under $\U$. One way to understand this interpretation is by noting that the fermion plays no role in this quantity, so we could instead consider the same invariant for a bosonic topological order, that is, taking $\mathcal{C}$ to instead be a UMTC. In that case, Ref.~\cite{manjunath2020} assumes the existence of a $G$-crossed MTC and defines an invariant which is equivalent to
        \begin{equation}
            e^{2\pi i Q_a}=\left(R^{a,0_{\bf g}}R^{0_{\bf g},a}\right)^n\prod_{m=1}^n \eta_a\left({\bf g},{\bf g}^m \right)
            \label{eqn:NarenU1Invariant}
        \end{equation}
        where ${\bf g}=e^{2\pi i/n}$, $a$ is any anyon, and $0_{\bf g}$ is any defect carrying ${\bf g}$ flux. Ref.~\cite{manjunath2020} shows that $Q_a$ has the interpretation as the fractional $\U$ charge of $a$ because this quantity can be understood as the double braid of $a$ with a $2\pi$ flux of the $\U$ symmetry. Combining the $G$-crossed heptagon equations for clockwise and counterclockwise braids, we find
        \begin{equation}
            \left(R^{a,0_{\bf g}}R^{0_{\bf g},a}\right)\left(R^{a^m,0_{\bf g}}R^{0_{\bf g},a^m}\right) = \left(R^{a^{m+1},0_{\bf g}}R^{0_{\bf g},a^{m+1}}\right)U_{\bf g}(a,a^m;a^{m+1})
        \end{equation}
        Inserting this identity into Eq.~\ref{eqn:NarenU1Invariant} shows that it is equivalent to Eq.~\ref{eqn:ourU1Invariant}, that is, we expect that if a $G$-crossed theory exists when $\mathcal{C}$ is super-modular, then Eq.~\ref{eqn:ourU1Invariant} also gives the fractional $\U$ charge of the anyon $a$.
        
        Note that for the nontrivial group extension $\omega_2(-1,-1)=-1$, the fermion carries $Q_\psi = 1/2$. This must be so because the fermion carries unit charge of the full symmetry group $\U^f$, which is a double cover of $G_b=\U$. We see, then, how the nontrivial group extension is encoded by giving the fermion a fractional quantum number under $G_b$, although it carries an integer charge under $G_f$. Note also that in this context, the physical charge carried by an anyon $a$, which is really the $\U^f$ charge, is $2Q_a$.
        
        We can characterize symmetry fractionalization with $G_b = \U$ a bit further if $\Upsilon_\psi$ respects locality. In this case, there exists a modular extension $\C$ with an Abelian fermion parity vortex $v_0$ and $[\rho_{\bf g}]=[\mathrm{Id}]$, where Id is the identity map. We can thus fix a gauge (for convenience) in which $\rho_{\bf g}$ is the identity (i.e. $U_{\bf g} = 1$), in which it is straightforward to check that
        \begin{equation}
            \eta_a^{\text{ref}}({\bf g,h}) = M_{a,\cohosub{v}_{\text{ref}}({\bf g,h})}
        \end{equation}
        with 
        \begin{equation}
            \coho{v}_{\text{ref}}({\bf g,h})=
            \begin{cases} v_0^{\left(\varphi_{\bf g}+\varphi_{\bf h}-[\varphi_{\bf g}+\varphi_{\bf h}]\right)/2\pi} & \text{ if }G_f = \U^f\\
            1 & \text{ if } G_f = \U \times \Z_2^f
            \end{cases}
        \end{equation}
        satisfies the consistency equations Eq.~\ref{eqn:etaConsistency} and~\ref{eqn:etaUConsistency} and the constraint Eq.~\ref{eqn:etaPsiConstraint}. According to the discussion in Sec.~\ref{subsubsec:H2Torsor}, all other symmetry fractionalization classes have representatives of the form
        \begin{equation}
            \eta_a({\bf g,h}) = \eta_a^{\text{ref}}({\bf g,h})M_{a,\cohosub{t}({\bf g,h})}
            \label{eqn:relativeEtaU1}
        \end{equation}
        with $[\coho{t}] \in \H^2(G_b,\A/\{1,\psi\})$. One can check that for each $x \in \A/\{1,\psi\}$, the function
        \begin{equation}
            \coho{t}({\bf g,h}) = x^{\left(\varphi_{\bf g}+\varphi_{\bf h}-[\varphi_{\bf g}+\varphi_{\bf h}]\right)/2\pi}
            \label{eqn:tghForU1}
        \end{equation}
        represents a distinct class $[\coho{t}]\in\H^2(G_b,\A/\{1,\psi\})$. 
        
        Hence, if $G_f = \U^f$, given an Abelian fermion parity vortex $v_0$ in some minimal modular extension $\C$ of $\mathcal{C}$, symmetry fractionalization is characterized by an anyon $x \in \A/\{1,\psi\}$. The anyon $x$ has a physical interpretation as a ``relative vison" between the reference fractionalization class and the class given by $x$, that is, inserting a $2\pi$ flux of $\U$ will insert an extra anyon $x$ (modulo a fermion) in the state corresponding to $x$ compared to carrying out the same process in the reference state. The ``absolute vison," that is, the anyon associated with insertion of a $2\pi$ flux, is actually a fermion parity vortex (again, modulo a fermion) and is thus valued in the minimal modular extension corresponding to the physical realization of the fermionic topological order; the vison is not an object in the super-modular category.
        
        If $G_f=\U  \times \Z_2^f$, then we do not need to specify a fermion parity vortex, and $x$ is the absolute vison.
        
        It also follows immediately from Eq.~\ref{eqn:ourU1Invariant} that
        \begin{equation}
            e^{2\pi i Q_a} = \begin{cases}
            M_{a,v_0 \times x} & \text{ if } G_f = \U^f\\
            M_{a,x} & \text{ if } G_f = \U \times \Z_2^f
            \end{cases}.
        \end{equation}
        
        These results are very similar to the bosonic case. The key differences are that in a bosonic theory, $v_0$ is never present, which means that the (absolute) vison $x$ is always valued in the base UMTC which defines the bosonic topological order, and there is no fermion ambiguity in $x$.
        
        \subsubsection{\texorpdfstring{$G_f = U(1)^f$}{Gf = U(1)f} and \texorpdfstring{$1/n$}{1/n} Laughlin FQH states}
        
        As an application of our formalism, we can consider the example of fermionic fractional quantum Hall states. In the simplest case, take the $1/n$ Laughlin state for $n$ odd, which has Abelian topological order described by the super-modular category $\mathcal{C} = \Z_n \times \{1,\psi\}$. This category contains a particle with topological twist $\theta = \pi/n$; denote this particle $[1]$. Then the generator of the factor $\tilde{\A} = \Z_n$ in Eq.~\ref{eqn:abelianDecomp} is the particle $[2]$. Furthermore, $[n] = \psi$. After gauge-fixing $U=1$, the consistency condition Eq.~\ref{eqn:etaUConsistency} requires the phases $\eta_a({\bf g,h})$ to obey the fusion rules, so
        \begin{equation}
            \left[\eta_{[1]}({\bf g,h})\right]^n = \eta_{[n]}({\bf g,h}) = \eta_\psi({\bf g}, {\bf h}) = \omega_2({\bf g,h})
        \end{equation}
        Hence
        \begin{equation}
            \eta_{[1]}({\bf g,h}) = e^{i q (\varphi_{\bf g}+\varphi_{\bf h}-[\varphi_{\bf g}+\varphi_{\bf h}])/2n},
        \end{equation}
        for any odd integer $q$, with $q \sim q+2n$ producing the same pattern. We have simply taken the $n$th root of $\omega_2$, hence the $q$ ambiguity, and enforced $\eta_{[1]}({\bf 1,h})=1$. This leads to
        \begin{equation}
            Q_{[1]} = \frac{q}{2n} = \frac{q}{n}Q_\psi.
        \end{equation}
        As expected, this fractionalization pattern assigns $q/n$ of the electron charge to each quasiparticle. We remind the reader that $Q_a$ is the charge under $G_b$, and that the ``true" $\U$ charge is the charge under $G_f$, i.e. $2Q_a$. The $n$ distinct possible values for $Q_{[1]}$ are consistent with our classification result, which yields $\H^2(\U,\A/\{1,\psi\})=\H^2(\U,\Z_n) = \Z_n$. In particular, there is a canonical reference state where 
        \begin{equation}
            \eta_{[k]}^{\text{ref}} = \omega_2({\bf g,h})^k
        \end{equation}
        for each integer $k$, i.e. $\eta_a = 1$ for $a \in \tilde{\A}$ but still obeys the fermionic constraint $\eta_\psi = \omega_2$. In this reference state, $[k]$ carries an electron charge for $k$ odd and and an integer charge under $G_b$ (even integer charge under $G_f$, which can be screened by a boson) for $k$ even. Following the definitions in Eqs.~\ref{eqn:relativeEtaU1},~\ref{eqn:tghForU1}, each anyon $x \in \widetilde{A}=\Z_n$ defines a distinct fractionalization pattern relative to the reference. The corresponding fractionalization data is
        
        \begin{equation}
            \eta_{[k]}({\bf g,h}) = \eta_{[k]}^{\text{ref}}({\bf g,h})M_{[k],\cohosub{t}({\bf g,h})}
        \end{equation}
        for each $k$. If $x=[2\ell]$, then
        \begin{equation}
            Q_{[1]} = \frac{2\ell}{n}+\frac{1}{2}
        \end{equation}
        For the case $n=3$, one can easily check that the choice $\ell=0$ corresponds to $[1]$ carrying integer $\U^f$ charge, i.e. the charge of an electron. The choice $\ell=1$ corresponds to $2Q_{[1]} = 1/3 \mod 1$ so that $[1]$ is the the Laughlin quasihole, while $\ell=2$ corresponds to $2Q_{[1]}=-1/3 \mod 1$ so that $[1]$ is the Laughlin quasielectron. The physical difference between these cases is whether the Laughlin quasihole carries topological twist $\theta=\pi/3$ and the quasielectron carries topological twist $\theta=4\pi/3$, or vice-versa.
        
        As an aside, $\Upsilon_\psi$ respects locality in this case because there exists an Abelian minimal modular extension of $\mathcal{C}$, e.g., $\Z_n \times (\text{toric code})$. Explicitly, the phases $\zeta_{[k]}=e^{\pi i k/n}$ obey the fusion rules, and $\zeta_{[n]}=\zeta_\psi=-1$.
        
        \subsection{\texorpdfstring{$G_f = U(1)^f \times [\Z^2 \rtimes \Z_M]$}{Gf = U(1)f x Z2 rtimes ZM} and fractional Chern insulators}
        
        We can now also straightforwardly extend the above analysis to include lattice space group symmetries as well, which give us a general understanding of fractional quantum numbers for fractional Chern insulators with charge conservation and space group symmetries. This complements a recent comprehensive analysis in the bosonic case \cite{manjunath2020,manjunath2021}.
        
        For simplicity, let us consider the case of the $1/n$ Laughlin topological order with $n$ odd, for which 
        we have $\mathcal{A} = \Z_n \times \Z_2$, $\tilde{\mathcal{A}} = \Z_n$, $\mathcal{A}/\{1,\psi\} = \Z_n$, and consider the case where symmetries do not permute anyons, so $[\rho]: G_b \rightarrow \text{Aut}_{LR}(\mathcal{C})$ is the trivial map. 
        
        We have $G_b = U(1) \times [\Z^2 \rtimes \Z_M]$, so we can use the result of Ref. \cite{manjunath2020,manjunath2021}:
        \begin{align}
            \mathcal{H}^2(G_b, \mathcal{A}/\{1,\psi\}) = \mathcal{H}^2(U(1) \times [\Z^2 \rtimes \Z_M], \Z_n) = \Z_n \times \Z_n \times (K_M \otimes \Z_n) \times \Z_{(n,M)} .
        \end{align}
        Here $K_M = \Z_1, \Z_2 \times \Z_2, \Z_3, \Z_2,\Z_1$ for $M = 1,2,3,4,6$, respectively. $\otimes$ is the tensor product of finite groups; we have $K_M \otimes Z_n = \Z_1, \Z_{(2,n)} \times \Z_{(2,n)}, \Z_{(3,n)}, \Z_{(2,n)}, \Z_1$ for $M = 1,2,3,4,6$, respectively. $(a,b)$ refers to greatest common divisor of $a$ and $b$. 
        
        Since we can pick a canonical reference state for which $\eta^{\text{ref}}_a = 1$ if $a \in \widetilde{\A}$ and $\eta_\psi = \omega_2$, we can characterize the other symmetry fractionalization classes by a set of anyons 
        $[\coho{t}] = (x, m, [\vec{t}], [s]) \in \mathcal{H}^2(G_b, \mathcal{A}/\{1,\psi\})$. Here $x \in \Z_n$ is the relative vison defined in Eq.~\ref{eqn:tghForU1}, $m \in \Z_n$ is the anyon per unit cell, $\vec{t} = \Z_n^2$ is the discrete torsion vector discussed in the bosonic case in  \cite{manjunath2020,manjunath2021}, and $s \in \Z_n$ is the discrete spin vector. The square brackets imply certain equivalence relations for $\vec{t}$ and $s$, so that $[s] \in \Z_{(n,M)}$ and $[\vec{t}] \in K_M \otimes \Z_n$ for which we refer the reader to \cite{manjunath2020,manjunath2021}. 
        
        The reference state and the choice of $[\coho{t}]$ determine the fractionalization class via
        \begin{align}
            \eta_a({\bf g},{\bf h}) = \eta_a^{\text{ref}}({\bf g}, {\bf h}) M_{a, \cohosub{t}({\bf g}, {\bf h})}. 
        \end{align}
        
        Physically, these fractionalization classes define the fractional $U(1)$ charge of the anyons (determined by $x$), the fractionalization of the translation algebra (determined by $m$), the fractional orbital angular momentum (determined by $[s]$), and the fractional linear momentum (determined by $[\vec{t}]$). The choices of $x$, $m$, $[s]$ and $[\vec{t}]$ also have non-trivial consequences for the fractional quantum numbers of lattice dislocations, disclinations, and magnetic flux, as discussed in \cite{manjunath2020,manjunath2021}. 
        
        Let us take as an example the case $n = 3$. Then, for $M = 2$ and $4$ (e.g. rectangular and square lattices), $[\vec{t}]$ and $[s]$ are automatically trivial, so the only non-trivial choices are the choice of relative vison $x \in \Z_3$ and anyon per unit cell $m \in \Z_3$. These two are further constrained by the fractional part of the filling fraction $\nu$: the fractional charge per unit cell, which sets the fractional part of $\nu$, must be equal to the charge of the anyon per unit cell, $m$. 
        
        For $M = 6$, e.g. the triangular lattice, $[\vec{t}]$ is still trivial, however we have a possible choice of $[s] \in \Z_{(3,6)} = \Z_3$. This specifies the fractional orbital angular momentum of the anyons, along with the fractional charge and fractional angular momentum of disclinations \cite{manjunath2020,manjunath2021}. 
        
        Finally, for $M = 3$, e.g. the honeycomb lattice, we finally have the possibility that both $[\vec{t}]$ and $[s]$ can be non-trivial. In particular, $[\vec{t}] \in \Z_3$, which implies the possibility of a non-trivial fractional linear momentum of the anyons, a non-trivial fractional charge of lattice dislocations, and other fractional quantized response properties \cite{manjunath2020,manjunath2021}.

        \subsection{\texorpdfstring{$G_f = U(2)^f$}{Gf = U(2)f} and \texorpdfstring{$\Z_2$}{Z2} quantum spin liquids }
        
        The case of spinful electrons forming an insulating state that respects both charge conservation symmetry and spin rotational symmetry corresponds to an on-site unitary symmetry $G_f = U(2)^f$, and $G_b = U(1) \times \SO(3)$. Let us consider this systematically for the case of the gapped $\Z_2$ spin liquid. In this case, the super-modular category is described by $D(\Z_2) \boxtimes \{1,\psi\}$, where $D(\Z_2)$ is the quantum double of $\Z_2$. Therefore, we have $\mathcal{A} = \{1,e,m,f\} \boxtimes \{1,\psi\}$, with $f = e \times m$ the emergent fermion. Our general classification then gives
        \begin{align}
           \mathcal{H}^2(G_b,  \mathcal{A}/\{1,\psi\}) = \mathcal{H}^2(U(1) \times \SO(3),  \Z_2 \times \Z_2) = \Z_2^4 .
        \end{align}
        We see that there are at most 16 distinct symmetry fractionalization classes. However we will see that most of these symmetry fractionalization class are in fact physically equivalent under relabeling the anyons, and that there are actually only 3 distinct symmetry fractionalization classes. 
        
        One can check that $\eta_a({\bf g,h})$ factors into a $\U$ part and a $\SO(3)$ part
        \begin{align}
            \eta_a((e^{i\varphi_{\bf g}},M_{\bf g}),(e^{i\varphi_{\bf_h}},M_{\bf h})) &= \eta_a^q(e^{i \varphi_{\bf g}},e^{i\varphi_{\bf h}}) \eta_a^s(M_{\bf g},M_{\bf h})
        \end{align}
        where $M_{\bf g,h}$ are $\SO(3)$ matrices and $q$ and $s$ label the charge and spin parts. Define functions
        \begin{equation}
            \phi_k(e^{i\varphi_1},e^{i\varphi_2}) = e^{ik(\varphi_1+\varphi_2-[\varphi_1+\varphi_2])}
        \end{equation}
        for $k=0,1/2 \mod 1$. We find that, in the gauge $U=1$, a representative $\eta_a$ is
        \begin{align}
            \eta_a^q(e^{i \varphi_{\bf g}},e^{i\varphi_{\bf h}}) &= \phi_{q_a}(e^{i \varphi_{\bf g}},e^{i\varphi_{\bf h}})\\
            \eta_a^s(M_{\bf g},M_{\bf h}) &= \phi_{s_a}(\alpha_{\bf g},\alpha_{\bf h})\phi_{s_a}(\beta_{\bf g},\beta_{\bf h})\phi_{s_a}(\gamma_{\bf g},\gamma_{\bf h})
        \end{align}
        for $q_a,s_a \in \{0,1/2\}$. The notation parameterizes $M_{\bf g} \in \SO(3)$ by the Euler angles $(\alpha_{\bf g},\beta_{\bf g},\gamma_{\bf g})$. The fermionic constraint forces $q_\psi = s_\psi = 1/2$, while the $U-\eta$ consistency condition Eq.~\ref{eqn:etaUConsistency} forces the $\eta_a^q$ and $\eta_a^s$ to obey the fusion rules. Hence, independent choices of $q_e,s_e,q_m,s_m \in \{0,1/2\}$ distinguish the different symmetry fractionalization classes. This generates the $\Z_2^4$ classification. Following the discussion in Sec.~\ref{sec:U1Example}, $q_a$ measures the charge of $a$ under the $\U$ part of $G_b$, and hence $2q_a$ measures the physical $\U$ charge of $a$. Similarly, $s_a$ measures the spin of $a$ (there is no doubling of the quantum number here since half-integer spin under $\SO(3)$ corresponds to half-integer spin under its double cover $\mathrm{SU}(2)$).
        
        However, many of these patterns are physically equivalent up to relabeling anyons. For example, relabeling $e \leftrightarrow m$ swaps the quantum numbers of $e$ and $m$. One could also relabel $e \leftrightarrow f\psi$ without changing $m$; this is an autoequivalence of the theory and interchanges the classes $(q_e,s_e) \leftrightarrow (q_e+1/2,s_e+1/2)$, where we are always taking addition modulo 1. Likewise, interchanging $m \leftrightarrow f\psi$ without changing $e$ is an autoequivalence and interchanges $(q_m,s_m)\leftrightarrow (q_m+1/2,s_m+1/2)$. Accounting for all of these relabelings, we obtain only three distinct classes, with representative choices of $(q_e,s_e,q_m,s_m) \in \{ (0,0,0,0),(1/2,0,0,0), (1/2,0,0,1/2)\}$. The first case is trivial; all anyons' quantum numbers can be screened by local excitations, i.e., they either have quantum numbers allowed by for a local boson or the quantum numbers of an electron (up to local bosons). In the second, $e$ (up to relabeling) carries the electron charge but no spin, $e\psi$ carries spin-$1/2$ but no charge, and $m$ has quantum numbers which can be screened by local excitations. In the third, $e$ carries the electron charge but no spin, while $m$ carries spin-$1/2$ but no charge. In the latter two cases, we can think of the local fermion as fractionalizing into a chargeon $e$ (up to relabeling) and spinon $e\psi$. The difference between the two cases is whether or not the $\Z_2$ flux $m$ carries quantum numbers which can be screened by local excitations.
    
        \subsection{Symmetry localization obstructions}
        \label{subsec:symmLocObstructions}        
        \subsubsection{\texorpdfstring{$Z^2(G_b, \Z_2)$}{Z2(Gb, Z2)} fermionic obstruction, \texorpdfstring{$\Upsilon_\psi$}{Ypsi} locality-violating, and \texorpdfstring{$\SO(3)_3$}{SO(3)3}}
        \label{sec:SO33}
        
        A simple example of a non-trivial $Z^2(G_b,\Z_2)$ fermionic symmetry localization obstruction is the super-modular category $\SO(3)_3$ with $G_f = \Z_2^{\bf T}\times \Z_2^f$. The theory $\SO(3)_3$ has four anyons $1,s,\tilde{s},\psi$ with quantum dimensions $1,1+\sqrt{2},1+\sqrt{2},1$ and topological spins $1,i,-i,-1$. The fusion rules are
        \begin{align}
            \psi \times \psi &= 1 \nonumber \\
            \psi \times s &= \tilde{s} \nonumber \\
            s \times s = \tilde{s} \times \tilde{s} &= 1+s+\tilde{s} \nonumber \\
            s \times \tilde{s} &= \psi + s + \tilde{s}
        \end{align}
        Time reversal must exchange $s \leftrightarrow \tilde{s}$. One can show that consistent fermionic symmetry fractionalization data exists with $G_f = \Z_4^{{\bf T},f}$, i.e., $\eta_\psi({\bf T,T})=-1$; see Refs.~\cite{Fidkowski13,TataSETAnomaly}.
        
        To see that $G_f = \Z_2^{\bf T} \times \Z_2^f$ should have a fermionic symmetry localization obstruction, we use the fact that on general grounds~\cite{BarkeshliChengLocalizationAnomalies}, if $\,^{\bf T}b = b$ and there exists any anyon $a$ such that $N_{a \,^{\bf T}a}^b$ is odd, then
        \begin{equation}
            \eta_b({\bf T,T}) = \theta_b
            \label{eqn:etaThetaRelation}
        \end{equation}
        The above equation applies for $b=\psi$ since $N_{s \,^{\bf T}s}^{\psi} = 1$. Hence $\eta_\psi({\bf T,T})=-1$ for any consistent symmetry fractionalization. However $G_f = \Z_2^{\bf T} \times \Z_2^f$ would require $\eta_\psi({\bf T}, {\bf T}) = 1$, which is therefore inconsistent. 
        
        We can state the above in the language of our present work as follows. Since $\SO(3)_3$ contains the fusion rules $N_{s,s}^s = N_{s,s}^{\psi \times s}=1$, $\Upsilon_\psi$ must violate locality. That is, all phases $\zeta_a$ which respect the fusion rules have $\zeta_\psi = +1$. One can check that the action of time reversal must have $\beta_\psi({\bf T,T}) = -1$ (this can be computed directly, and also follows from the knowledge that there exists consistent fractionalization data with $\eta_\psi({\bf T,T})=-1$ and $\omega_\psi({\bf T,T})=1$). Hence, the fermionic obstruction for $\omega_2({\bf T,T})=+1$ is characterized by the cohomology invariant $\O_f({\bf T,T}) = \beta_\psi({\bf T,T})/\omega_2({\bf T,T}) = -1$. That is, $\O_f$ is nontrivial in $Z^2(\Z_2^{\bf T},\Z_2)$ and $[\O_f]$ is also nontrivial in $\H^2(\Z_2^{\bf T},\Z_2)$.
        
        We note in passing that a related phenomenon was discovered in Ref. \cite{barkeshli2019tr} for $D(S_3)$, the quantum double of $S_3$, the permutation group on three elements, which describes the anyon content of $S_3$ gauge theory (see Section VIII D of Ref. \cite{barkeshli2019tr}). The anyons of $D(S_3)$ can be labeled as $A,B,C,D,E,F,G,H$, and the theory admits an action of $\mathbb{Z}_2^{\bf T}$ such that $C \leftrightarrow F$ and $G \leftrightarrow H$ under time-reversal ${\bf T}$. This permutation action forces $\eta_B^{\bf T} = -1$. Thus, one can consider the subcategory generated by $\{1,B\}$, in which case the choice $\eta_B^{\bf T} = 1$ would be obstructed when attempting to lift the action of ${\bf T}$ to the full category. 
        
        \subsubsection{\texorpdfstring{$\mathcal{H}^3(G_b, \mathcal{A}/\{1,\psi\})$}{H3(Gb, A/1,psi} fermionic obstruction, \texorpdfstring{$\Upsilon_\psi$}{Ypsi} locality-preserving, and \texorpdfstring{$\mathrm{Sp}(2)_2 \boxtimes \{1,\psi\}$}{Sp(2)2 x 1,psi}}
        \label{sec:Sp22}
        
        To demonstrate the bosonic symmetry localization obstruction and the fermionic symmetry localization obstruction when $\Upsilon_\psi$ respects locality, we consider the super-modular theory $\mathcal{C} = \mathrm{Sp}(2)_2 \times \{1,\psi\}$ with $G_b=\Z_2^{\bf T}$. The notation here is $\mathrm{Sp}(2)_2 = \mathrm{USp}(4)_2$.
        
        The theory $\mathrm{Sp}(2)_2$ was studied in detail in~\cite{barkeshli2018}. Here we show that one can define a symmetry action of $\Z_2^{\bf T}$ on $\mathcal{C}$ that has a bosonic symmetry localization obstruction, which is essentially identical to the symmetry localization obstruction for $\mathrm{Sp}(2)_2$ as a bosonic topological order, and one with no bosonic obstruction. For the symmetry action with no bosonic obstruction, we show that $G_f=\Z_2^{\bf T} \times \Z_2^f$ has a fermionic symmetry localization obstruction, while $G_f = \Z_4^{{\bf T},f}$ is unobstructed.
        
        The anyon content of the UMTC $\mathrm{Sp}(2)_2$ is $\{1,\epsilon, \phi_1,\phi_2, \psi_+, \psi_-\}$ (the transparent fermion $\psi$, with no subscript, is not related to $\psi_\pm$), with topological spins $1, 1, e^{4\pi i /5}, e^{-4\pi i/5}, i, -i$, respectively. These have quantum dimensions $1,1,2,2,\sqrt{5}, \sqrt{5}$, respectively. Under time-reversal, there are two choices of actions on $\mathcal{C}$:
        \begin{enumerate}
            \item $\phi_1 \leftrightarrow \phi_2$, $\psi_+ \leftrightarrow \psi_-$, with $\epsilon$ and $\psi$ staying invariant.
            \item $\phi_1 \leftrightarrow \phi_2$, $\psi_+ \leftrightarrow \psi \psi_+$, $\psi_- \leftrightarrow \psi \psi_-$, with $\epsilon$ and $\psi$ invariant. 
        \end{enumerate}
        The first choice has a bosonic $\H^3(\Z_2^{\bf T},\A/\{1,\psi\})$ obstruction. One can directly compute the $U$-symbols and calculate the bosonic obstruction. One finds that $\coho{O}_b({\bf T,T,T}) = \epsilon$ as an element of $\mathcal{A}/\{1,\psi\}$; this is a cohomology invariant, and therefore the bosonic obstruction is nontrivial.
        
        For the second choice of symmetry action, one can directly check that the bosonic obstruction vanishes, and we can ask about the fermionic obstruction. In this theory, $\Upsilon_\psi$ respects locality because there is a minimal modular extension of $\mathcal{C}$ with an Abelian fermion parity vortex (namely the product of $\mathrm{Sp}(2)_2$ and any Abelian minimal modular extension of $\{1,\psi\}$). Hence the fermionic obstruction is valued in $\H^3(\Z_2^{\bf T},\A/\{1,\psi\}) = \H^3(\Z_2^{\bf T},\Z_2) = \Z_2$, characterized by the cohomology invariant $\O_f({\bf T,T,T})$.
        
        There is in fact a fermionic obstruction only when $G_f = \Z_2^{\bf T} \times \Z_2^f$. We can again use the criterion Eq.~\ref{eqn:etaThetaRelation} to see why. With the second symmetry action above, taking $a=\psi_+$, we find
        \begin{equation}
            \psi_+ \times \,^{\bf T}\psi_+ = \psi_+ \times \psi \psi_+ = \psi + \psi \phi_1 + \psi \phi_2
        \end{equation}
        Hence we can take $b=\psi$ in Eq.~\ref{eqn:etaThetaRelation} to see
        \begin{equation}
            \eta_\psi({\bf T,T}) = \theta_\psi = -1.
        \end{equation}
        Therefore, symmetry fractionalization with $\eta_\psi({\bf T,T}) =+1$ must be inconsistent, that is, there is a fermionic symmetry localization obstruction for $G_f = \Z_2^{\bf T} \times \Z_2^f$.
        
        We may also see the obstruction at the level of cohomology as follows. Suppose we are given a fractionalization pattern with $G_f = \Z_4^{{\bf T},f}$; one can check with tedious calculation that two such patterns exist and are specified by 
        $\eta^{\bf T}_\epsilon = -1$, $\eta^{\bf T}_{\psi_+} = \pm i$, $\eta^{\bf T}_{\psi_-} = \mp i$, $\eta^{\bf T}_\psi = -1$. Then we may attempt to find a new fractionalization pattern with $G_f=\Z_2^{\bf T}\times \Z_2^f$ by choosing a phase $\tau_a({\bf T,T})$ which obeys the fusion rules and which obeys $\tau_\psi({\bf T,T})=-1$. One such choice is $\tau_a({\bf T,T}) = (-1)^F$, where $(-1)^F$ measures the $\psi$ parity. We compute the failure of $\tau_a$ to be a group cocycle (see Eq.~\ref{eqn:TaDefinition}):
        \begin{equation}
            T_a({\bf T,T,T})=\tau_{\,^{\bf T}a}({\bf T,T})\tau_a({\bf T,T}) = \begin{cases}
              -1 & a = \psi_\pm,\psi_\pm \psi\\
              +1 & \text{else}
            \end{cases}
        \end{equation}
        It is straightforward to check that
        \begin{equation}
            T_a({\bf T,T,T}) = M_{a,\epsilon}
        \end{equation}
        so that $\O_f({\bf T,T,T})=\epsilon$, that is, $[\O_f] \neq 0 \in \H^3(\Z_2^{\bf T},\A/\{1,\psi\})$. Hence there is indeed a fermionic symmetry localization obstruction for $G_f=\Z_2^{\bf T}\times \Z_2^f$.
        
        In fact, one can use the same basic calculation for any theory of the form $\mathcal{C} = \mathcal{B} \boxtimes \{1,\psi\}$, where $\mathcal{B}$ is modular and with $G_b=\Z_2^{\bf T}$. If the bosonic symmetry localization obstruction vanishes, then there is some choice of $\eta_\psi({\bf T,T})$ and thus a corresponding choice of group extension $G_f$ which gives a consistent symmetry fractionalization pattern. Given the consistent pattern, one can try to see if a fractionalization pattern exists for the other group extension. One can always calculate the obstruction by choosing $\tau_a({\bf T,T}) = (-1)^F$, in which case one finds
        \begin{equation}
            T_a({\bf T,T,T}) = \begin{cases}
              +1 & a \text{ and } \,^{\bf T}a \text{ have the same fermion parity}\\
              -1 & a \text{ and } \,^{\bf T}a \text{ have opposite fermion parity}
            \end{cases}
        \end{equation}
        in which case the cohomology invariant $\O_f({\bf T,T,T}) \in \A/\{1,\psi\}$ must be nontrivial if any $a$ and its time-reverse have opposite fermion parity. Therefore, for this form of $\mathcal{C}$ with trivial bosonic symmetry localization obstruction, both group extensions for $G_b=\Z_2^{\bf T}$ are unobstructed if and only if $a$ and $\,^{\bf T}a$ have the same fermion parity for all $a$.
        
        \section{Discussion}
        \label{sec:discussion}
        
        We have provided a systematic analysis of symmetry fractionalization in (2+1)D fermionic symmetry-enriched topological phases of matter. We saw that much of the formalism in the bosonic case goes through, with important modifications arising from the locality of fermions. We find that symmetry fractionalization depends on a choice $[\rho] : G_b \rightarrow \text{Aut}_{LR}(\mathcal{C})$, where $\text{Aut}_{LR}(\mathcal{C})$ is the group of locality-respecting autoequivalences. 
        
        Furthermore, the choice $[\rho]$ and $\omega_2$ may lead to bosonic or fermionic localization obstructions. The bosonic one is an obstruction to having any symmetry fractionalization class for $G_b$, regardless of $G_f$. 
        The fermionic one is an obstruction to having any symmetry fractionalization class for $G_f$, assuming the bosonic obstruction vanishes.  
        
        If these obstructions both vanish, then the symmetry fractionalization data is specified by a set of phases $\eta_a({\bf g}, {\bf h})$, which form a torsor over $\mathcal{H}^2(G_b, \mathcal{A}/\{1,\psi\})$. 
        
        The presence of local fermions leads to several consequences which are uncommon in the usual bosonic case, in particular that there can be more than one physically distinct class of symmetry action, $[\rho]$, with the same permutation action on the anyons. It would be interesting to find any microscopic model where a non-permuting but non-trivial symmetry action $\Upsilon_\psi$ occurs.
        
        We note that in using super-modular categories to model fermionic topological phases of matter, we required that the vertex basis gauge transformation $\Gamma^{\psi,\psi}_1 = 1$. It would be useful to develop a first principles derivation of such a constraint, which is so far lacking in our current understanding. 
        
        Having established this framework for symmetry fractionalization, it is an important issue to use it to understand the 't Hooft anomalies of fermionic SETs, which provide obstructions to gauging the full $G_f$ symmetry. Ref.~\cite{TataSETAnomaly}
        provided a general method to compute 't Hooft anomalies, but understanding in more detail the categorical origin of these anomalies in general is a non-trivial problem. There has been some progress in special cases~\cite{fidkowski2018,delmastro2021}. A general understanding has recently been provided in \cite{bulmash2021Anomalies,aasen21ferm}. 
        
        The formalism that we have developed here provides a partial understanding of fermionic SETs in (2+1)D by developing the theory of symmetry fractionalization for the anyons of a fermionic topological phase. A complete analysis requires also developing a theory of $G_f$ symmetry defects applicable to fermionic topological phases, to mirror the $G$-crossed braided tensor category approach for bosonic topological phases. In particular, such an analysis would incorporate symmetry fractionalization for the fermion parity vortices as well. Once symmetry fractionalization for the anyons is fixed, we expect that distinct $G_f$ defect classes can be obtained by stacking invertible fermionic topological phases.
        Recently, Ref.~\cite{manjunath21inv,aasen21ferm} has developed a comprehensive understanding of invertible fermionic topological phases with symmetry by augmenting the formalism of $G$-crossed braided tensor categories. The results of Ref.~\cite{manjunath21inv} suggest that the more general case of fermionic SETs may proceed by gauging fermion parity and classifying the resulting possible $G_b$-crossed braided tensor categories, while keeping track of additional flux labels that determine how $G_b$ defects arise from $G_f$ defects. 

        \section{Acknowledgements}
        
        We thank Parsa Bonderson, Meng Cheng, and Zhenghan Wang for discussions, and Srivatsa Tata and Ryohei Kobayashi for recent collaborations on related work. In particular, MB also thanks Parsa Bonderson for sharing insights on fermionic topological phases and preliminary results from \cite{aasen21ferm} on fermionic symmetry fractionalization and SETs, some of which we later reproduced independently in this work. This work is supported by NSF CAREER (DMR- 1753240) and JQI-PFC-UMD. 
        
        \it Note added: \rm This paper appeared on the arXiv at the same time as a number of other closely related papers, Ref. \cite{bulmash2021Anomalies,manjunath21inv,aasen21ferm}. In particular, Ref. \cite{bulmash2021Anomalies} provides a comprehensive account of obstructions to gauging $G_f$, leading to a systematic understanding of anomalies in (2+1)D fermionic topological phases. Ref. \cite{manjunath21inv} develops a systematic characterization and classification of invertible (2+1)D fermionic topological phases using the framework of $G$-crossed BTCs. Ref. \cite{aasen21ferm} independently develops a comprehensive characterization of (2+1)D fermion SETs, containing many of the results of this paper and of Ref. \cite{bulmash2021Anomalies, manjunath21inv}. Ref. \cite{aasen21ferm} additionally also develops the theory of symmetry fractionalization for fermion parity vortices and symmetry defects in fermionic SETs.
        
        Several changes in this revision, namely the discussion around Eq. \ref{eqn:kappaghDef} and the addition of Eq. \ref{eqn:kappaUpsilonPsi} have overlap with the discussion of Ref. \cite{aasen21ferm}, as noted in the main text.

        \appendix 
        
        \section{Characters of the fusion algebra of super-modular tensor categories}
        \label{app:characters}
        
        Given a BFC $\mathcal{C}$, we say that a function $\chi(a)$ is a character of the fusion algebra of $\mathcal{C}$ if \begin{equation}
            \chi(a)\chi(b) = \sum_{c \in \mathcal{C}} N_{ab}^c \chi(c),
        \end{equation}
        for all $a,b \in \mathcal{C}$.
        
        In this appendix, we will prove the following \cite{charNote}:
        \begin{theorem}
        Let $\mathcal{C}$ be a super-modular category and $\C$ be any minimal modular extension $\C$. Then any character $\chi(a)$ of the fusion algebra of $\mathcal{C}$ is of the form
        \begin{equation}
        \label{chiDef}
            \chi(a) = d_a M_{a,x}
        \end{equation}
        for some $x \in \C$ which is unique up to fusion with $\psi$. 
        \end{theorem}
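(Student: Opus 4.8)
The plan is to leverage the Verlinde formula for a minimal modular extension $\C$, whose fusion-ring characters are completely understood, and to transport a character of $\mathcal{C}$ up to $\C$. Recall from the main text that as a fusion category $\mathcal{C}\simeq\C_0\subset\C$, with $\C=\C_0\oplus\C_1$ the $\Z_2$-grading determined by $M_{\cdot,\psi}$ and $\C_1\otimes\C_1\subseteq\C_0$. Thus a character $\chi$ of the fusion algebra of $\mathcal{C}$ is exactly a character of the fusion subring $K(\C_0)\subset K(\C)$, i.e. (extending linearly) an algebra homomorphism $\mathbb{C}\otimes K(\C_0)\to\mathbb{C}$, sending $1\mapsto 1$.

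The first step I would carry out is to show every such $\chi$ extends to a character of the full ring $K(\C)$. Since $\C$ is modular, $\mathbb{C}\otimes K(\C)$ is semisimple by Verlinde, hence isomorphic to $\mathbb{C}^{|\C|}$ with the characters $\chi_y$, $y\in\C$, as coordinate projections; its commutative subalgebra $\mathbb{C}\otimes K(\C_0)$ therefore has no nilpotents, so it too is semisimple. Let $e_\chi\in\mathbb{C}\otimes K(\C_0)$ be the minimal idempotent attached to $\chi$, characterized by $a\,e_\chi=\chi(a)e_\chi$ for all $a\in\C_0$, and decompose $e_\chi$ into minimal idempotents of $\mathbb{C}\otimes K(\C)$. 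Picking one summand $e_{\chi_y}$, the relation $e_\chi e_{\chi_y}=e_{\chi_y}$ gives, for $a\in\C_0$, $\chi_y(a)e_{\chi_y}=a\,e_\chi e_{\chi_y}=\chi(a)e_{\chi_y}$, so $\chi_y$ restricts to $\chi$ on $\C_0$. (This is just the ``lying over'' property for the integral extension $\mathbb{C}\otimes K(\C_0)\subset\mathbb{C}\otimes K(\C)$, phrased via idempotents.)

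Next I would invoke Verlinde for $\C$ directly: its characters are $\chi_y(a)=S_{ya}/S_{0y}$, and a short manipulation with $M_{a,y}=S_{ay}^{\ast}S_{11}/(S_{1a}S_{1y})$, $S_{1b}=S_{11}d_b$ shows $\chi_y(a)=d_a\overline{M_{a,y}}=d_a M_{a,\bar y}$. Restricting the extending character $\chi_y$ to $a\in\mathcal{C}=\C_0$ and setting $x=\bar y\in\C$ yields $\chi(a)=d_a M_{a,x}$, proving existence. For uniqueness, suppose $d_a M_{a,x}=d_a M_{a,x'}$ for all $a\in\mathcal{C}$, i.e. $\chi_{\bar x}$ and $\chi_{\bar x'}$ agree on $\C_0$. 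Because $\C_1\otimes\C_1\subseteq\C_0$, agreement on $\C_0$ forces $\chi_{\bar x}(v)\chi_{\bar x}(w)=\chi_{\bar x'}(v)\chi_{\bar x'}(w)$ for all $v,w\in\C_1$; using $\chi_y(\bar v)=\overline{\chi_y(v)}$ this gives $|\chi_{\bar x}(v)|=|\chi_{\bar x'}(v)|$ on $\C_1$ and a common ratio $\lambda=\chi_{\bar x'}(v)/\chi_{\bar x}(v)$ there with $\lambda^2=1$ (if $\chi_{\bar x'}$ vanishes identically on $\C_1$ then so does $\chi_{\bar x}$, and $\chi_{\bar x}=\chi_{\bar x'}$ everywhere). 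If $\lambda=+1$ then $\chi_{\bar x}=\chi_{\bar x'}$ everywhere, so $x=x'$ by non-degeneracy of $S$. If $\lambda=-1$, I would use $M_{a,b\times\psi}=M_{a,\psi}M_{a,b}$ (which follows from the behaviour of $S$ under tensoring with the invertible object $\psi$) to get $\chi_{\bar x'\times\psi}(a)=M_{a,\psi}\chi_{\bar x'}(a)$; this equals $\chi_{\bar x}(a)$ both on $\C_0$ (where $M_{a,\psi}=+1$) and on $\C_1$ (where $M_{a,\psi}=-1$, cancelling $\lambda$), hence $\bar x'\times\psi=\bar x$, i.e. $x'=x\times\psi$. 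Conversely $x$ and $x\times\psi$ always give the same $\chi$ since $M_{a,\psi}=+1$ on $\mathcal{C}$, so the fibre is exactly $\{x,x\times\psi\}$.

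I expect the main obstacle to be the extension step: making precise that a character of the subring $K(\mathcal{C})$ survives to $K(\C)$, which is where the genuine input (semisimplicity of the Verlinde ring, equivalently integrality/lying-over) enters. Once that is in place, existence is immediate from Verlinde applied to $\C$, and the remaining effort is the bookkeeping in the uniqueness dichotomy, where one must handle the $\Z_2$-grading of $\C$ and a couple of standard $S$-matrix identities (the rewriting of $\chi_y$ in terms of $M$, and the $\psi$-shift formula) carefully, including the degenerate case of a character vanishing on all of $\C_1$.
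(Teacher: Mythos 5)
Your proof is correct, but it takes a genuinely different route from the paper's. The paper argues by counting: it bounds the number of distinct characters of $\mathcal{C}$ above by $|\mathcal{C}|$ (simultaneous diagonalizability of the fusion matrices), observes that the restrictions $\chi_x$ for $x\in\C$ collapse under $x\mapsto x\times\psi$ into at most $|\C_0|/2+|\C_v|/2+|\C_\sigma|$ candidates, invokes the theorem of Bruillard et al.\ that this number equals $|\mathcal{C}|$, and then shows the candidates are pairwise distinct by a Verlinde computation exhibiting $N_{x1}^y+N_{x\psi}^y$ as a manifestly positive multiple of $\sum_{z\in\C_0}|\chi_x(z)|^2$; existence and the uniqueness-up-to-$\psi$ statement then both fall out of the resulting bijection at once. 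You instead prove existence directly by a lying-over argument: $\mathbb{C}\otimes K(\C)$ is semisimple by modularity, its subalgebra spanned by $\C_0$ is therefore reduced and hence semisimple, and decomposing the minimal idempotent attached to a character of the subalgebra inside the big algebra produces an extending character $\chi_y$ of $\C$; uniqueness is then a separate case analysis using the $\Z_2$-grading and the shift identity $\chi_{y\times\psi}=M_{\,\cdot\,,\psi}\,\chi_y$. Your route is self-contained in that it avoids the external rank identity $|\mathcal{C}|=|\C_0|/2+|\C_v|/2+|\C_\sigma|$ and would generalize to other graded or non-minimal situations, at the cost of a more delicate uniqueness argument (including the degenerate case of a character vanishing on all of $\C_1$, which you do handle correctly); the paper's route gets uniqueness essentially for free from the counting but leans on the quoted theorem. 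One cosmetic caution: you write $K(\C_0)$ for the fusion (Grothendieck) ring, whereas the paper reserves $K(\mathcal{C})$ for the group of phase-valued functions obeying the fusion rules, so the notation should be changed if this argument were spliced into the text.
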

        
        This is particularly useful to us because of the following corollary:
        \begin{corollary}
        Suppose that $e^{i\phi_a} \in K(\mathcal{C})$, that is, $e^{i\phi_a}$ is a phase obeying
        \begin{equation}
            e^{i\phi_a}e^{i\phi_b}=e^{i\phi_c}
        \end{equation}
        whenever $N_{ab}^c \neq 0$ for $a,b,c\in\mathcal{C}$. Then
        \begin{equation}
            e^{i\phi_a} = M_{a,x}
        \end{equation}
        for some $x \in \C$, and if $e^{i\phi_\psi}=+1$, then $x \in \A$.
        \end{corollary}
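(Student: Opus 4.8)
The plan is to push $\chi$ up to the minimal modular extension $\C$, where the classical Verlinde formula applies, and then pull the conclusion back down to $\mathcal{C}$. Since $\mathcal{C}\simeq\C_0$ is a fusion subcategory of $\C$, the fusion ring $R(\mathcal{C})$ is a unital subring of the fusion ring $R(\C)$, so I would first extend $\chi$ $\mathbb{C}$-linearly to a $\mathbb{C}$-algebra homomorphism $\chi\colon R(\mathcal{C})\otimes\mathbb{C}\to\mathbb{C}$, and then extend \emph{that} to a $\mathbb{C}$-algebra homomorphism $\tilde\chi\colon R(\C)\otimes\mathbb{C}\to\mathbb{C}$. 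Once this is done, modularity of $\C$ takes over: the Verlinde formula says the fusion matrices of $\C$ are simultaneously diagonalized by the unitary $S$-matrix of $\C$ with eigenvalues $S_{ax}/S_{1x}$, so $\tilde\chi(a)=S_{ax}/S_{1x}$ for a unique simple $x\in\C$. Rewriting the eigenvalue via $\overline{S_{a\bar x}}=S_{ax}$ and $S_{1a}=d_aS_{11}$ gives $S_{ax}/S_{1x}=d_aM_{a,\bar x}$; renaming $\bar x\mapsto x$ and restricting $a$ to $\mathcal{C}$ yields $\chi(a)=d_aM_{a,x}$ with $x\in\C$, as claimed.

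The engine behind the extension step is semisimplicity. For any fusion category the matrices $N_a$ (with $(N_a)^c_{\,b}=N_{ab}^c$) commute and satisfy $N_{\bar a}=N_a^{\mathsf T}$, hence form a commuting family of normal matrices; therefore $R(\mathcal{C})\otimes\mathbb{C}$ and $R(\C)\otimes\mathbb{C}$ are finite-dimensional commutative \emph{semisimple} $\mathbb{C}$-algebras, each a product of copies of $\mathbb{C}$. Because $R(\mathcal{C})\otimes\mathbb{C}\hookrightarrow R(\C)\otimes\mathbb{C}$ is a finite, hence integral, extension of commutative rings, the Lying-Over theorem produces a maximal ideal of $R(\C)\otimes\mathbb{C}$ contracting to $\ker\chi$; since the ring is Artinian the residue field is $\mathbb{C}$, and the quotient map is the desired $\tilde\chi$. (Equivalently and concretely: writing $R(\C)\otimes\mathbb{C}\cong\mathbb{C}^n$ with coordinate characters $\pi_i$, the distinct restrictions $\pi_i|_{R(\mathcal{C})\otimes\mathbb{C}}$ have common kernel $0$ as $R(\mathcal{C})\otimes\mathbb{C}$ is reduced, and by linear independence of characters there are at least $\dim_{\mathbb{C}}R(\mathcal{C})\otimes\mathbb{C}$ of them — which is already the total number of characters of that algebra — so $\chi$ must occur among them.) I expect this extension step, and the recognition that it is precisely a lying-over statement about the integral map $R(\mathcal{C})\hookrightarrow R(\C)$, to be the conceptual heart of the argument; the Verlinde identification and the normalization of $M$ are then just bookkeeping.

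For uniqueness up to $\psi$: if $x,x'\in\C$ both give $d_aM_{a,x}=d_aM_{a,x'}$ for all $a\in\mathcal{C}=\C_0$, their Verlinde characters of $R(\C)$ agree on the subring $R(\C_0)$. Since the grading $\C=\C_0\oplus\C_1$ is by braiding with $\psi$, one has $M_{a,\psi}=+1$ for all $a\in\C_0$, so $x$ and $x\psi$ always give the same restricted character; thus each character of $R(\mathcal{C})$ has at least the $\otimes\psi$-orbit of $x$ in its preimage under the (surjective) restriction map on characters, and a count — $R(\mathcal{C})\otimes\mathbb{C}$ has exactly $|\mathrm{Irr}(\mathcal{C})|$ characters, all realized by restriction from $\C$ — together with $C_{\C}(\C_0)=\{1,\psi\}$ (forced by $\mathcal{D}^2_{C_{\C}(\C_0)}=\mathcal{D}^2_{\C}/\mathcal{D}^2_{\C_0}=2$) pins the fibre down to exactly $\{x,x\psi\}$. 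The corollary then follows by applying the theorem to the honest character $a\mapsto d_ae^{i\phi_a}$ (which obeys the character identity precisely because $e^{i\phi}$ obeys the fusion rules), giving $e^{i\phi_a}=M_{a,x}$ for some $x\in\C$; and if $e^{i\phi_\psi}=+1$ then $M_{\psi,x}=+1$, so $x\in\C_0=\mathcal{C}$, while the fact that $a\mapsto M_{a,x}$ is then a $U(1)$-valued (group-like) character — combined with the structure of group-like characters of the modular $\C$, which are exactly the double-braid characters $M_{-,y}$ of invertible $y$ — forces $x$ to be invertible, i.e. $x\in\A$. I would treat these last two points, the fibre count for uniqueness and the invertibility refinement, via the $S$-matrix block structure of the spin modular category $\C$; that is where the remaining genuine work lies.
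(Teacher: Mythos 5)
Your proposal is correct and lands on the same key intermediate fact as the paper --- every fusion character of $\mathcal{C}$ has the form $d_a M_{a,x}$ for some $x\in\C$ --- but it gets there by a genuinely different route for the existence step. The paper proves that restriction of characters from $\C$ to $\mathcal{C}$ is surjective by counting: it cites a theorem of Bruillard et al.\ giving $|\mathcal{C}| = |\C_0|/2+|\C_v|/2+|\C_\sigma|$, and then shows via the Verlinde/orthogonality computation $N_{x1}^y+N_{x\psi}^y = 2\sum_{z\in\C_0}S_{xz}S_{yz}^{\ast}>0$ that the restrictions $\chi_x|_{\mathcal{C}}$ are pairwise distinct except for $x\sim x\times\psi$, so their number saturates the bound on the number of characters of $\mathcal{C}$. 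You instead obtain surjectivity directly from semisimplicity of $R(\mathcal{C})\otimes\mathbb{C}\hookrightarrow R(\C)\otimes\mathbb{C}$ and lying-over (equivalently, the idempotent argument in $\mathbb{C}^n$): every character of a reduced subalgebra is the restriction of a coordinate character. This buys independence from the super-modular counting theorem and works for any fusion subcategory of any modular category; the price is that it does not by itself give the uniqueness-up-to-$\psi$ clause of the theorem, for which your sketch would circle back to either that same counting result or the paper's orthogonality computation --- but uniqueness is not needed for the corollary, so nothing is lost. One caveat on your final step: you justify $x\in\A$ by invoking that phase-valued (group-like) characters of the modular $\C$ come from invertible objects, but here $M_{a,x}$ is only known to be a phase for $a\in\C_0$, not for all of $\C$, so that statement does not apply verbatim. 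The clean fix is the row-norm computation $\sum_{a\in\C_0}|S_{xa}|^2 = 1/2$ for $x\in\C_0$ (which follows from the factorized form $S = \tilde{S}\otimes\frac{1}{\sqrt{2}}\bigl(\begin{smallmatrix}1&1\\1&1\end{smallmatrix}\bigr)$ of the super-modular $S$-matrix), which under the phase hypothesis equals $d_x^2 \mathcal{D}_{\mathcal{C}}^2/\mathcal{D}_{\C}^2 = d_x^2/2$ and forces $d_x=1$; the paper is equally terse on this point, asserting ``$x\in\mathcal{C}$ and therefore $x\in\A$'' without spelling this out.
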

        
        \begin{proof} (Corollary)
        Given such an $e^{i\phi_a}$, we see that $d_a e^{i\phi_a}$ is a character of the fusion algebra since
        \begin{align}
        d_a d_b &= \sum_{c \in \mathcal{C}} N_{ab}^c d_c
        \nonumber \\
        d_a e^{i \phi_a}d_b e^{i \phi_b} &= \sum_{c \in \mathcal{C}} N_{ab}^c d_c e^{i \phi_a} e^{i \phi_b} = \sum_{c \in \mathcal{C}} N_{ab}^c d_c e^{i \phi_c}
        \end{align}
        Eq. \ref{chiDef} would then imply that $e^{i\phi_a}=M_{a,x}$ for some $x \in \C$. If $e^{i\phi_\psi}=1$, then $x \in \mathcal{C}$ and therefore $x \in \A$.
        \end{proof}

        \begin{proof} (Theorem)
        By modularity, all characters of $\C$ are of the form $\widecheck{\chi}_x(a) = d_a M_{a,x}$ for each $x \in \C$. Their restrictions $\chi_x(a)$ to $a \in \C_0=\mathcal{C}$ are clearly characters of $\mathcal{C}$.
        
        The super-modular tensor category $\mathcal{C}$ has at most $|\mathcal{C}|$ distinct characters.\footnote{Characters of the fusion algebra are given by the eigenvalues of the fusion matrices, which are all simultaneously diagonalizable, and thus the number of distinct characters is at most the number of eigenvalues.} Therefore, if we show that the collection $\{\chi_x\}$ define $|\mathcal{C}|$ distinct characters of $\mathcal{C}$, then every character must be of the form $\chi_x$ for some $x$.
        
        The $\chi_x$ define at most $|\C|$ distinct characters, one for each $x$. However, certainly $\chi_x = \chi_{x \times \psi}$. We are therefore overcounting; if $x \in \C_0$ or $x \in \C_v$, then there we should only count one of $x$ or $x \times \psi$ as defining a possibly distinct character. Hence the $\chi_x$ define at most $n=|\C_0|/2+|\C_v|/2+|\C_\sigma|$ distinct characters. By a theorem of~\cite{bruillard2017b}, $n=|\mathcal{C}|$, so we need to show that these $\chi_x(a)$ are indeed all distinct on $\mathcal{C}$. 
        We show the contrapositive, i.e., that if $x$ and $y \in \C$ define the \textit{same} character of $\mathcal{C}$, then $y = x$ or $y=x \times \psi$. Equivalently, we wish to show that $\chi_x(a) = \chi_y(a)$ for all $a \in \mathcal{C}$ implies $N_{x1}^y+N_{x\psi}^y>0$. We use the Verlinde formula in the modular category $\C$:
        
        \begin{align}
            N_{x1}^y+N_{x\psi}^y &= \sum_{z \in \C}\frac{S_{xz}S_{1z}S_{yz}^{\ast}}{S_{1z}} +\sum_{z \in \C } \frac{S_{xz}S_{\psi z}S_{yz}^{\ast}}{S_{1z}}\\
            &= 2\sum_{z \in \C_0} S_{xz}S_{yz}^{\ast}
        \end{align}
        where we have used $S_{\psi z} = \pm S_{1z}$ with the upper sign for $z \in \C_0$ and the lower sign for $z \in \C_1$. Now we insert factors of the identity and use the definition of scalar monodromy:
        \begin{align}
            N_{x1}^y+N_{x\psi}^y &= 2\sum_{z \in \C_0}  \frac{S_{1z}^2 S_{1x}S_{1y}^{\ast}}{S_{11}^2} \frac{S_{xz}S_{11}}{S_{1x}S_{1z}}\left(\frac{S_{yz}S_{11}}{S_{1y}S_{1z}}\right)^{\ast}\\
            &= 2d_x d_y\mathcal{D}^2\sum_{z \in \C_0}   d_z^2M_{xz}M_{yz}^{\ast}\\
            &= 2d_x d_y\mathcal{D}^2\sum_{z \in \C_0} \chi_x(z)\chi_y(z)^{\ast} = 2d_x d_y\mathcal{D}^2\sum_{z \in \C_0} |\chi_x(z)|^2>0 ,
        \end{align}
        which is what we wanted to show.
        \end{proof}
        
        \section{Comments on gauge transformations of \texorpdfstring{$U$}{U}}
        \label{app:Utransform}
        
        In order for Eq.~\ref{eqn:fermionicKramersEta} to be gauge-invariant, we require
        \begin{equation}
            \widecheck{U}_{\bf g}(a,b;c) = \left(\Gamma^{\,^{\overline{\bf g}}a,\,^{\overline{\bf g}}b}_{\,^{\overline{\bf g}}c}\right)^{\sigma({\bf g})}U_{\bf g}(a,b;c)\left(\Gamma^{a,b}_c\right)^{-1}
            \label{eqn:Utransform}
        \end{equation}
        where the check denotes the gauge-transformed quantity (in this appendix we will never be discussing modular extensions, so checks will always refer to gauge transformations).
        
        The origin of this transformation law is somewhat subtle, so we discuss it presently from two points of view.
        
        First, in the usual formalism, for ${\bf g}$ antiunitary, we should pick a basis $\ket{a,b;c}$ for the fusion space $V^{a,b}_c$ and define the antilinear operator $\rho_{\bf g}$ by
        \begin{equation}
            \rho_{\bf g}\left(\ket{a,b;c}\right) = U_{\bf g}(\,^{\bf g}a,\,^{\bf g}b;\,^{\bf g}c)\ket{\,^{\bf g}a,\,^{\bf g}b;\,^{\bf g}c}
        \end{equation}
        on the basis states (we suppress the internal indices if $N_{ab}^c>1$). We then extend the definition of $\rho_{\bf g}$ to the rest of the space by antilinearity. Making a vertex basis transformation means defining a new basis
        \begin{equation}
            \widecheck{\ket{a,b;c}} = \Gamma^{a,b}_c\ket{a,b;c}
        \end{equation}
        and then defining the gauge-transformed $U$ by
        \begin{equation}
             \rho_{\bf g}\left(\widecheck{\ket{a,b;c}}\right) = \widecheck{U}_{\bf g}(\,^{\bf g}a,\,^{\bf g}b;\,^{\bf g}c)\widecheck{\ket{\,^{\bf g}a,\,^{\bf g}b;\,^{\bf g}c}}
        \end{equation}
        We can now compute $\widecheck{U}$ directly:
        
        \begin{align}
            \rho_{\bf g}\left(\widecheck{\ket{a,b;c}}\right) &= \rho_{\bf g}\left(\Gamma^{a,b}_c\ket{a,b;c}\right)\\
            &= \left(\Gamma^{a,b}_c\right)^{\sigma(\bf g)}\rho_{\bf g}\left(\ket{a,b;c}\right)\\
            &=\left(\Gamma^{a,b}_c\right)^{\sigma(\bf g)}U_{\bf g}(\,^{\bf g}a,\,^{\bf g}b;\,^{\bf g}c)\ket{\,^{\bf g}a,\,^{\bf g}b;\,^{\bf g}c}\\
            &= \left(\Gamma^{a,b}_c\right)^{\sigma(\bf g)}U_{\bf g}(\,^{\bf g}a,\,^{\bf g}b;\,^{\bf g}c)\left(\Gamma^{\,^{\bf g}a,\,^{\bf g}b}_{\,^{\bf g}c}\right)^{-1}\widecheck{\ket{\,^{\bf g}a,\,^{\bf g}b;\,^{\bf g}c}}\\
            &= \widecheck{U}_{\bf g}(\,^{\bf g}a,\,^{\bf g}b;\,^{\bf g}c)\widecheck{\ket{\,^{\bf g}a,\,^{\bf g}b;\,^{\bf g}c}}.
        \end{align} 
        Comparing the last two lines leads directly to Eq.~\eqref{eqn:Utransform}.
        
        One must be careful in treating $\rho_{\bf g}$ as antilinear. Naively writing
        \begin{equation}
            \text{``}\rho_{\bf g}\ket{a,b;c} = U_{\bf g}(\,^{\bf g}a,\,^{\bf g}b;\,^{\bf g}c)K\ket{a,b;c}\text{"}
        \end{equation}
        (we have put quotes around this equation to emphasize that it can be misleading) and attempting to derive, for example, the consistency equation Eq.~\ref{eqn:UFconsistency} leads to an incorrect result with incorrect complex conjugations.
        
        For another perspective on the vertex basis transformations of $U$, consider the higher-category point of view on antiunitary symmetry in~\cite{bulmash2020}. Here, fusion vertices live in vector spaces $\ket{a,b;c;{\bf g}}\in \tilde{V}^{ab}_c({\bf g})$ which carry a ${\bf g}$ label, which can be roughly interpreted as a local spacetime orientation. The theory comes equipped with maps
        \begin{equation}
            \alpha_{\bf h}: \tilde{V}^{ab}_c({\bf g}) \rightarrow \tilde{V}^{ab}_c({\bf g \overline{h}})
        \end{equation}
        which are linear if the action of ${\bf h}$ is unitary and anti-linear if the action of ${\bf h}$ is antiunitary. The data of the theory is equivariant under these $\alpha_{\bf h}$ maps, e.g.,
        \begin{equation}
            \widetilde{F}^{abc}_{def}({\bf h}) = \left(\widetilde{F}^{abc}_{def}({\bf h\overline{g}})\right)^{\sigma({\bf g})},
            \label{eqn:higherCatF}
        \end{equation}
        where the tildes are present as a reminder that we are in the higher category formalism. The ``tilded" data is related to the usual ``untilded" data by simply setting ${\bf g}={\bf 1}$, e.g.,
        \begin{equation}
            F^{abc}_{def}=\tilde{F}^{abc}_{def}({\bf 1}).    
        \end{equation}
        In this formalism, antiunitary transformations act via a \textit{unitary} map
        \begin{equation}
            \widetilde{\rho}_{\bf g}\ket{a,b;c;{\bf h}} = \widetilde{U}(\,^{\bf g}a,\,^{\bf g}b;\,^{\bf g}c;{\bf gh,h})\ket{\,^{\bf g}a,\,^{\bf g}b;\,^{\bf g}c;{\bf gh}}
        \end{equation}
        
        Since $\widetilde{\rho}_{\bf g}$ is always unitary, it is easy to check its transformation law under gauge transformations
        \begin{equation}
            \widecheck{\ket{a,b;c;{\bf h}}} = \widetilde{\Gamma}^{a,b}_{c}({\bf h})\ket{a,b;c;{\bf h}}
        \end{equation}
        as follows:
        \begin{align}
            \widetilde{\rho}_{\bf g}\left(\widecheck{\ket{a,b;c;{\bf h}}}\right) &=  \widetilde{\rho}_{\bf g} \left(\Gamma^{a,b}_c({\bf h})\ket{a,b;c;{\bf h}}\right)\\
            &=\widetilde{\Gamma}^{a,b}_c({\bf h})\widetilde{U}(\,^{\bf g}a,\,^{\bf g}b;\,^{\bf g}c;{\bf gh,h})\ket{\,^{\bf g}a,\,^{\bf g}b;\,^{\bf g}c;{\bf gh}}\\
            &=\widetilde{\Gamma}^{a,b}_c({\bf h})\widetilde{U}(\,^{\bf g}a,\,^{\bf g}b;\,^{\bf g}c;{\bf gh,h})\left(\widetilde{\Gamma}^{\,^{\bf g}a,\,^{\bf g}b}_{\,^{\bf g}c}({\bf gh})\right)^{-1}\widecheck{\ket{\,^{\bf g}a,\,^{\bf g}b;\,^{\bf g}c;{\bf gh}}}\\
            &= \widecheck{\widetilde{U}}(\,^{\bf g}a,\,^{\bf g}b;\,^{\bf g}c;{\bf gh,h})\widecheck{\ket{\,^{\bf g}a,\,^{\bf g}b;\,^{\bf g}c;{\bf gh}}}
        \end{align}
        Hence
        \begin{equation}
            \widecheck{\tilde{U}}(\,^{\bf g}a,\,^{\bf g}b;\,^{\bf g}c;{\bf gh,h}) = \widetilde{\Gamma}^{a,b}_c({\bf h})\tilde{U}(\,^{\bf g}a,\,^{\bf g}b;\,^{\bf g}c;{\bf gh,h})\left(\widetilde{\Gamma}^{\,^{\bf g}a,\,^{\bf g}b}_{\,^{\bf g}c}({\bf gh})\right)^{-1}
            \label{eqn:tildedTransform}
        \end{equation}
        Equivariance of the $F$-symbols Eq.~\ref{eqn:higherCatF}
        forces equivariance of the gauge transformations
        \begin{equation}
            \widetilde{\Gamma}^{ab}_c({\bf h}) = \left(\widetilde{\Gamma}^{ab}_c({\bf h \overline{g}}\right)^{\sigma({\bf g})}.
        \end{equation}
        Hence, using equivariance to remove tildes from Eq.~\eqref{eqn:tildedTransform} leads directly to Eq.~\eqref{eqn:Utransform}.
        
        Yet another way to see how $U$ transforms using the tilded language is using the maps $\alpha_{\bf g}$. The ``untilded" map $\rho_{\bf g}$ is the composition
        \begin{equation}
            \rho_{\bf g}=\alpha_{\bf g} \circ \tilde{\rho}_{\bf g}\bigg|_V,
        \end{equation}
        where the restriction means that $\rho_{\bf g}$ is only defined on the ``untilded" vector spaces $V^{ab}_c = \tilde{V}^{ab}_c({\bf 1})$.
        We can now directly compute the transformation rules for $U$:
        \begin{align}
            \rho_{\bf g}\left(\widecheck{\ket{a,b;c;{\bf 1}}}\right) &= \alpha_{\bf g} \circ \widetilde{\rho}_{\bf g}\left(\widetilde{\Gamma}^{a,b}_c({\bf 1})\ket{a,b;c;{\bf 1}}\right)\\
            &= \left(\widetilde{\Gamma}^{a,b}_c({\bf 1})\right)^{\sigma({\bf g})}\alpha_{\bf g}\left(\widetilde{U}(\,^{\bf g}a,\,^{\bf g}b;\,^{\bf g}c;{\bf g},{\bf 1})\ket{\,^{\bf g}a,\,^{\bf g}b;\,^{\bf g}c;{\bf g}}\right)\\
            &= \left(\widetilde{\Gamma}^{a,b}_c({\bf 1})\right)^{\sigma({\bf g})}\widetilde{U}(\,^{\bf g}a,\,^{\bf g}b;\,^{\bf g}c;{\bf g},{\bf 1})^{\sigma(\bf g)}\ket{\,^{\bf g}a,\,^{\bf g}b;\,^{\bf g}c;{\bf 1}}\\
            &= \left(\widetilde{\Gamma}^{a,b}_c({\bf 1})\right)^{\sigma({\bf g})}\widetilde{U}(\,^{\bf g}a,\,^{\bf g}b;\,^{\bf g}c;{\bf g},{\bf 1})^{\sigma(\bf g)}\left(\widetilde{\Gamma}^{\,^{\bf g}a,\,^{\bf g}b}_{\,^{\bf g}c}({\bf 1})\right)^{-1}\widecheck{\ket{\,^{\bf g}a,\,^{\bf g}b;\,^{\bf g}c;{\bf 1}}}
        \end{align}
        Using the relation
        \begin{equation}
            U_{\bf g}(a,b;c) = \widetilde{U}_{\bf g}^{\sigma({\bf g})}(a,b;c;{\bf g},{\bf 1}) ,
        \end{equation}
        we indeed obtain Eq.~\eqref{eqn:Utransform}.
        
    \section{Proof of when \texorpdfstring{$\Upsilon_\psi$}{Ypsi} respects locality}
        \label{app:AbelianParityVortex}
        
        In Sec.~\ref{sec:LRNatIso}, we showed that $\Upsilon_\psi$ respects locality if and only if there exists some $\zeta_a$ which is a phase for all $a \in \mathcal{C}$, respects the fusion rules, and has $\zeta_\psi = -1$. We will prove in this appendix that such a $\zeta_a$ exists if and only if some minimal modular extension $\C$ of $\mathcal{C}$ contains an Abelian fermion parity vortex. If some $\C$ contains an Abelian fermion parity vortex, then we further claim that no minimal modular extension of $\mathcal{C}$ contains both $v$-type and $\sigma$-type vortices. Our main tools are explicit results about boson condensation proven in Ref.~\cite{delmastro2021}, Appendix A; we state their results here without proof. We prove the (simpler) second statement first to introduce some techniques used in the proof of the first. 
        
        \begin{proposition}
        If some minimal modular extension $\C$ of $\mathcal{C}$ contains an Abelian fermion parity vortex, then no minimal modular extension of $\mathcal{C}$ contains both $v$-type and $\sigma$-type vortices.
        \end{proposition}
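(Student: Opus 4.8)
The plan is to prove the statement in two stages: first, that any minimal modular extension which contains an Abelian fermion parity vortex contains \emph{no} $\sigma$-type vortex at all; and second, to transport this to every other minimal modular extension using the $16$-fold way torsor structure and the boson-condensation results of Ref.~\cite{delmastro2021}.

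For the first stage I would argue directly. Suppose $\C$ is a minimal modular extension of $\mathcal{C}$ containing an Abelian fermion parity vortex $v$, and suppose toward a contradiction that $\C$ also contains a $\sigma$-type vortex $a$, i.e., $a \times \psi = a$. Since $v$ is Abelian, $b := v \times a$ is a single simple object, and since $\psi$ is Abelian the monodromy $M_{\psi,-}$ respects the fusion rules, so $M_{\psi,b} = M_{\psi,v}\,M_{\psi,a} = (-1)(-1) = +1$; hence $b \in \C_0 \simeq \mathcal{C}$. On the other hand $b \times \psi = v \times a \times \psi = v \times a = b$, so $b$ is an object of $\mathcal{C}$ that absorbs $\psi$. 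This contradicts the fact that in a super-modular category $\psi$ acts freely on the label set, which is just the set-level decomposition $\mathcal{C} = \mathcal{B} \times \{1,\psi\}$. Therefore $\C$ has no $\sigma$-type vortex: $\C_\sigma = \emptyset$ (and $\C_v \neq \emptyset$).

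For the second stage, let $\C'$ be any other minimal modular extension of $\mathcal{C}$. By the $16$-fold way theorem \cite{johnsonFreydModularExt,bruillard2017a}, the minimal modular extensions of $\mathcal{C}$ form a torsor over the minimal modular extensions of $\{1,\psi\}$, so I can write $\C'$ as the result of condensing the transparent boson $A = \mathbf{1} \oplus (\psi_\C \boxtimes \psi_{\mathcal{K}})$ in $\C \boxtimes \mathcal{K}$, where $\mathcal{K}$ is some minimal modular extension of $\{1,\psi\}$. Because $\mathcal{D}^2_{\mathcal{K}_1} = \mathcal{D}^2_{\mathcal{K}_0} = 2$, the sector $\mathcal{K}_1$ consists either of two $d=1$ vortices (necessarily $v$-type) or of a single $d=\sqrt{2}$ vortex (necessarily $\sigma$-type), so $\mathcal{K}$ is itself either ``$v$-only'' or ``$\sigma$-only.'' Using the condensation results of Ref.~\cite{delmastro2021} — together with the input $\C_\sigma = \emptyset$ from the first stage, which guarantees no anyon splits under the condensation — I would read off the anyons of $\C'$: the deconfined simple objects of $\C \boxtimes \mathcal{K}$ are those lying in $\C_0 \boxtimes \mathcal{K}_0$ and those lying in $\C_1 \boxtimes \mathcal{K}_1$, identified in pairs by fusion with $\psi_\C \boxtimes \psi_{\mathcal{K}}$; the surviving transparent fermion is $\psi' = [\psi_\C \boxtimes \mathbf{1}] = [\mathbf{1} \boxtimes \psi_{\mathcal{K}}]$; and a deconfined vortex $[x \boxtimes k]$ (which forces $x \in \C_1$, $k \in \mathcal{K}_1$) absorbs $\psi'$ exactly when $x \in \C_\sigma$ or $k \in \mathcal{K}_\sigma$. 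Consequently $\C'$ contains a $\sigma$-type vortex iff $\C$ or $\mathcal{K}$ does, and a $v$-type vortex iff both $\C$ and $\mathcal{K}$ do. Feeding in $\C_\sigma = \emptyset$, $\C_v \neq \emptyset$: if $\mathcal{K}$ is $v$-only then $\C'$ is $v$-only, and if $\mathcal{K}$ is $\sigma$-only then $\C'$ is $\sigma$-only — so in either case $\C'$ does not contain both types, which is the assertion.

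The step I expect to be the main obstacle is the second one: correctly extracting from the condensation formulas of Ref.~\cite{delmastro2021} which objects of $\C \boxtimes \mathcal{K}$ remain deconfined, how the deconfined objects are identified, and which of the resulting vortices braid as $-1$ with the \emph{new} transparent fermion $\psi'$ rather than with a pre-existing one. This bookkeeping is precisely the technique that will be reused in the proof of the harder companion statement about when $\Upsilon_\psi$ respects locality.
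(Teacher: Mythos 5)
Your proof is correct and follows essentially the same route as the paper's: the same fusion--associativity contradiction shows that an Abelian vortex and a $\sigma$-type vortex cannot coexist in a single extension (their product would be a $\psi$-fixed anyon in $\mathcal{C}$), and the same stack-and-condense results of Ref.~\cite{delmastro2021} transport the conclusion to the remaining extensions. The only cosmetic difference is that the paper iterates the Ising stacking to step through all sixteen extensions half a unit of central charge at a time, whereas you stack once with a general minimal modular extension $\mathcal{K}$ of $\{1,\psi\}$; your variant has the minor advantage that, since you always condense starting from the $\sigma$-free extension, no object ever splits.
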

        
        \begin{proof}
        First suppose that $\C$ contains an Abelian parity vortex $v$. Suppose by way of contradiction that $\C$ also contains a $\sigma$-type parity vortex $\sigma$. Then $ \sigma \times v = a$ for a unique anyon $a \in \mathcal{C}$. Fuse $\psi$ into both sides of the above equation; by associativity of the fusion rules we obtain
        \begin{equation}
            \psi \times (\sigma \times v) = \psi \times a = (\psi \times \sigma) \times v = \sigma \times v = a
        \end{equation}
        Hence $a = \psi \times a$ for some $a \in \mathcal{C}$, which is impossible. Therefore, $\C$ contains only $v$-type parity vortices. Next, suppose $\C$ has chiral central charge $c_-$, and consider the minimal modular extension $\C'$ of $\mathcal{C}$ with chiral central charge $c_-+1/2$. This is obtained by condensing the bosonic $(\psi, \psi)$ pair in $\C \boxtimes \mathrm{Ising}$.
        Using the notation $(x,y) \in \C \boxtimes \mathrm{Ising}$, it is clear that deconfined parity vortices in $\C'$ descend from bound states $(x,\sigma)$ where $x \in \C_1$.
        Applying the results of~\cite{delmastro2021}, if $x$ is a $v$-type parity vortex, then every $(x,\sigma)$ is a deconfined simple parity vortex in $\C'$, while if $x$ is a $\sigma$-type parity vortex, then $(x,\sigma)$ splits into a pair of simple $v$-type parity vortices $(x,\sigma)_\pm$. Since $\C$ contains only $v$-type vortices, $\C'$ contains only $\sigma$-type vortices.
        We can repeat this process to obtain all minimal modular extensions, alternating between minimal modular extensions containing only $\sigma$-type and only $v$-type vortices, as desired.
        \end{proof}
        
        \begin{theorem} There exists a set of phases $\zeta_a$ which obey the fusion rules and have $\zeta_\psi = -1$ if and only if some minimal modular extension contains an Abelian fermion parity vortex.
        \end{theorem}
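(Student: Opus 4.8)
\emph{Proof plan.} The strategy is to handle the two implications separately, leaning on the character/monodromy results of Appendix~\ref{app:characters} and on Müger's centralizer theorem for modular categories. The ``if'' direction is immediate: if a minimal modular extension $\C$ contains an Abelian fermion parity vortex $v\in\C_1$, set $\zeta_a:=M_{a,v}$ for $a\in\mathcal{C}$. Since $v$ is Abelian each $M_{a,v}$ is a phase, $a\mapsto M_{a,v}$ respects the fusion rules, and $\zeta_\psi=M_{\psi,v}=-1$ because $v\in\C_1$; this is exactly the observation already used in Sec.~\ref{sec:LRNatIso}, so nothing more is needed here.

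For the ``only if'' direction, suppose $\zeta_a$ with $\zeta_\psi=-1$ exists and fix \emph{any} minimal modular extension $\C$. By the corollary proved in Appendix~\ref{app:characters}, $\zeta_a=M_{a,x}$ for some $x\in\C$, and since $M_{\psi,x}=\zeta_\psi=-1$ we have $x\in\C_1$, i.e.\ $x$ is a (possibly non-Abelian) fermion parity vortex. The key structural step I would take next is Müger's theorem: $\C$ is modular and $\mathcal{C}=\C_0$ is a braided fusion subcategory with $\mathcal{D}_\C^2=2\mathcal{D}_\mathcal{C}^2$, so the centralizer $C_\C(\mathcal{C})$ has total quantum dimension squared $\mathcal{D}_\C^2/\mathcal{D}_\mathcal{C}^2=2$; since $\{1,\psi\}\subseteq C_\C(\mathcal{C})$ (because $\psi$ is transparent in $\mathcal{C}$) we conclude $C_\C(\mathcal{C})=\{1,\psi\}$ exactly. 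Now every simple summand $c$ of $x\times\bar x$ satisfies $M_{a,c}=M_{a,x}M_{a,\bar x}=|\zeta_a|^2=1$ for all $a\in\mathcal{C}$ (using multiplicativity of the monodromy, valid since $M_{a,x}$ and $M_{a,\bar x}$ are phases), so $c\in C_\C(\mathcal{C})=\{1,\psi\}$. Hence $x\times\bar x$ equals $1$ or $1+\psi$, forcing $d_x\in\{1,\sqrt2\}$.

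If $d_x=1$, then $x$ itself is an Abelian fermion parity vortex of $\C$ and we are done. If $d_x=\sqrt2$, then $x\times\bar x=1+\psi$, which gives $\mathrm{Hom}(x,x\times\psi)\neq0$ and hence $x\times\psi=x$, so $x$ is a $\sigma$-type vortex. In this case I would pass to the neighbouring minimal modular extension $\C'$ with $c_-(\C')=c_-(\C)+\tfrac12$, obtained by condensing the $\Z_2$ boson $(\psi,\psi)$ in $\C\boxtimes\mathrm{Ising}$. The object $(x,\sigma)$ (with $\sigma$ the non-Abelian Ising anyon) braids trivially with the condensed boson, since $M_{x,\psi}\,M_{\sigma,\psi_{\mathrm{Ising}}}=(-1)(-1)=1$, so it is deconfined; moreover it is a fixed point of fusion with $(\psi,\psi)$, because $x\times\psi=x$ and $\sigma\times\psi_{\mathrm{Ising}}=\sigma$, so in $\C'$ it splits into two simple objects $(x,\sigma)_\pm$ of dimension $\tfrac12 d_x d_\sigma=1$. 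Each $(x,\sigma)_\pm$ still braids to $-1$ with the fermion of $\C'$ (which descends from $(\psi,1)$), so these are Abelian fermion parity vortices of $\C'$, completing the proof. The confinement/splitting claims here can be taken from the explicit condensation analysis of Ref.~\cite{delmastro2021} used in the preceding Proposition, or derived directly from the standard rules of anyon condensation.

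The main obstacle I anticipate is the $d_x=\sqrt2$ case: pinning down that $x$ must be $\sigma$-type and then carrying out the $(\psi,\psi)$-condensation in $\C\boxtimes\mathrm{Ising}$ carefully enough to exhibit the Abelian vortices. The remaining steps are one-line monodromy computations together with a direct appeal to Müger's centralizer theorem.
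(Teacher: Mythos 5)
Your proof is correct, and while it shares the paper's overall architecture (the ``if'' direction via $\zeta_a = M_{a,v}$, and the Ising-stacking/condensation trick to produce an Abelian vortex in the neighboring extension when $x$ is $\sigma$-type), you reach the crucial dimension bound by a genuinely different route. The paper bounds $d_x \leq \sqrt{2}$ by expanding the unitarity relation $\sum_y |S_{xy}|^2 = 1$ over $\C_0$ and using $\mathcal{D}_{\C_0}^2 = \mathcal{D}_{\C}^2/2$, and then needs the Ising trick in \emph{both} cases ($v$-type and $\sigma$-type) to pin down $d_x$ exactly. You instead use multiplicativity of scalar monodromy to show every simple summand of $x\times \bar{x}$ centralizes $\mathcal{C}$, identify $C_{\C}(\mathcal{C}) = \{1,\psi\}$ (which, incidentally, follows even without Müger's theorem: $C_{\C}(\mathcal{C})\cap \C_0$ is the Müger center of $\mathcal{C}$, which is $\{1,\psi\}$ by super-modularity, and nothing in $\C_1$ centralizes $\psi$), and conclude $x\times\bar{x} \in \{1,\, 1+\psi\}$, hence $d_x \in \{1,\sqrt{2}\}$ with the dichotomy tied directly to whether $x$ absorbs $\psi$. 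This buys you the $v$-type case for free --- $N_{x\bar x}^\psi = 0$ forces $d_x = 1$ immediately, with no condensation argument needed --- and makes the logical structure cleaner, since the $d_x=\sqrt{2}$ case is exactly the $\sigma$-type case. The remaining condensation steps (deconfinement of $(x,\sigma)$, splitting into $(x,\sigma)_\pm$ of dimension $\tfrac12 d_x d_\sigma = 1$, and the monodromy $-1$ with the descendant of $(\psi,1)$) match the paper's use of the results of Ref.~\cite{delmastro2021} and are sound.
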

        
        \begin{proof}
        The ``if" direction was already proven in Sec.~\ref{sec:LRNatIso}. Now suppose such a $\zeta_a$ exists. Then using the results of Appendix~\ref{app:characters}, there exists some minimal modular extension $\C$ of $\mathcal{C}$ containing some parity vortex $x$ such that $M_{x,a}\in \U$ for all $a \in \mathcal{C}$. Then
        \begin{align}
            1 = \sum_{y \in \C} |S_{x,y}|^2 &= \sum_{y \in \C_0} |M_{x,y} S_{1x}S_{1y}/S_{11}|^2 + \sum_{y \in \C_1} |S_{x,y}|^2\\
            &= \sum_{y \in \C_0} \frac{d_x^2 d_y^2}{\mathcal{D}_{\C}^2}+ \sum_{y \in \C_1} |S_{x,y}|^2\\
            &= d_x^2 \frac{\mathcal{D}_{\mathcal{C}}^2}{\mathcal{D}_{\C}^2} + \sum_{y \in \C_1} |S_{x,y}|^2\\
            &=\frac{d_x^2}{2} + \sum_{y \in \C_1} |S_{x,y}|^2 \geq \frac{d_x^2}{2}
        \end{align}
        Hence 
        \begin{equation}
            d_x \leq \sqrt{2}.
            \label{eqn:dxBound}
        \end{equation} Now change minimal modular extensions to $\C'$ by stacking with a copy of the Ising theory and condensing the bound state of the preferred fermions. If $x$ is $v$-type, then $(x,\sigma) \in \C'_{1,\sigma}$. Using the results of Ref.~\cite{delmastro2021}, we can obtain the $S$-matrix of the condensed theory; in the present case, we have
        \begin{equation}
            d_{(x,\sigma)}=S_{(x,\sigma),(1,1)}S_{(1,1),(1,1)} = d_x d_{\sigma} = \sqrt{2} d_x
        \end{equation}
        But using the same formula, we can also compute that for $a \in \C_0$,
        \begin{equation}
            M_{(x,\sigma),(a,1)} = M_{x,a} \in \U
        \end{equation}
        Hence Eq.~\ref{eqn:dxBound} applies to $(x,\sigma)$ as well, so $d_{(x,\sigma)}= \sqrt{2}d_x \leq \sqrt{2}$. Thus $d_x=1$, i.e. $x$ is Abelian, as desired.
        
        Suppose instead that $x$ is $\sigma$-type. Then $(x,\sigma)$ splits into $(x,\sigma)_\pm \in \C'_{1,v}$ and Ref.~\cite{delmastro2021} tells us instead
        \begin{equation}
            d_{(x,\sigma)_\pm}=S_{(x,\sigma)_\pm,(1,1)}S_{(1,1),(1,1)} = \frac{1}{2}d_x d_{\sigma} = \frac{d_x}{\sqrt{2}}
        \end{equation}
        But since $d_x \leq \sqrt{2}$ and $d_{(x,\sigma)_\pm} \geq 1$ we must have $d_x = \sqrt{2}$ and thus $(x,\sigma)_\pm$ is Abelian, as desired.
        
        Note that we can use this argument to show that if some $\C$ contains an Abelian parity vortex $v$, then all minimal modular extensions containing $v$-type parity vortices do as well. According to the above, $\C'$ contains only $\sigma$-type vortices, and in particular contains a particle $(v,\sigma)$ with quantum dimension $\sqrt{2}$. Layering Ising with $\C'$ and condensing, we obtain another minimal modular extension $\C''$ with $v$-type vortices. In this case, according to the above argument, $((v,\sigma),\sigma)_\pm$ is a $v$-type vortex in $\C''$ with quantum dimension 1, i.e., it is Abelian. This process can then be repeated to generate Abelian parity vortices in all 8 minimal modular extensions of $\C$ which have $v$-type parity vortices.
        \end{proof}
        
        \section{Proof of Eq.~\ref{eqn:multiplicationOfKramers}}
        \label{app:multiplicationOfKramersProof}
        
        Eq.~\ref{eqn:multiplicationOfKramers} reads
        \begin{equation}
            \frac{\eta_a({\bf T,T})\eta_b({\bf T,T})U_{\bf T}(a,\psi;a \times \psi)U_{\bf T}(b,\psi;b \times \psi)F^{a,\psi,\psi}F^{b,\psi,\psi}}{\eta_c({\bf T,T})}=-1
            \label{eqn:KramersMultStep1}
        \end{equation}
        for $\,^{\bf T}a = a \times \psi$, $\,^{\bf T}b = b \times \psi$, $\,^{\bf T}c = c$, and $N_{ab}^c = 1$. The factors of $\eta$ on the left-hand side can be replaced using the $\eta-U$ consistency condition Eq.~\ref{eqn:etaUConsistency} 
        \begin{equation}
            \frac{\eta_a({\bf T,T})\eta_b({\bf T,T})}{\eta_c({\bf T,T})} = U_{\bf T}(\,^{\bf T}a,\,^{\bf T}b;\,^{\bf T}c)U_{\bf T}^{\ast}(a,b;c) \label{eqn:etaUForTR}
        \end{equation}
        The consistency condition Eq.~\ref{eqn:UFconsistency} between $U$ and $F$ implies
        \begin{equation}
            U_{\bf T}(\,^{\bf T}a,\,^{\bf T}b;\,^{\bf T}c)U_{\bf T}^{\ast}(a,b;c) = F^{a,\psi,\,^{\bf T}b}_{c,\,^{\bf T}a,b}F^{\,^{\bf T}a,\psi,b}_{c,a,\,^{\bf T}b}U_{\bf T}(a \times \psi, \psi;a)U_{\bf T}^{\ast}(\psi,b;b \times \psi)
        \end{equation}
        Using the pentagon equation
        for the anyons $a,\psi,\psi,b$, we obtain
        \begin{equation}
            F^{a,\psi,\,^{\bf T}b}_{c,\,^{\bf T}a,b}F^{\,^{\bf T}a,\psi,b}_{c,a,\,^{\bf T}b} = F^{a,\psi, \psi}F^{\psi,\psi,b}
            \label{eqn:pentagonAPsiPsiB}
        \end{equation}
        Inserting Eqs.~\ref{eqn:etaUForTR}-\ref{eqn:pentagonAPsiPsiB} into Eq.~\ref{eqn:KramersMultStep1}, we find
        \begin{equation}
            \frac{\eta_a^{\bf T} \eta_b^{\bf T}}{\eta_c^{\bf T}} = \left(F^{a\psi \psi}\right)^2 F^{b \psi \psi}F^{\psi \psi b} U_{\bf T}(a \times \psi,\psi;a)U_{\bf T}(a,\psi;a \times \psi) U_{\bf T}(b,\psi;b \times \psi)U_{\bf T}^\ast(\psi,b;\psi \times b)
            \label{eqn:etaTRatioIntermediate}
        \end{equation}
        From the pentagon equation 
        for the anyons $a,\psi,\psi,\psi$, one finds
        \begin{equation}
            F^{a,\psi,\psi} = F^{a\times \psi,\psi,\psi}
        \end{equation}
        Hence, again using the $U-F$ consistency Eq.~\ref{eqn:etaUConsistency},
        \begin{equation}
            \left(F^{a,\psi,\psi}\right)^2 = F^{a,\psi,\psi}F^{\,^{\bf T}a,\psi,\psi} = U^{\ast}_{\bf T}(a,\psi;a \times \psi)U_{\bf T}^{\ast}(a \times \psi, \psi, a)
            \label{eqn:modifiedUF}
        \end{equation}
        Next, using the $U$-$R$ consistency Eq.~\ref{eqn:URconsistency} we find
        \begin{equation}
            U_{\bf T}^{\ast}(\psi,b;\psi \times b)U_{\bf T}(b,\psi;b\times \psi) = R^{b,\psi}R^{b\times \psi,\psi} \label{eqn:URforTR}
        \end{equation}
        Inserting Eq.~\ref{eqn:modifiedUF} into Eq.~\ref{eqn:etaTRatioIntermediate} shows that all of the factors involving $a$ cancel. Further inserting~\ref{eqn:URforTR} and applying the hexagon equation 
        twice,
        \begin{align}
            \frac{\eta_a^{\bf T} \eta_b^{\bf T}}{\eta_c^{\bf T}} = &= F^{\psi, \psi b}F^{b,\psi, \psi}R^{b,\psi}R^{b\times \psi,\psi}\\
            &=F^{\psi, b, \psi} \left(R^{b,\psi}\right)^\ast R^{b \times \psi, \psi}\\
            &=R^{\psi \psi}=-1
        \end{align}
        as claimed.
        
        \bibliography{TI}
        
\end{document}